\providecommand{\plist@algorithm}{\algorithmname~} 
    \renewcommand{\tagform@}[1]{\maketag@@@{\ignorespaces#1\unskip\@@italiccorr}}
    \renewcommand{\eqref}[1]{(\ref{#1})}%
\theoremstyle{definition}
\newtheorem{definition}{Definition}[section]
\newtheorem{theorem}{Theorem}[section]
\newtheorem{lemma}[theorem]{Lemma}
\newtheorem{proposition}{Proposition}[section]
\newtheorem{result}{Result}[section]
\newcommand{\R}{\mathbb{R}}
\newcommand{\Z}{\mathbb{Z}}
\DeclareMathOperator*{\argmin}{\arg\!\min}
\DeclareMathOperator{\ran}{Ran}
\DeclareMathOperator{\dom}{Dom}
\newcommand{\cor}{\operatorname{Cor}}
\newcommand{\cov}{\operatorname{Cov}}
\newcommand{\var}{\operatorname{Var}}
\newcommand{\bbP}{\mathbb{P}}
\newcommand{\bbE}{\mathbb{E}}
\newcommand{\vect}[1]{\operatorname{vec} \paren*{#1}}
\newcommand{\vech}[1]{\operatorname{vech} \paren*{#1}}
\newcommand{\vechs}[1]{\operatorname{vechs} \paren*{#1}}
\newcommand{\bp}{\mathbf{P}}
\newcommand{\1}{\mathds{1}} 
\DeclarePairedDelimiter{\paren}{\lparen}{\rparen}
\newcommand{\pr}[1]{\bbP\paren*{#1}}
\newcommand{\ex}[1]{\bbE\paren*{#1}}
\newcommand{\Rstudio}{\texttt{R}}
\newcommand{\lo}{\lambda_1}
\newcommand{\lt}{\lambda_2}
\newcommand{\ld}{\lambda_d}
\newcommand{\lx}{\lambda_X}
\newcommand{\ly}{\lambda_Y}
\newcommand{\by}{\mathbf{y}}
\newcommand{\byrs}{y_{1}^{(r)}, \ldots, y_{d}^{(r)}}
\newcommand{\byr}{\mathbf{y}^{(r)}}
\newcommand{\bys}{\mathbf{y}^{(1)}, \ldots, \mathbf{y}^{(n)}}
\newcommand{\byy}{\mathbf{Y}}
\newcommand{\btt}{\mathbf{t}}
\newcommand{\bt}{\boldsymbol{\theta}} 
\newcommand{\bg}{\boldsymbol{\gamma}} 
\newcommand{\bl}{\boldsymbol{\lambda}} 
\newcommand{\bz}{\boldsymbol{\zeta}}
\newcommand{\bps}{\boldsymbol{\psi}}
\newcommand{\br}{\boldsymbol{\varrho}}
\newcommand{\be}{\boldsymbol{\eta}}
\newcommand{\bo}{\boldsymbol{\omega}}
\newcommand{\ba}{\mathbf{b}}
\newcommand{\rst}[2]{\hat{\rho}_{#1}^{(#2)}}
\newcommand{\est}[2]{\hat{\eta}_{#1}^{(#2)}}
\newcommand{\zst}[2]{\hat{\zeta}_{#1}^{(#2)}}
\newcommand{\zer}{0, \ldots, 0}
\newcommand{\rss}{\boldsymbol{P}_{SS}}
\newcommand{\rij}{\boldsymbol{P}_{ij,ij}}
\newcommand{\rsij}{\boldsymbol{P}_{S,ij}}
\newcommand{\lrep}{\mathbf{L}} 
\newcommand{\lreprow}{\mathbf{l}} 
\newcommand{\posq}[1]{e^{-2 \lambda_{{#1}}} \mathcal{I}_0 (2 \lambda_{{#1}})} 
\newcommand{\posqh}[1]{e^{-2 \hat{\lambda}_{{#1}}} \mathcal{I}_0 (2 \hat{\lambda}_{{#1}})} 
\newcommand{\bn}[1]{\boldsymbol{\nu}_{{#1}} } 
\newcommand{\PhiDm}{\Phi_{d-1}}
\newcommand{\as}{A^{*} (\rs, \lx, \ly)}
\newcommand{\rs}{\rho^{*}}
\newcommand{\cs}{C_{\rs}}
\newcommand{\hs}{H^{*}}
\title{Novel Tau-Informed Initialization for Maximum Likelihood Estimation of Copulas with Discrete Margins}
\author{
 Anna van Es \\
  Research Institute for Statistics and Information Science, GSEM\\
  University of Geneva\\
  \texttt{anna.vanes@unige.ch} \\
   \And
 Eva Cantoni \\
  Research Institute for Statistics and Information Science, GSEM\\
  University of Geneva\\
  \texttt{eva.cantoni@unige.ch} \\
}
\begin{document}
\maketitle
\begin{abstract}
We study Gaussian-copula models with discrete margins, with primary emphasis on low-count (Poisson) data. Our goal is exact yet computationally efficient maximum likelihood (ML) estimation in regimes where many observations contain small counts, which imperils both identifiability and numerical stability. We develop three novel Kendall's tau-based approaches for initialization tailored to discrete margins in the low-count regime and embed it within an inference functions for margins (IFM) inspired start. We present three practical initializers (exact, low-intensity approximation, and a transformation-based approach) that substantially reduce the number of ML iterations and improve convergence. For the ML stage, we use an unconstrained reparameterization of the model's parameters using the log and spherical-Cholesky and compute exact rectangle probabilities. Analytical score functions are supplied throughout to stabilize Newton-type optimization. A simulation study across dimensions, dependence levels, and intensity regimes shows that the proposed initialization combined with exact ML achieves lower root-mean-squared error, lower bias and faster computation times than the alternative procedures. The methodology provides a pragmatic path to retain the statistical guarantees of ML (consistency, asymptotic normality, efficiency under correct specification) while remaining tractable for moderate- to high-dimensional discrete data. We conclude with guidance on initializer choice and discuss extensions to alternative correlation structures and different margins.
\end{abstract}

\keywords{Kendall's tau \and Gaussian copula \and discrete margins \and Poisson \and unconstrained reparameterization \and optimization \and analytical gradients \and low-count regime \and starting values \and maximum likelihood}

\section{Introduction}

Understanding and modeling multivariate dependence are central tasks across statistics, econometrics, biostatistics, ecology and other sciences. Classical summaries such as (Pearson) correlation capture at best linear association under finite second moments and generally present with a number of fallacies \citep{nelsenIntroductionCopulas2006}. These limitations motivate shifting from moment-based summaries to distribution-free, rank-based descriptions of dependence. Copulas provide a principled way to separate the marginal behavior of each component from their joint dependence, delivering a flexible way of constructing multivariate distributions.

Consider a random vector $\byy=\left(Y_1, \ldots, Y_d\right)$ with joint distribution $H$, marginal distributions $F_{Y_j}, j = 1, \ldots, d$ and marginal parameter vector(s) $\bn{j}$ with $\bn{} = (\bn{1}, \ldots, \bn{d})$. A function $C:[0,1]^d \rightarrow[0,1]$ is called a copula if it is the distribution function of a random vector in $\mathbb{R}^d$ with $U(0,1)$ margins. By Sklar's theorem, there exists a copula $C$ with parameter vector $\bt$, such that, for any $\by=\left(y_1, \ldots, y_d\right) \in \mathbb{R}^d$, $H(\by ; \bt, \bn{}) = \pr{Y_1 \leq y_1, \ldots, Y_d \leq y_d  } =  C_{\bt}(F_{Y_1}(y_1), \ldots, F_{Y_d}(y_d)  )$ \citep{joeDependenceModelingCopulas2015,nelsenIntroductionCopulas2006}. If the margins $F_{Y_1}, \ldots, F_{Y_d}$ are continous, the associated copula is unique. If (some) margins are discrete, the copula is generally only unique on the product of the ranges but the decomposition still exists and may be used for modeling and inference. For booklength introduction to copulas, the reader is advised to consult \citet{nelsenIntroductionCopulas2006,hofertElementsCopulaModeling2018,joeDependenceModelingCopulas2015}. 

Many applications, like counts, ratings, zero-inflated measurements, among others, are inherently discrete, and linear correlation is ill-suited and can be misleading. When some - or all - margins are discrete, additional care is required. The lack of uniqueness of $C$ translates into likelihoods built from $2^d$ differences of inclusion-exclusion (rectangle) probabilities rather than densities, and rank ties complicate empirical procedures. Nevertheless, copula models remain attractive \citep{joeDependenceModelingCopulas2015,hofertElementsCopulaModeling2018}.

In this contribution, we work throughout with discrete (count) margins. In this setting, the joint probability mass function (pmf) is expressed via rectangle probability differences as follows \citep{joeDependenceModelingCopulas2015}:

\begin{equation}
    \label{eq:rect-prob}
    \begin{aligned}
        h(\by   ; \bt, \bn{}) & =\pr{Y_1=y_1, \ldots, Y_d=y_d  } = \\
        & =\pr {y_1-1<Y_1 \leq y_1, \ldots, y_d-1<Y_d \leq y_d  } = \\
        &= \sum_{t_1=0}^{1} \cdots \sum_{t_d=0}^{1} (-1)^{t_1 + \ldots + t_d}  C_{\bt}(u_{1_{t_1}}, \ldots, u_{d_{t_d}}  ) = \\
        &= \sum_{t_1=0}^{1} \cdots \sum_{t_d=0}^{1} (-1)^{t_1 + \ldots + t_d}  H \left( \left( y_1 - t_1, \ldots, y_d - t_d \right)   ; \bt , \bn{} \right),
    \end{aligned}
\end{equation}
with $u_{j 0} = F_{Y_j}(y_j), u_{j 1} = F_{Y_j}(y_j -1)$. If we consider a sample $\byr = \paren{\byrs}, r = 1,\ldots, n$ from $\byy$, then the associated log-likelihood is

\begin{equation}
    \label{eq:ll}
    \begin{aligned}
        \ell(\bt, \bn{}; \bys  ) & = \sum_{r=1}^n \log \left( h\left(\byr ; \bt, \bn{}\right) \right).
    \end{aligned}
\end{equation}

A central difficulty with copulas for discrete margins is parameter estimation. When margins are discontinuous, the inverses of the cdf have plateaus, which can lead to biased estimates. If the margins are continuous, the functions $H(F_{Y_1}^{-1}(u_1), \ldots, F_{Y_d}^{-1}(u_d))$ and $H(F_{Y_1}^{-1}(u_{1\leftarrow}), \ldots, F_{Y_d}^{-1}(u_{d\leftarrow}))$ are the same, where $u_{j\leftarrow}$ indicates the limit of $u_j$ as it approaches from above \citep{trivediNoteIdentificationBivariate2017}. But if the margins are discontinous, then transformations that lead to a unique copula in the continous case now lead to different objects. Hence it is important to carefully consider the choice of reliable starting values for the copula parameter(s) when maximizing the log-likelihood in \autoref{eq:ll}. Typically in cases where the support is limited (e.g. Poisson with small mean parameters), the sample Pearson correlation of the counts is biased, so rank-based dependence measures such as Kendall's tau should be preferred. Our contribution aims at developing an exact yet computationally efficient maximum-likelihood (ML) procedure that leverages a Kendall's tau-based initializer of the parameters to produce stable starting values, scalable to moderate- and high-dimensional settings. We first provide a literature review of the state-of-the-art in the field.


A natural first idea could be to find a way to deconstruct the discreteness. If one somehow could make the discrete counts look continuous, the hurdles associated with discrete observations would disappear. This idea underlies the continous extension (CE) method (also called jittering), was first mentioned by \citet{denuitConstraintsConcordanceMeasures2005} and introduced by \citet{madsenMaximumLikelihoodEstimation2009}. It has since been reused in various forms, e.g., in \citet{shiLongitudinalModelingInsurance2014}. The idea is to add a random uniform jitter to the observed discrete data (counts) that would make it continous. Unfortunately, although seemingly attractive and simple, CE presents with significant drawbacks. In practice, reliable inference requires averaging over many jitters, which entails the choice of the number of jitters. Even when the number of jitters is high, the performance deteriorates markedly as the dimension grows or the latent dependence strengthens. More fundamentally, several authors have documented non-negligible bias and inefficiency of CE for discrete margins; see, in particular, \citet{nikoloulopoulosEstimationNormalCopula2013}. In short, CE is intuitive, but does not fix the problem. 

Another axis of parameter estimation discussion is the estimation method. There are two broad classes of frequentist estimators that do not rely on jittering. The first is the full maximum likelihood (ML) estimator. The second is the two-step inference functions for margins (IFM) estimator \citep{joeEstimationMethodInference1996}. In the absolutely continuous case with copula density $c$, the joint log-likelihood separates cleanly into a copula term and a sum of marginal terms: 

$$
\ell \left( \bt, \bn{} ; \bys \right) = \sum_{r=1}^{n} \left\{ \log c \left( F_{Y_1} \left( y_{1}^{(r)}\right), \ldots, F_{Y_d} \left( y_{d}^{(r)}\right) \right) + \sum_{j=1}^{d} \log \left( f_{Y_j} \left( y_{j}^{(r)}\right)\right) \right\} .
$$

This decomposition makes IFM an attractive option to optimize for the parameters separately: first estimate the marginal parameters (often under working independence), then maximize the copula part holding those margins fixed. However, when margins are discrete, this tidy decomposition often breaks down. The joint probability mass at $ \by = (\bys)^{\top}$ is no longer expressible via a copula density. As stipulated in \autoref{eq:rect-prob}, it is a $d$-fold finite difference of the copula cdf over a rectangle instead. For elliptical copulas - Gaussian copula in particular - this entails high-dimensional multivariate Normal (MVN) rectangle probabilities. Hence, classical ML becomes computationally heavy and the usual IFM logic ceases to be exact \citep{trivediCopulaModelingIntroduction2006}. One may still use IFM as a fast approximation, but it is expected to be statistically suboptimal.

Because of these obstacles, a substantial literature has explored alternatives. \citet{pittEfficientBayesianInference2006} showed that, for Gaussian copulas with arbitrary mixtures of discrete and continuous margins, full ML is often difficult in practice. Instead, they developed a Bayesian approach that estimates a Gaussian copula model. \citet{smithEstimationCopulaModels2011} echo the aforementioned difficulty and show that standard optimization algorithms for copula likelihood "are not doable for mixed (e.g., discrete and continuous) data". They propose a Bayesian latent variable method (augmenting discrete observations with latent continuous variables) to estimate copula models with discrete margins. By working in a Bayesian framework with data augmentation, they avoid the computationally heavy integrals of the full likelihood and improve inference when using Archimedean or Gaussian copulas for count data. Other Bayesian approaches on estimation of copula-based models with discrete margins are well summarized in \citet{smithBayesianApproachesCopula2013}. 

On the likelihood side, \citet{panagiotelisPairCopulaConstructions2012} analyze direct computation from cdf differences for discrete responses and highlight the fact that computational burden grows with dimension. \citet{louzadaModifiedInferenceFunction2016} put forward a modified IFM (MIFM) that uses data augmentation to better handle discreteness within a frequentist pipeline.

A comprehensive survey is given in \citet{nikoloulopoulosCopulaBasedModelsMultivariate2013}, covering simulated likelihood (SL), composite likelihood (CL), and Bayesian strategies. Another paper of \citet{nikoloulopoulosEstimationNormalCopula2013} examines CE and then develops efficient SL for the multivariate normal (MVN) copula with discrete margins, leveraging efficient MVN probability algorithms. 

In higher dimensions, pairwise or blockwise composite likelihood has been popular. The idea is to build an objective function from univariate and bivariate contributions \citep{zhaoCompositeLikelihoodEstimation2005}. CL can be attractive computationally, but by construction it ignores part of the dependence structure when estimating the margins, which can translate into noticeable efficiency loss relative to full ML. To mitigate this, \citet{nikoloulopoulosEfficientFeasibleInference2023} propose a weighted CL (WCL) that rebalances the pairwise components to improve efficiency. Its practical implementations are available (e.g., the \texttt{weightedCL} package in $\Rstudio$ \citep{nikoloulopoulosWeightedCLEfficientFeasible2022}). 

Very recently, \citet{kojadinovicCopulalikeInferenceDiscrete2024c} introduced a three-step "copula-like" procedure. They propose to first estimate the margins, then fit a copula density, and finally reconcile the two by an iterative proportional fitting procedure (IPFP). The idea stems back to \citet{geenensCopulaModelingDiscrete2020}, who proposed a different view on copula construction. The resulting pseudo-likelihood estimator enjoys attractive properties, but suffers from a common pseudo-likelihood drawbacks. Namely, one should expect efficiency below that of full ML under correct specification (see also the general quasi-likelihood theory of \citet{whiteMaximumLikelihoodEstimation1982}). Furthermore, to our knowledge, IPFP has not been extensively studied in the multivariate case yet.

Having provided arguments for the use of full maximum likelihood (ML) in the low-count regime, we now specialize to the Gaussian (multivariate normal, MVN) copula. The MVN copula inherits the correlation-driven dependence structure of the MVN distribution, allowing a wide spectrum of positive and negative associations and serving as a robust workhorse in a variety of applications. The price of this flexibility, in the discrete case, is that the mass at an observed count vector is a rectangle probability (cf. \autoref{eq:rect-prob}). Because the MVN cumulative distribution function (cdf) lacks a closed form, evaluating that mass entails repeated multidimensional integration. The expression for rectangle probabilities as in \autoref{eq:rect-prob} for a Gaussian copula with discrete margins is as follows:

\begin{equation}
    \begin{aligned}
    \label{eq:h-gaus}
    h_{ }(\by ; \bp, \bn{}) & = \int_{\Phi^{-1}\left[F_{Y_1}\left(y_1-1 \right)\right]}^{\Phi^{-1}\left[F_{Y_1}\left(y_1 \right)\right]} \ldots \int_{\Phi^{-1}\left[F_{Y_d}\left(y_d-1 \right)\right]}^{\Phi^{-1}\left[F_{Y_d}\left(y_d \right)\right]} \varphi_{\bp}\left(z_1, \ldots, z_d\right) d z_1 \ldots d z_d, \\
\end{aligned}
\end{equation}

\begin{equation}
    \label{eq:H-gaus}
    \begin{aligned}
H_{ }(\by ;  \bp, \bn{}) & = \int_{-\infty}^{\Phi^{-1}\left[F_{Y_1}\left(y_1 \right)\right]} \ldots \int_{-\infty}^{\Phi^{-1}\left[F_{Y_d}\left(y_d \right)\right]} \varphi_{\bp}\left(z_1, \ldots, z_d\right) d z_1 \ldots d z_d,
\end{aligned}
\end{equation}

where $\varphi_{\bp}$ denotes the standard centered $d$-variate normal density with correlation matrix $\bp$, chosen as correlation matrix as opposed to covariance matrix for identifiability \citep{alqawbaZeroinflatedCountTime2019}. Note that one could also include a set of covariates in this equation to handle regression settings.

A crucial issue is identification in a situation with low counts. In a thoughtful note, \citet{trivediNoteIdentificationBivariate2017} analyze the bivariate case under full ML and show how weak empirical support (few observed support points) can reinforce identification issues and bias dependence estimates. They mention identification improves as the observed support grows closer to the theoretical support. This observation is practically important for applications where many margins are concentrated near zero.

We briefly summarize our contribution. Because of the applications we have in mind, we explicitly target the low-count regime and pursue full maximum likelihood (ML) for estimation. The motivation is twofold. First, under correct specification, ML retains its classical statistical guarantees, i.e. consistency, asymptotic normality, and efficiency (e.g., \citealp[Chapter 7]{CasellaStatisticalInference}), whereas continuous extension (CE) and pseudo/composite surrogates generally do not. Second, with careful numerical design the discrete ML is tractable even when the likelihood involves rectangle probabilities. Concretely, we compute stable starting values via an inference functions for margins (IFM)-type step, which typically reduces the number of iterations and improves convergence behaviour. We use a method of moments estimator for the marginal parameters and an estimator based on Kendall's tau for the copula parameters.

Within this framework, our substantive focus is on Kendall's tau for Gaussian copulas with count margins in the low-intensity regime (small means). The classical inversion idea - estimating the copula parameter via a target $\tau$ (see, e.g., \citet{mcneilQuantitativeRiskManagement2005}) is appealing but delicate once margins are non-continuous. We therefore review both foundational and recent results on Kendall's tau for copulas with discrete (or mixed) margins, emphasizing scenarios with many small counts. These low-count settings have been comparatively underexplored and exhibit particularities in identification and efficiency that are easy to overlook. In our estimation pipeline we leverage the theoretical properties of Kendall's tau alongside its empirical counterpart to gain efficiency and computational speed, while keeping full ML as the ultimate objective function. Section~\ref{sec:co-comput} presents three alternatives for the starting values using (an approximation of) Kendall's tau. 

We then adopt in Section~\ref{sec:opt-start} an unconstrained reparametrization for both margins and dependence (see Section~\ref{sec:reparam}), evaluate exact rectangle probabilities (rather than relying on jitters), and supply analytical gradients throughout to stabilize and accelerate the optimization. This strategy preserves the statistical benefits of ML while addressing the computational bottlenecks that arise with discrete margins - especially in the scenarios when dependence is strong or the dimension grows. 

This combination strikes a pragmatic balance: it retains the efficiency guarantees of ML under correct specification while controlling the computational burden inherent to Gaussian-copula models with discrete margins. Finally, because low-count data feature many zeros, a substantial fraction of the rectangle probabilities simplify, often collapsing to the leading term in the inclusion-exclusion sum, yielding further speed-ups without sacrificing exactness. Computational cost scales with the evaluation of multivariate normal (MVN) rectangle probabilities and with the number of nonzero inclusion-exclusion terms, which, in low-count data, depends primarily on the number of positive counts in each observation, rather than on dimension $d$ alone. This makes estimation in moderate to high dimensions possible, especially when zeros are prevalent. Section~\ref{sec:sim} goes into detail of high-accuracy multivariate normal (MVN) probability routines in $\Rstudio$ and presents a simulation study for different scenarios.

\section{Kendall's tau for a copula with discrete margins}

\label{sec:co-comput}

There are many ways to define dependence measures of two random variables. We adopt the concordance-discordance viewpoint: a pair is concordant when a larger value of one component tends to co-occur with a larger value of the other, and discordant otherwise. For this section, we will use the notation $(X,Y)$ for a bivariate pair of random variables. Let $(X,Y)$ be integer-valued discrete random variables whose joint distribution is $H$, with marginal cumulative distribution functions $F_X,F_Y$ marginal probability mass functions $f_X, f_Y$, $\bn{X}, \bn{Y}$ the (vectors) of parameters for each margin with $\bn{} = (\bn{X}, \bn{Y})$ and copula $C_{\theta}$. Formally, for two independent copies $(X_1, Y_1), (X_2, Y_2)$ of $(X,Y)$ with joint distribution function $H$, Kendall's tau is the difference between the probabilities of concordance and discordance, that is $\tau = P\left((X_1-X_2) (Y_1 - Y_2) >0\right) - P\left((X_1-X_2) (Y_1 - Y_2) <0\right)$ \citep{kendallNewMeasureRank1938}. For continuous margins this reduces to $\tau = 2P\left((X_1-X_2)(Y_1 - Y_2) >0\right) - 1$.

As noted by \citet{nelsenIntroductionCopulas2006}, in the continous case copulas are invariant under almost surely strictly increasing marginal transformations. Hence, the dependence structure (e.g. Kendall's tau) depends only on the copula and not on the marginal distributions. In one-parameter copula families (e.g. Gaussian, Clayton, Gumbel, Frank), there is a one-to-one mapping between $\tau$ and the copula parameter, enabling method-of-moments estimation by inversion of this relationship.

When margins are non-continuous (e.g. ordinal or count data), ties complicate both the definition and attainable range of Kendall's tau. In the continous setting ties occur with probability zero and can be ignored, but for discrete or mixed discrete-continous margins they must be accounted for. In particular, \citet{genestPrimerCopulasCount2007} emphasize that with discrete margins the copula alone does not define the dependence, and that the ranges of rank-based measures of association depend on the marginals. They also note that, for continuous margins, the empirical estimator of Kendall's tau, $\tau_n$, computed by replacing the probabilities of concordance and discordance with their empirical counterparts, will satisfy the equality

$$
\tau_n (X,Y) = \tau_n (U,V),
$$

with $U=F_X(X), \quad V=F_Y(Y)$, $F_X$ and $F_Y$ being the marginal distributions, because the quantile functions $F_X^{-1}$ and $F_Y^{-1}$ are continuous and strictly monotone. Thus $\tau_n (X,Y)$ is an unbiased estimator for $\tau (C_{\theta})$, with $\theta$ being the parameter of copula $C$. This is not the case when the margins are discrete. 

Finally, for discrete margins the copula (as a function) is not unique. Sklar's theorem guarantees the existence of a copula for any joint distribution, but uniqueness holds only when margins are continuous. If $X$ or $Y$ are discrete, different copulas can induce the same joint distribution $H$. As will be seen further, this non-uniqueness can be consequential both in applications and in theoretical derivations. 

While our focus is on (pairwise) bivariate distributions, it is worth noting that concordance measures have been extended to multivariate and mixed ordinal-continuous settings. \citet{genestPrimerCopulasCount2007} provided a rigorous treatment of rank correlation measures for non-continuous random variables, including definitions of multivariate Kendall's tau and Spearman's rho that handle ties. \citet{mesfiouiConcordanceMeasuresMultivariate2010} further proposed multivariate extensions of Kendall's tau for ordinal data, and studied tests of independence in such cases. These works ensure that concepts like "overall concordance" can be defined beyond the bivariate case, albeit with increasing complexity. In practice, however, most applications of Kendall's tau in discrete contexts remain bivariate (or pairwise in a multivariate array), often due to the complexity of joint discrete dependence models. 

Collecting the above points, when the data presents with ties, the definition of Kendall's tau must be generalized. On a population level, the probabilities of concordance and discordance no longer sum to one because ties occur with a positive probability in at least one margin. A more general identity (see e.g. \citet{denuitConstraintsConcordanceMeasures2005}) for the population $\tau$ is $\tau = 2P(\text{concordant}) - 1 + P(X_1 = X_2 \text{ or } Y_1 = Y_2 )$. This reduces to the continous case when the probability of ties is zero. \citet{KendallTau1945} mentioned this and introduced Kendall's tau-b, defined as $\tau_b = \tau / \sqrt{\pr{X_1 \neq X_2} \pr{Y_1 \neq Y_2}}$ \citep{genestPrimerCopulasCount2007}. This was however insufficient. Building on this idea, \citet{nikoloulopoulosModelingMultivariateCount2009} derived a general expression for Kendall's tau. The population version of Kendall's tau for $X$ and $Y$ is given by

\begin{equation}
    \label{eq:nik-tau}
    \begin{aligned}
        \tau_{\theta, \bn{}}\left(X, Y\right)= & \sum_{x=0}^{\infty} \sum_{y=0}^{\infty} h\left(x, y; \theta, \bn{}\right)\left\{4 C_{\theta}\left(F_X\left(x-1\right), F_Y\left(y-1\right)\right) - h\left(x, y; \theta, \bn{}\right)\right\} \\
        & +\sum_{x=0}^{\infty}\left(f_X^2\left(x\right)+f_Y^2\left(x\right)\right)-1,
    \end{aligned}
\end{equation}

where
$$
\begin{aligned}
h\left(x, y; \theta, \bn{}\right)= & C_{\theta}\left(F_X\left(x\right), F_Y\left(y\right)\right)-C_{\theta}\left(F_X\left(x-1\right), F_Y\left(y\right)\right) \\
& -C_{\theta}\left(F_X\left(x\right), F_Y\left(y-1\right)\right)+C_{\theta}\left(F_X\left(x-1\right), F_Y\left(y-1\right)\right)
\end{aligned}
$$
is the joint pmf of $X$ and $Y$. 

\autoref{eq:nik-tau} decomposes concordance probability into contributions from each combination of discrete outcomes, accounting for the jumps of the cdf at each count. The additional marginal terms $\sum_x f_X(x)^2$ (and similarly for $Y$) arise from tie probabilities in the margins. This reinforces the idea that Kendall's $\tau$ in discrete cases depends not only on the copula but also on the marginal distributions. As tie probabilities grow (e.g. highly discrete or low-dispersion margins), these terms increase, pulling $\tau$ toward zero and limiting its attainable range.

Notice that \autoref{eq:nik-tau} naturally splits into a term involving both the copula and the margins and a copula-free part. Define

\begin{equation}
    \label{eq:ab}
    \begin{aligned}
        A\left(\theta, \bn{}\right) := & \sum_{x=0}^{\infty} \sum_{y=0}^{\infty} h\left(x, y; \theta, \bn{}\right)\left\{4 C_{\theta}\left(F_X\left(x-1\right), F_Y\left(y-1\right)\right) - h\left(x, y; \theta, \bn{}\right)\right\}, \\
        & B_X(\bn{X}) =: P(X_2 = X_1), \\
        & B_Y(\bn{Y}) =: P(Y_2 = Y_1).
    \end{aligned}
\end{equation}

Then

\begin{equation}
    \label{eq:nik-tau-ab}
    \begin{aligned}
        \tau_{\theta, \bn{}} (X, Y) &= A\left(\theta, \bn{}\right) + B_X(\bn{X}) + B_Y(\bn{Y}) - 1.
    \end{aligned}
\end{equation}

In continuous copula models, one routinely exploits the monotone relationship between Kendall's tau and the copula parameter $\theta$ (for a given family) to estimate or interpret $\theta$. Consider a Gaussian copula with correlation parameter $\theta = \rho$. It can be shown that $\tau = (2/ \pi) \arcsin (\rho)$ for $\rho \in [-1,1]$, hence $\rho = \sin \left((\pi/2) \tau \right)$. In discrete settings, there is usually no closed-form invertible relationship available because $\tau$ also depends on the marginal distributions. For fixed marginal parameters, one can in principle compute the map $\theta \mapsto \tau(\theta)$ for a given family. For instance, \citet{nikoloulopoulosModelingMultivariateCount2009} computed $\tau(\theta)$ for Frank, Clayton, and Gumbel copulas under various marginal Poisson means $\lambda$. They observed that for small values of $\lambda$, the $\tau(\theta)$ curves for different copulas can nearly coincide, which could make it difficult to distinguish copula families based on data (identifiability issue). As $\lambda$ increases, these curves separate and tend toward their continuous counterparts. In practical terms, attempting to match Kendall's tau while ignoring the margins will inevitably introduce bias in the estimator of $\theta$. A common workaround is to estimate the margins first and then solve for $\theta$ using the margin-aware $\tau (\theta)$ relationship \citep{mcneilQuantitativeRiskManagement2005}. This is a two-step inference analogous to Inference Functions for Margins (IFM): first fit the margins, then fit the copula. The IFM approach is indeed widely used for copula models with discrete data, because full maximum likelihood for both the parameters of the margins and the copula can be challenging \citep{joeEstimationMethodInference1996}. Indeed, as for example in the case of Poisson margins, the support is infinite, and the pmf in \autoref{eq:rect-prob} is in practice computed via a finite truncation at some value $K$, chosen to control the truncation error. Consequently, exact joint pmf entails summing over the full $2^d$ of truncated combinations to form the rectangle probability, which quickly becomes prohibitive as $d, n$ and $\lambda$ grow. \citet{joeDependenceModelingCopulas2015} showed that IFM is asymptotically efficient for copula models under certain conditions and correct specification, although the inference can be complicated due to the fact that multiple copulas may yield the same discrete likelihood. Furthermore, efficiency worsens as dependence strengthens and can be non-negligeable even at moderate $d$. As it is based on the theory of estimating functions of Godambe, its asymptotic covariance is the Godambe (sandwich) matrix, rather than the inverse Fisher information matrix \citep{joeAsymptoticEfficiencyTwostage2005}. Finally, for discrete data, and in the case of Gaussian copula, the IFM doesn't eliminate the computation of rectangle probabilities as in \autoref{eq:h-gaus}. \citet{joeAsymptoticEfficiencyTwostage2005,joeDependenceModelingCopulas2015} also mentioned that if the maximum likelihood of the full multivariate likelihood is feasible, an IFM estimator can be considered as a good starting point. This motivated our ultimate choice of strategy that we support in the next Subsection, where we also discuss other less interesting options.

\subsection{Empirical estimator of Kendall's tau for discrete data}
As to the empirical counterpart of the Kendall's tau, in our experience, many estimators produce biased estimates for discrete data. This is due to the non-zero probability of ties in one of the margins or both margins simultaneously. Hence, the chosen estimator of Kendall's tau must account for these ties. Typically, the authors make a difference between a mass at zeros and ties elsewhere. 

The recent work by \citet{perroneKendallTauEstimator2023} improved the estimator for Kendall's $\tau$ originally provided by \citet{pimentelKendallsTauSpearmans}. Their ${\tau}_A$ adjusts $\tau_H$ \citep{pimentelKendallsTauSpearmans} for the probability of ties in each margin, yielding much lower bias. It also works in the presence of heavy ties. Their estimator $\tau_A$ has a closed-form expression that combines multiple parts. First, it takes into account Kendall's $\tau$ among the positive counts only (i.e. conditional on both $X,Y>0$), weighted by the probability both counts are non-zero. Next, it accounts for additional terms for concordance involving zeros. The exact expression is given by the following definition.

\begin{definition}[Estimator of Kendall's tau for zero-inflated count data from \citep{perroneKendallTauEstimator2023}]
    \label{def:tau-perrone}
    Let $(X,Y)$ be a bivariate random vector with distribution $H$. Denote by $X_{10}$ a positive random variable distributed as $X$ given that $Y=0, X_{11}$ a positive random variable distributed as $X$ given that $Y>0$. Similarly, $Y_{01}$ is a positive random variable distributed as $Y$ given that $X=0$, and $Y_{11}$ a positive random variable distributed as $Y$ given that $X>0$. 
    
    In addition, define the following probabilities: $p_{00}=\mathbb{P}[X=0, Y=0], p_{01}=\mathbb{P}[X=0, Y>0]$, $p_{10}=\mathbb{P}[X>0, Y=0], p_{11}=\mathbb{P}[X>0, Y>0], p_1^*=\mathbb{P}\left[X_{10}>X_{11}\right], p_2^*=\mathbb{P}\left[Y_{01}>Y_{11}\right]$, the probabilities of ties within the margins as $p_1^{\dagger}=\mathbb{P}\left[X_{10}=X_{11}\right]$, and $p_2^{\dagger}=\mathbb{P}\left[Y_{01}=Y_{11}\right]$ and define $\tau^{+}$ as Kendall's $\tau$ of $(X^{+}, Y^{+})$, i.e., such that $(X^{+}, Y^{+}) \overset{d}{=} (X,Y) | (X>0,Y>0)$ (away from zeros). 
    
    Then Kendall's $\tau$ is given by the following relation

    \begin{equation}
        \label{eq:tau-perrone}
        \tau_A=p_{11}^2 \tau^{+}+2\left(p_{00} p_{11}-p_{01} p_{10}\right)+2 p_{11}\left[p_{10}\left(1-2 p_1^*-p_1^{\dagger}\right)+p_{01}\left(1-2 p_2^*-p_2^{\dagger}\right)\right] .
    \end{equation}

\end{definition}

Given a sample $\left( \left(x^{(1)}, y^{(1)} \right), \ldots, \left(x^{(n)}, y^{(n)} \right)\right)$ of $(X,Y)$, all the versions of Kendall's tau $(\tau_A, \tau_H, \tau_b)$ can be estimated by replacing probabilities with relative frequencies. 
The estimator $\hat{\tau}_b$ \citep{KendallTau1945} is defined as 

$$
\hat{\tau}_b := \frac{P-Q}{\sqrt{ \left(P+Q+T_X\right) \left(P+Q+T_Y\right)}},
$$

with $P$ $(Q)$ the number of concordant (discordant) pairs, respectively, and $T_X$ $(T_Y)$ the number of pairs tied only on the $X$ $(Y)$ variable, respectively. It is implemented in $\Rstudio$ function \texttt{cor} (with \texttt{method = "kendall"}). 

Crucially, \citet{perroneKendallTauEstimator2023} proved their estimator $\hat{\tau}_A$ is consistent and asymptotically normal, and derived its achievable bounds. They showed via simulation that $\hat{\tau}_A$ outperforms an estimator they adapted from $\hat{\tau}_H$. In scenarios with many ties, the standard tie-corrected $\hat{\tau}_b$ estimator was severely biased and unstable, to the point that the authors did not even recommend using it. Hence in our work, focusing on counts, we will use the estimator given by \citet{perroneKendallTauEstimator2023} to estimate Kendall's tau.

\subsection{Comparison of options for starting points}
For the rest of this paper, our focus will be on a Gaussian copula $C_{\rho}$ with correlation parameter $\rho$, and on Poisson margins. The particularity of Poisson margins is that the probability of univariate ties has a closed form expression. Namely, for $X \sim \operatorname{Po} (\lx), Y \sim \operatorname{Po} (\ly)$, and their two respective independent copies $X_i, X_j$ and $Y_i, Y_j$, we have that

\begin{equation}
    \label{eq:pois-ties}
    \begin{aligned}
        \pr{X_j = X_i} & = \posq{X}, \quad \pr{ Y_j = Y_i} & = \posq{Y},
    \end{aligned}
\end{equation}

where $\mathcal{I}_0$ is the modified Bessel function of the first kind of order $0$. The proof can be found in Appendix~\ref{app:pois-ties}. 

Below we introduce our proposal and compare it to two alternatives that we have explored, but have finally discarded.


\textbf{Option 1 - our proposal}
\label{sec:option1}

Building upon the arguments developed above, we present an IFM-type approach allowing for efficient estimation of starting values for the correlation parameter $\rho$ of the Gaussian copula. First the Poisson parameters $\lx$ and $\ly$ of the marginal distributions are estimated. Then, by using the empirical Kendall's tau $\hat{\tau}_A$ and the analytical expression for Kendall's tau as in \autoref{eq:nik-tau-ab}, one can obtain the initial values (denoted by $^{(s)}$ for "start") of $\rst{}{s}$ by solving $\hat{\tau}_A = A\left(\rho, \hat{\lambda}_X, \hat{\lambda}_Y\right) + B_X(\hat{\lambda}_X) + B_Y(\hat{\lambda}_Y) - 1$ for $\rho$.

More precisely, by combining \autoref{eq:nik-tau-ab} and \autoref{eq:tau-perrone}, the starting values for the copula parameters are computed as

\begin{equation}
    \label{eq:tau-eq}
    \begin{aligned}
        \hat{\lambda}_X & := \bar{x}, \quad \hat{\lambda}_Y := \bar{y}, \\
        \hat{A}_{XY} & := \hat{\tau}_{A}  - B_X(\hat{\lambda}_X) - B_Y(\hat{\lambda}_Y) + 1, \\
        \rst{}{s} & := \argmin_{\rho} \left( \hat{A}_{XY} - A(\rho, \hat{\lambda}_X, \hat{\lambda}_Y) \right)^2.
    \end{aligned}
\end{equation}

The \autoref{eq:tau-eq} can be generalized to higher dimensions by computing the pairwise Kendall's tau, i.e. for $\byy = \left(Y_1, \ldots, Y_d\right)$, consider a pair $(i,j) \in \left\{ (i,j) : 1 \leq i < j \leq d\right\}$:

\begin{equation}
    \label{eq:tau-eq-alld}
    \begin{aligned}
        \hat{\lambda}_i & := \bar{y}_i, \quad \hat{\lambda}_j := \bar{y}_j, \\
        \hat{A}_{ij} & := \hat{\tau}_{A} (Y_i, Y_j) - B_i(\hat{\lambda}_i) - B_j(\hat{\lambda}_j) + 1, \\
        \rst{ij}{s} & := \argmin_{\rho} \left( \hat{A}_{ij} - A(\rho, \hat{\lambda}_i, \hat{\lambda}_j) \right)^2.
    \end{aligned}
\end{equation}

Essentially, \autoref{eq:tau-eq-alld} requires solving a non-linear equation for each pair of variables. The Poisson parameters are estimated by the sample means, which is computationally trivial. The same holds for the evaluation of the $B_j$ terms. However, the function $A(\rho, \lambda_i, \lambda_j)$ in \autoref{eq:ab} involves infinite sums over the support of the Poisson distributions, which must be truncated in practice. The double sum over $x$ and $y$ can be computationally intensive, especially for large $\lambda$ where the Poisson pmf has a wide support. To mitigate this, one can truncate the sums at a chosen quantile of the Poisson distribution (e.g., 0.999 quantile) to capture most of the mass while keeping computations feasible. Still, solving for $\rho$ requires numerical root-finding methods, which can be slow if many pairs need to be processed. In our case with low counts, the computations are fast.

\textbf{Option 2}

For this option, we adopt an alternative representation of Kendall's tau, using Skellam distribution, which is the distribution of a difference of two independent Poisson random variables (see \autoref{def:skellam} in Appendix~\ref{app:skellam}). 

\begin{proposition}
    \label{prop:tau-skellam}
    Let $(X, Y)$ be integer-valued bivariate random vector with marginal distributions $F_X \sim \operatorname{Po}\left(\lx\right), F_Y \sim \operatorname{Po}\left(\ly\right)$, and let $C_\rho$ be a copula inducing the joint law $H$ of $(X, Y)$ (unique only on $\ran\left(F_X\right) \times \ran\left(F_Y\right)$). Let $\left(X_1, Y_1\right)$ and $\left(X_2, Y_2\right)$ be i.i.d. copies of $(X, Y)$, and define

    $$
    D_X:=X_2-X_1, \quad D_Y:=Y_2-Y_1 .
    $$

    Then:

    \begin{enumerate}[(i)]
        \item \label{prop1} $D_X \sim \operatorname{Skellam}\left(\lx, \lx\right)$ and $D_Y \sim \operatorname{Skellam}\left(\ly, \ly\right)$.
        \item \label{prop2} There exists a (grid-unique) subcopula $\cs$ on $\ran\left(F_{D_X}\right) \times \ran\left(F_{D_Y}\right) \subset[0,1]^2$ such that the joint cdf $\hs$ of $(D_X, D_Y)$ satisfies

            $$
            \hs\left(d_x, d_y; \rs, \lx, \ly\right)=\cs\left(F_{D_X}\left(d_x\right), F_{D_Y}\left(d_y\right)\right).
            $$

        In general $\cs \neq C_\rho$ and the parameter $\rs$ associated with a chosen parametric representation of $\cs$ need not equal $\rho$.
        \item \label{prop3} Fix a one-parameter copula family $\{C_{\theta}(\cdot ): \theta \in(-1,1)\}$ (e.g., the Gaussian family) and define

        \begin{align*}
            \hs\left(d_x, d_y ; \theta, \lx, \ly \right) & := C_{\theta}\left(F_{D_X}\left(d_x\right), F_{D_Y}\left(d_y\right) \right), \\
            h^*(0,0 ; \theta, \lx, \ly) &:=\hs(0,0 ; \theta, \lx, \ly)-\hs(-1,0 ; \theta, \lx, \ly)-\hs(0,-1 ; \theta, \lx, \ly)+\hs(-1,-1 ; \theta, \lx, \ly) .
        \end{align*}

            Let $\posq{j}:=\pr{D_j=0}=e^{-2 \lambda_j} I_0\left(2 \lambda_j\right)$ for $j= \{X,Y\} $. Then there exists a unique $\rs \in (-1,1)$ such that the Kendall's $\tau$ of $(X, Y)$ defined in \autoref{eq:nik-tau-ab} satisfies

            $$
            \tau(X, Y)=4 \hs\left(-1,-1 ; \rs, \lx, \ly\right)-1+\posq{X}+\posq{Y}-h^*\left(0,0 ; \rs, \lx, \ly\right) .
            $$
    \end{enumerate}

    Moreover, if $\psi_{\text {Sk }}$ denotes the Pearson-copula link mapping the family parameter $\theta$ to $\cor\left(D_X, D_Y\right)$ under $\hs(\cdot, \cdot ; \theta)$, then:

    \begin{enumerate}[(i)]
        \setcounter{enumi}{3}
        \item \label{prop4} $\psi_{\text {Sk }}$ admits the Hermite-Price power series $\psi_{\text {Sk }}(\rho)=\sum_{k \geq 1} b_k \rho^k$ with $b_k= \left\{a_k\left(F_{D_X}\right) a_k\left(F_{D_Y}\right)\right\} /\left(k ! \sigma_X \sigma_Y\right)$ and $a_k(F)=\int_{-\infty}^{\infty}F^{-1} (\Phi (t)) H_k(t) \varphi(t) d t$, with $H_k(t)$ the probabilists' Hermite polynomial of degree $k$ and $\varphi$ the standard normal pdf. 
        \item $\psi_{\text {Sk }}$ is real-analytic on $(-1,1)$, continuous on $[-1,1]$, strictly increasing, $\psi_{\text {Sk }}(0)=0$.
        \item $\psi_{\text {Sk }}$ is not identity but it converges to identity when $\lx, \ly \to \infty$.
        \item $\cor (X,Y)$ is not equal to $\rho$, and $\cor (D_X, D_Y)$ is not equal to $\rs$.
        \item \label{prop8} $\rs=\psi_{\mathrm{Sk}}^{-1}(\cor(X, Y))$.
        \item \label{prop9} In general, $\rs \neq \rho$.
    \end{enumerate}
\end{proposition}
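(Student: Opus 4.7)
The statement splits into two blocks: items (i)–(iii) establish the joint law of the pairwise differences $(D_X,D_Y)$ and rewrite $\tau(X,Y)$ in terms of $\hs$, while (iv)–(ix) describe the scalar link $\psi_{\mathrm{Sk}}$ and relate it to the Pearson correlation. I would treat them in that order, reducing each item to a short analytic calculation or a direct application of results already cited.

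\textbf{(i)–(iii).} A characteristic-function argument ($\varphi_{D_X}(t)=\exp\{\lx(e^{it}-1)+\lx(e^{-it}-1)\}$) gives (i). Part (ii) is Sklar's theorem applied to the discrete bivariate cdf $\hs$ of $(D_X,D_Y)$: the subcopula is uniquely determined on $\ran(F_{D_X})\times\ran(F_{D_Y})$, and once extended within a parametric family the implied parameter $\rs$ need not equal $\rho$. For (iii), concordance and discordance rewrite as $\{D_X D_Y>0\}$ and $\{D_X D_Y<0\}$; the exchangeability of the two iid copies forces $(D_X,D_Y)\stackrel{d}{=}(-D_X,-D_Y)$, which collapses $P_C=2\pr{D_X<0,\,D_Y<0}=2\hs(-1,-1;\rs,\lx,\ly)$. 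Combining this with $P_C+P_D=1-\pr{D_X=0}-\pr{D_Y=0}+\pr{D_X=0,D_Y=0}$, substituting $\pr{D_j=0}=\posq{j}$, and using inclusion–exclusion for $h^*(0,0)$ yields the claimed identity. Uniqueness of $\rs$ in the Gaussian family follows from strict monotonicity of $\rho\mapsto\hs(-1,-1;\rho,\lx,\ly)$ (Slepian's lemma).

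\textbf{(iv)–(vi).} Write $D_X=F_{D_X}^{-1}(\Phi(Z_1))$ and $D_Y=F_{D_Y}^{-1}(\Phi(Z_2))$ with $(Z_1,Z_2)$ standard bivariate normal of correlation $\rho$; both transforms lie in $L^2(\varphi)$ because Skellam has finite variance $2\lambda_j$. Expanding each in Hermite polynomials and applying the Mehler/Hermite–Price identity $\bbE[H_j(Z_1)H_k(Z_2)]=k!\rho^k\,\1\{j=k\}$ yields the series with the stated coefficients $b_k$. Analyticity on $(-1,1)$ and continuity on $[-1,1]$ follow from uniform convergence on compact subintervals and dominated convergence at the endpoints; $\psi_{\mathrm{Sk}}(0)=0$ is the independence case. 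Strict monotonicity follows from Price's theorem: using $\partial_\rho\phiTwo=\partial^2_{z_1 z_2}\phiTwo$ and integrating by parts,
\[
\partial_\rho\cov\paren*{f(Z_1),g(Z_2)}=\int\!\!\int f'(z_1)g'(z_2)\phiTwo(z_1,z_2;\rho)\,dz_1\,dz_2 > 0,
\]
for non-decreasing non-constant $f=F_{D_X}^{-1}\circ\Phi$, $g=F_{D_Y}^{-1}\circ\Phi$ (derivatives interpreted as positive atoms at the jumps of the step functions). For (vi), a CLT argument shows that after centering and scaling $F_{D_j}^{-1}\circ\Phi$ converges $\varphi$-a.e.\ to an affine map as $\lambda_j\to\infty$, and $L^2$-convergence then gives $\psi_{\mathrm{Sk}}(\rho)\to\rho$.

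\textbf{(vii)–(ix) and main obstacle.} Independence of the iid copies gives $\cov(D_X,D_Y)=2\cov(X,Y)$ and $\var(D_j)=2\var(Y_j)$, hence $\cor(D_X,D_Y)=\cor(X,Y)$. Combined with $\cor(D_X,D_Y)=\psi_{\mathrm{Sk}}(\rs)$ and the invertibility from (v), this proves (viii); (vii) follows because $F^{-1}\circ\Phi$ is non-affine in the discrete case, so both Hermite–Price series carry genuine higher-order terms, ruling out $\cor(X,Y)=\rho$ and $\cor(D_X,D_Y)=\rs$; and (ix) is immediate from (vi)–(viii) since $\psi_{\mathrm{Sk}}^{-1}$ is not the identity and $\cor(X,Y)\neq\rho$. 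The main obstacle is proving \emph{global} strict monotonicity of $\psi_{\mathrm{Sk}}$, because the Hermite–Price coefficients $b_k$ can have mixed signs and monotonicity cannot be read off the series alone; Price's theorem above is the cleanest device since it bypasses any direct sign analysis of the $b_k$.
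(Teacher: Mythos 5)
Your proposal follows essentially the same route as the paper's proof: Skellam margins by convolution, a Sklar-type subcopula for $(D_X,D_Y)$, the symmetry/tie identity $\tau = 4P(D_X\le -1, D_Y\le -1)-1+P(D_X=0)+P(D_Y=0)-P(D_X=0,D_Y=0)$ (which the paper imports from Denuit and Lambert rather than rederiving), Slepian monotonicity plus the intermediate value theorem for $\rho^{*}$, the Hermite--Price expansion with a Plackett/Price derivative identity for strict monotonicity of $\psi_{\mathrm{Sk}}$, and the computation $\operatorname{Cor}(D_X,D_Y)=\operatorname{Cor}(X,Y)$ from independence of the two copies. The one shared soft spot is that uniqueness of $\rho^{*}$ in (iii) really requires strict monotonicity of the full signed combination $4H^{*}(-1,-1;\rho)-h^{*}(0,0;\rho)$ (i.e., of the model's Kendall's tau in $\rho$, which follows from the concordance ordering of the Gaussian family), not merely of $H^{*}(-1,-1;\rho)$ as you state --- but the paper's own proof glosses over this in the same way.
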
 

Notice that (\ref{prop1}) and (\ref{prop3}) are specific to Poisson margins, while (\ref{prop2}) and (\ref{prop9}) hold for any discrete bivariate distribution. Points (\ref{prop4})-(\ref{prop8}) are specific to Poisson margins and their Skellam differences. We notice Skellam margins follow from independence of copies and closure of Poisson under convolution of differences. By invoking Sklar's theorem for non-continuous margins (Theorem $2.3.3$ in \citet{nelsenIntroductionCopulas2006}), one can obtain a subcopula representation unique on $\ran\left(F_{D_X}\right) \times \ran\left(F_{D_Y}\right)$. The univariate tie probabilities are $B_X (\lx):= \pr{D_X=0} = \posq{X}, B_Y(\ly):= \pr{D_Y=0} = \posq{Y} $ by using \autoref{def:skellam}. 

Here, we briefly discuss the outline of the proof of \autoref{prop:tau-skellam}. The full proof can be found in Appendix~\ref{app:tau-skellam} and goes through several intermediate Lemmas. Define $\as : = 4 \pr{D_X \leq -1, D_Y \leq -1 } - \pr{D_X=0, D_Y=0}$. Conditioning on $(X_1,Y_1)$ and using independence of $(X_1,Y_1)$ and $(X_2,Y_2)$, it follows $P\left(D_X \leq -1, D_Y \leq -1 \right) = \ex{H(X - 1, Y-1)}$, $\pr{D_X=0, D_Y=0} = \ex{h(X,Y)}$, so that $\as$ is a functional of the joint law $H$ (and thus of copula $C_{\rho}$). By Sklar's theorem with discrete margins and Lemma \ref{lem:conv-copula-discrete} in Appendix~ \ref{app:tau-skellam}, there exists a subcopula on $\ran (F_X) \times \ran (F_Y) \subset [0,1]^2$ such that $H^*(d_x,d_y; \rs, \lx, \ly) = \cs (F_{D_X}(d_x), F_{D_Y} (d_y))$. The map $\cs$ is grounded and two-increasing on its (grid) domain and satisfies the subcopula boundary conditions $\cs(u,1) = u$ for $u \in \ran (F_X)$, $\cs(1,v) =v$ for $v \in \ran (F_Y)$. 

However, the discretized copula is not unique. Passing from $(X,Y)$ to the difference pair $(D_X,D_Y)$ pushes the support to $\mathbb{Z}^2$ and changes the margins to Skellam, retaining the same Poisson parameters $\lx, \ly$ at the marginal level. One can nevertheless show that there exists a correlation parameter $\rs$ (for a chosen copula family applied to $(D_X,D_Y)$) such that the Kendall's tau computed from the representation in \autoref{prop:tau-skellam} matches the original tie-corrected tau. Because the difference map is neither monotone nor linear on the observed (discrete) data (even not for the Gaussian copula), copula invariance does not apply and, in general, $\rho \neq \rs$. 

The illustration in Appendix~\ref{app:option2} shows that, while conceptually informative, \autoref{prop:tau-skellam} offers limited practical utility, at least for the computation of the starting values, and we therefore do not pursue our investigations in this direction.
 
\textbf{Option 3}

For large parameters $\lambda$, the Poisson law is very well approximated by a $\mathcal{N}(\lambda, \lambda)$, as a result of the central limit theorem. A direct implication is that the tie probabilities shrink as $\lambda$ grows. Indeed, it is easy to see that the probability of univariate ties converges to zero when $\lambda$ tends to infinity. In the bivariate setting in practice, it translates to the fact that the probability of simultaneous ties in both margins decreases as support spreads. Consistent with this, \citet{nikoloulopoulosCopulaBasedModelsMultivariate2013} show that for Poisson means of around $10$, Kendall's tau is well approximated by its continuous counterpart. Similarly, \citet{trivediNoteIdentificationBivariate2017} note that as the mean increases, the empirical support of the Poisson distribution approaches its theoretical $(\R_+)$ limit. It is therefore natural to examine how these large-mean approximations translate into tie probabilities and rank-based dependence, in particular Kendall's tau. 
One practical route is to inspect the map $\rho \mapsto \tau(\rho; \lx, \ly)$ defined in \autoref{eq:nik-tau-ab} and to seek a simple, invertible approximation that links the copula parameter $\rho$ to Kendall's tau for parameter initialization and interpretation. 
Based on the analysis of Appendix~\ref{app:option3}, one could wonder whether the relationship could be well approximated by an adjusted arcsine function of the form appearing in \autoref{prop:tau-asin} below, whose proof is given in Appendix~\ref{app:prop-tau-asin}.

\begin{proposition}[Kendall's tau approximation in arcsine form]
    \label{prop:tau-asin}
    Let $X,Y$ be Poisson random variables with respective parameters $\lx, \ly$ and Gaussian copula with parameter $\rho$. Furthermore, let $(X_1,Y_1), (X_2,Y_2)$ be two independent copies from $(X,Y)$. Finally, let
    
    $$
    m(\rho, \lx, \ly) = a(\lx, \ly) \left( 2/ \pi \right) \arcsin \left(  b(\lx, \ly) \rho \right),
    $$ 

    with 
    
    \begin{enumerate}[(i)]
        \item $a(\lx, \ly)  \in (0,1]$ a scaling function with $\lim_{\lx, \ly \to \infty} a(\lx, \ly) = 1$, 
        \item $b(\lx, \ly)  \in (0,1]$ a shrinkage function adjusting for the correlation attenuation due to discreteness with $\lim_{\lx, \ly \to \infty} b(\lx, \ly) = 1$.

    \end{enumerate}

    Then Kendall's tau $\tau(X,Y)$ for a Gaussian copula with discrete margins and parameter $\rho$ can be approximated by $m(\rho, \lx, \ly)$.
\end{proposition}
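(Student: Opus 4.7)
The plan is to match the target functional form to the continuous-copula identity $\tau_c(\rho) = (2/\pi)\arcsin(\rho)$ and then absorb the two distinct deviations caused by discrete margins into the two scalar factors $a$ and $b$. This relies on the latent-variable representation of the Gaussian copula: write $X = F_X^{-1}(\Phi(Z_1))$, $Y = F_Y^{-1}(\Phi(Z_2))$ with $(Z_1, Z_2) \sim N_2(0, \Sigma_\rho)$, and take two independent copies. Under this representation the concordance/discordance decomposition of $\tau(X,Y)$ can be expressed as the corresponding decomposition for $(Z_1,Z_2)$ minus the contribution of pairs that the step-function $F^{-1}\circ\Phi$ collapses into ties, which is what equation \autoref{eq:nik-tau-ab} already makes explicit.

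First I would identify the attenuation factor $b(\lambda_X,\lambda_Y)$. Because $F^{-1}\circ\Phi$ is a step function, the monotone association transmitted by $(X,Y)$ is weaker than that of $(Z_1,Z_2)$: this is the same phenomenon as the Hermite-Price/polychoric attenuation invoked in item (\ref{prop4}) of \autoref{prop:tau-skellam}. A natural choice is to take $b$ as the ratio between the leading-order Hermite coefficient of the pair $(X,Y)$ and $\rho$ itself, so that the arcsine law applies at the effective correlation $b\rho$. By the normal approximation of $\operatorname{Po}(\lambda)$, as $\lambda_X,\lambda_Y \to \infty$ the step map becomes finer and approaches a continuous strictly increasing map, so the attenuation vanishes and $b \to 1$; since the Hermite coefficients are bounded by those of the continuous limit, $b \in (0,1]$.

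Next I would fix the range-scaling factor $a(\lambda_X,\lambda_Y)$. Even in the absence of attenuation, $|\tau(X,Y)|$ is bounded away from $1$ by the Fréchet-Hoeffding limits for non-continuous margins, since the tie probabilities $B_X(\lambda_X), B_Y(\lambda_Y) = e^{-2\lambda}\mathcal{I}_0(2\lambda)$ are strictly positive (see \autoref{eq:pois-ties}). Defining $a(\lambda_X,\lambda_Y)$ as (an approximation of) the maximum attainable $|\tau|$ in this family yields $a \in (0,1]$ and, because the Poisson tie probability tends to zero as $\lambda \to \infty$, $a \to 1$ as well. Combining both corrections gives $\tau(X,Y) \approx a \cdot (2/\pi)\arcsin(b\rho)$, and the joint limit $a,b\to 1$ recovers $\tau_c(\rho)$.

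The main obstacle is that the tie correction and the range correction are entangled in the exact expression \autoref{eq:nik-tau-ab}, so splitting them cleanly into two multiplicative factors is only approximate and the decomposition is not unique. The cleanest way to make this rigorous would be to calibrate $a$ from the value of $\tau$ at $\rho = 1$ (the comonotone bound) and then define $b$ implicitly by matching the intermediate $\rho$; proving that this $b$ stays in $(0,1]$ uniformly in $\rho$ requires a monotonicity argument on $\rho \mapsto \tau(\rho;\lambda_X,\lambda_Y)$, which is inherited from the Hermite-Price expansion. I would defer the quantitative choice of $a$ and $b$ to the appendix and present the proposition as an approximation/existence statement that is exact in the large-$\lambda$ limit.
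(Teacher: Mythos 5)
Your proposal is correct in substance but takes a genuinely different route from the paper's. The paper's argument is axiomatic: it starts from the slightly more general form $m=a\,(2/\pi)\arcsin(b\rho)+c$, lists the properties any approximation of Kendall's tau must inherit (vanishing at $\rho=0$, symmetry and monotonicity in $\rho$, convergence to $(2/\pi)\arcsin(\rho)$ as $\lx,\ly\to\infty$), reads off the constraints $c\equiv 0$ and $a,b\in(0,1]$, and then obtains $a,b\to 1$ from the decay of the Poisson tie probabilities via the large-argument asymptotics of $\mathcal{I}_0$. You instead \emph{construct} the two factors: $b$ as the Hermite--Price (polychoric) attenuation of the latent correlation, and $a$ as the attainable range of $|\tau|$ imposed by the Fr\'echet--Hoeffding bounds for non-continuous margins, with the limits following from the normal approximation to $\operatorname{Po}(\lambda)$ and the vanishing tie probabilities. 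Your route is more informative --- it explains where $a$ and $b$ come from and connects directly to the two concrete candidates $m_l$ and $m_b$ that the paper proposes right after the proposition --- at the cost of the non-uniqueness of the multiplicative split, which you correctly flag; the paper sidesteps that issue by only asserting constraints on $(a,b,c)$ rather than exhibiting them. Neither argument quantifies the approximation error, so your caveat that the decomposition is only approximate leaves you at essentially the same level of rigor as the published proof. The one item you do not treat is the elimination of an additive constant $c$, but since the proposition as stated contains no such constant, this is not a gap.
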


Two simple possible candidates for $m(\rho, \lx, \ly)$ are:

\begin{itemize}
    \item Logistic shrinkage of $\rho$, inspired by what we observed by numerical optimization for values of $a,b$ over a grid of $\lambda$'s and $\rho$'s. Set $a(\lx, \ly) = 1$, and $b(\lx, \ly) = \left(1 + \exp(-(\lx + \ly)) \right)^{-1}$, which yields
    \begin{equation}
        \label{eq:tau-asin-logis}
        m_l(\rho, \lx, \ly) = \frac{2}{\pi} \arcsin \left(  \left(1 + \exp(-(\lx + \ly)) \right)^{-1} \rho \right).
    \end{equation}
    This is easy to invert but too crude for small $\lambda$ (see below).
    \item $\tau_b$ inspired scaling. Set $a(\lx, \ly) = \sqrt{\paren{1-\posq{X}} \paren{1-\posq{Y}}}$, and $b(\lx, \ly) = 1$. The expression is adapted from $\tau_b$ in \citet{genestPrimerCopulasCount2007}. In their original formula, the coefficient is rather $a(\lx, \ly)^{-1}$, however, as $a(\lx, \ly)^{-1} \geq 1 \quad \forall \lx, \ly \in \R^+$, and would push $m(\rho, \lx, \ly)$ beyond the $[-1,1]$ range, we use $a(\lx, \ly)$ instead. This yields
    \begin{equation}
        \label{eq:tau-asin-taub}
        m_b(\rho, \lx, \ly) = \frac{2}{\pi} \sqrt{\paren{1-\posq{X}} \paren{1-\posq{Y}}} \arcsin \left(  \rho \right).
    \end{equation}
\end{itemize}

\begin{figure}[ht]
    \centering
    \includegraphics[width=\textwidth]{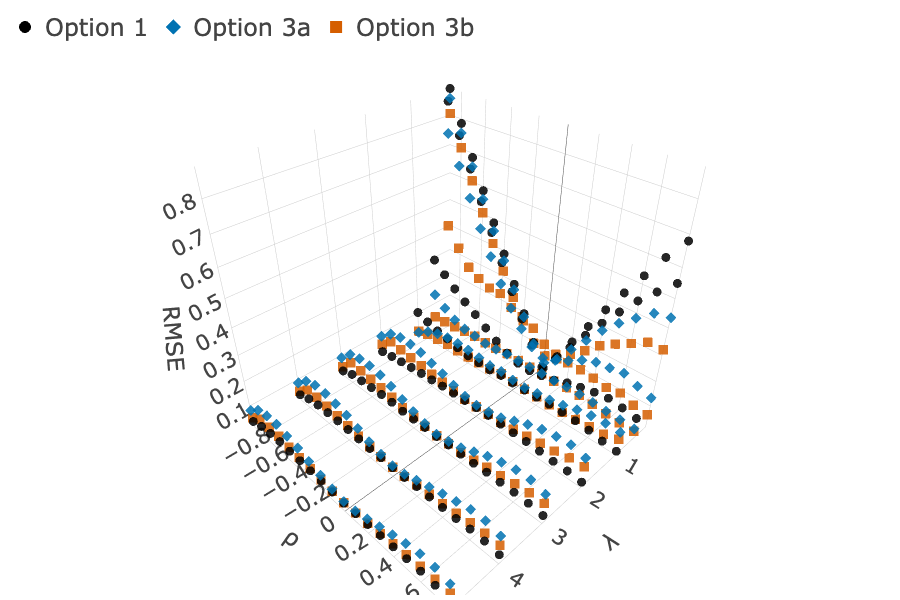}
    \caption{The relationship between the values of $\lambda, \rho$ and RMSE for the three different approaches.}
    \label{fig:approxTau3D}
\end{figure}

We explored further the suitability of these two approximations together with Option~1 numerically. We proceeded by setting $\lx = \ly$ for a grid of lambdas $\boldsymbol{\lambda} = \{0.05, 0.1, 0.5, 1, 2, 3, 4, 5 \}$ and $\boldsymbol{\rho} = \{- 0.9, -0.8, \ldots, 0.8, 0.9\}$. For each combination of parameters, we computed:

\begin{enumerate}
    \item \label{eq:option1} (Option 1) $\rst{}{s} = \argmin_{\rho} \left( \hat{\tau}_{A} - B_X(\hat{\lambda}_X) - B_Y(\hat{\lambda}_Y) + 1 - A(\rho, \hat{\lambda}_X, \hat{\lambda}_Y) \right)^2$ as in \autoref{eq:tau-eq}.
    \item \label{eq:option3a} (Option 3a) $\rst{}{l} = \left(1 + \exp(-(\hat{\lambda}_X + \hat{\lambda}_Y)) \right) \sin \left( \frac{\pi}{2} \hat{\tau}_A\right)$, derived from \autoref{eq:tau-asin-logis},
    \item \label{eq:option3b} (Option 3b) $\rst{}{b} = \sin \left( \frac{\pi}{2}  \left( \sqrt{\paren{1-\posqh{X}} \paren{1-\posqh{Y}}} \right)^{-1}  \hat{\tau}_A\right)$, derived from \autoref{eq:tau-asin-taub}.
\end{enumerate}

For each situation, we computed RMSEs. \autoref{fig:approxTau3D} shows plots of RMSEs as a function of $\lambda$'s and $\rho$, for each method. Average RMSE over all parameters' combinations rounded up to three decimals for each approach yielded $0.108$ for $\rst{}{s}$, $0.146$ for $\rst{}{l}$, and $0.147$ for $\rst{}{b}$.  It must be noted that the RMSEs for all methods decreased when the copula parameter was closer to $0$. The estimator $\rst{}{l}$ produced worst RMSEs for small values of $\lambda$ when $\rho$ were positive, however, it performed best with higher values of $\lambda$'s for all values of $\rho$. The estimator $\rst{}{b}$ yielded the worst results with higher values of $\lambda$'s, but the best results when $\lambda$'s were very small $(0.1,0.5)$ and $\rho$ either close to $0$ $(0.1,0.2)$ or $0.9$. The estimator $\rst{}{s}$ produced the most consistent results overall.

In summary, while the alternative viewpoint on Kendall's tau as in \autoref{prop:tau-skellam} presents interesting insights, its practical use is limited. Furthermore, although the idea of approximation is appealing, deriving a tractable approximation of Kendall's tau stays a nontrivial task. Despite the fact that the functions proposed above might work very well in some scenarios, we decided to opt for the method yielding the smallest average RMSE overall, namely, the method described in \autoref{eq:tau-eq}. What we propose is thus to use IFM-type approach for the computation of the initial (starting) parameters, ultimately used for the full maximum likelihood approach, leveraging its standard large-sample guarantees, i.e. consistency and asymptotic normality (e.g., \citet[Chapter 7]{CasellaStatisticalInference}).

\section{Parameters estimation}
\label{sec:opt-start}

As mentioned before, the computation of the MVN cdf probabilities is challenging. A number of efficient numerical methods are available and well documented (see \citet{nikoloulopoulosCopulaBasedModelsMultivariate2013} for a survey and software pointers in \Rstudio). A solid choice are the algorithms of \citet{genzMvtnormMultivariateNormal2000} and \citet{GenzBretzComputationMultivariateNormal2009}, as implemented in the \texttt{mvtnorm} package for $\Rstudio$ \citep{genzMvtnormMultivariateNormal2000}. These are randomized (quasi-)Monte Carlo methods with variance-reduction heuristics that handle arbitrary covariance structures with high accuracy and scale to fairly large dimensions (up to $1000$ according to package documentation). Their drawback is non-determinism and potential slowness when high precision is required. In low to moderate dimensions the deterministic Miwa algorithm \citep{miwaEvaluationGeneralNonCentred2003} can be used instead, which can be faster and reproducible when $d$ is small. 

Because direct ML hinges on these MVN integrals, several approximations of the log-likelihood have been proposed. The distributional transform (DT) treats the discrete margins as though they were continuous after a suitable transform, i.e. by stochastically "smoothing" the cdf at its jumps. The technique yields a tractable "continuous-data-like" likelihood \citep{KaziankaPilzCopulabasedGeostatisticalModeling}. While attractive computationally, DT is not universally valid: \citet{nikoloulopoulosEfficientEstimationHighdimensional2016} show that DT can be biased under heavy discretization and strong dependence. By contrast, the continuous extension (CE) or jittering approach \citep{madsenMaximumLikelihoodEstimation2009} is, in principle, exact up to Monte Carlo error if one averages over (infinitely) many jitters, but the computation is costly and, as discussed earlier, it suffers from efficiency and bias issues in challenging regimes. A recent nuance is provided by \citet{hughesOccasionalExactnessDistributional2021}, who identify model classes for which DT is (nearly) exact and propose a diagnostic to assess DT-exactness on a given dataset (e.g., certain autoregressive or mixed-effects structures with common count margins). Thus, DT could be justified in very specific cases, but it is not a general remedy for discrete Gaussian-copula likelihoods.

In line with \citet{nikoloulopoulosEfficientFeasibleInference2023}, fully specified multivariate normal (MVN) copula models attain the efficiency bound under correct specification. To make the optimization numerically robust and unconstrained, we propose to reparametrize both the marginal and dependence components. Then, we develop the expression for the analytical gradient to speed up the computations and improve convergence.

In the absence of covariates, one could use box constraints (e.g. \texttt{L-BFGS-B} method in \texttt{optim}), but in the case where the covariates are present, the problem turns into nonlinear constrained optimization, much more difficult to solve. Reparametrization solves this problem completely. Ensuring the model's parameters are in $\R^{d+d(d-1)/2}$ allows to obtain the following advantages. First, we ensure that every iteration step maps to a feasible vector of estimated parameters. This avoids objective function's evaluation producing NaNs or undefined derivatives and makes the process more smooth. Next, this allows for better behaviour of all the functions near the boundaries, e.g. when $\rho \to \pm 1, \lambda \to 0$. Standard texts, like \citet{NumericalOptimization2006}, emphasize how poor scaling or lack of curvature degrades second-order methods, and that scaling or variable changes (including reparameterizations) fix it. Thus, the optimization algorithm will be more likely to converge to a global maximum (when the objective function is a log-likelihood).

\subsection{Unconstrained model parameters}
\label{sec:reparam}

Let $\mathbf{Y} = (Y_1, \ldots, Y_d)$ have Poisson margins, coupled by a $d$-dimensional Gaussian copula with correlation matrix $\bp = \left(\rho_{ij}\right)_{1\leq i,j\leq d}$ with entries $\rho_{ii} = 1, \rho_{i j} = \rho_{j i} \in (-1,1)$. Consider the strict lower-triangular index set $\mathcal{K} := \left\{(i,j): 1 \le j < i \le d\right\}$ and define columnwise strict half-vectorization $\vechs{\bp}:= \boldsymbol{\varrho} = \left(\rho_{21}, \ldots, \rho_{d1}, \rho_{32}, \ldots, \rho_{d2}, \ldots, \rho_{d(d-1)}\right)$, i.e. stacking the columns of the lower-triangular entries of $\bp$ without the main diagonal. Let $p= |\vechs{\bp}| = d(d-1)/2$, the number of copula (correlation) parameters.

The original parameter vector is $\bg = \left( \lo, \ldots, \lambda_d, \vechs{\bp}\right)$, where $\lambda_j >0$. The map from $\bg$ to the unconstrained parameter vector $\boldsymbol{\psi} \in \R^{d+d(d-1)/2}$ is as follows.

For the marginal (Poisson) parameters, we use a log link: 

\begin{equation}
  \label{eq:lam-reparam-tau}
  \eta_j = \log(\lambda_j), \quad j=1,\ldots,d, \quad \eta_j \in \mathbb{R}.
\end{equation}

This transformation ensures that $\lambda_j$ is always positive. 

As to rhos, \citet{lucchettiSphericalParametrisationCorrelation2024} proposed a reparametrization of the copula parameters using the spherical coordinates. They have shown it can sometimes have numerical advantages. Using Cholesky decomposition, write $\bp=\lrep \lrep^{\top}$, with $\lrep$ a lower triangular matrix. Let the $i$-th row of $\lrep$ be defined as

\begin{equation}
    \label{eq:cholesky}
        \lreprow_i=\left( \alpha_{i 1}, \ldots, \alpha_{i i}, \zer \right),
\end{equation}

with $\alpha_{1 1} = 1$, $q_{i j}:=\prod_{k=1}^{j-1} \sin \omega_{i k}$ (with $q_{i 1}=1$) and with the vector $\boldsymbol{\alpha}_i \in \R^{i}$ given recursively by

\begin{equation}
    \label{eq:spherical-coordinates} \alpha_{i j} =
    \begin{cases}
        \cos \omega_{i 1}, \quad & j=1,\\ 
        \left(\prod_{k=1}^{j-1} \sin \omega_{i k}\right) \cos \omega_{i j} = q_{i j} \cos \omega_{i j}, \quad & 2 \leq j \leq i -1, \\ 
        \prod_{k=1}^{i-1} \sin \omega_{i k} = q_{i i}, \quad & i=j,
    \end{cases}
\end{equation}

where $\omega_{i j} \in[0 ; \pi]$. To remove the box constraints on $\omega_{i j}$, we follow the authors' suggestion and use the logistic map to obtain an unconstrained parameter $\zeta_{i j} \in \R$:

\begin{equation}
    \label{eq:reparam-copula}
    \zeta_{i j} = \log\left(\frac{\omega_{i j}}{\pi - \omega_{i j}}\right), \quad \omega_{i j} = \frac{\pi \exp(\zeta_{i j})}{1 + \exp(\zeta_{i j})} = \pi \Lambda (\zeta_{i j}).
\end{equation}

This yields a smooth bijection $\text{SCP}(\bz): \R^p  \to (0,1)^p; \bz \mapsto \lrep (\bz) \mapsto \bp(\bz)$, ensuring the angles $\zeta_{i j} \in \R$ for all $(i,j) \in \mathcal{K}$. We stack the angle parameters over the strict lower triangle, columnwise, in the same order as $\vechs{\bp}$:
$$
\bz : =  (\zeta_{21},\ldots,\zeta_{d1},\ \zeta_{32},\ldots,\zeta_{d2},\ \ldots,\ \zeta_{d(d-1)})^{\top},\qquad
\boldsymbol{\omega} := (\omega_{21},\ldots,\omega_{d1},\ \omega_{32},\ldots,\omega_{d2},\ \ldots,\ \omega_{d(d-1)})^{\top}.
$$

Finally, by combining Equations \ref{eq:lam-reparam-tau}-\ref{eq:reparam-copula}, the reparametrized parameter vector is

\begin{equation}
  \label{eq:psi-reparam}
  \boldsymbol{\psi} = (\be, \bz) \quad \in \R^{d+d(d-1)/2}, \quad \lambda_j = \exp(\eta_j), \quad j=1,\ldots,d, \bp = \text{SCP}(\bz).
\end{equation}

This reparametrization typically improves numerical stability, avoids boundary hits, and permits gradient-based methods to operate on the real line. Next, we develop closed-form derivatives with respect to unconstrained parameters $\boldsymbol{\psi}$ to further accelerate and stabilize the maximum likelihood optimization.

\subsection{Gradient of the log-likelihood}
\label{sec:gradient}

Let $\bg = (\boldsymbol{\lambda}, \br)$ denote the original parameters and $\bps = (\be, \bz)$ the unconstrained parameters introduced in Subsection~\ref{sec:reparam}, with $\eta_k = \log (\lambda_k)$ and $\bp = \text{SCP} (\bz), \br = \vechs{\bp}$ via the spherical-Cholesky map. For any observation $\mathbf{y} \in \mathbb{N}^d$, and any corner vector $\mathbf{t}\in \left\{ 0,1\right\}^d$, write 

$$
H\left(\mathbf{y}-\mathbf{t}; \bg\right) = \Phi_d\left(\Phi^{-1}\left(F_{Y_1}\left(y_1 - t_1\right)\right), \ldots, \Phi^{-1}\left(F_{Y_d}\left(y_d - t_d\right)\right); \bp\right)
$$

for the multivariate normal rectangle probability in the discrete likelihood as in \autoref{eq:H-gaus}, with the inclusion-exclusion over $\boldsymbol{t}$. To compute the gradient of the log-likelihood, first we make use of Theorem~$3.3.4$ in \citet{tongMultivariateNormalDistribution2012}. It states that the conditional distribution in a multivariate normal distribution is multivariate normal. Let $\mathbf{Z} \sim \mathcal{N}_d(\boldsymbol{0}, \bp)$ be a multivariate normal random vector. Define $S = \{1, \ldots, d\} \backslash\{i, j\}$, and partition correlation matrix $\bp$ as $\rss, \rij, \rsij$, where $\rsij$ has rows consisting of $S$ rows and $(i,j)$ columns. Fix $\ba = (b_i,b_j)$. Then the conditional law $\boldsymbol{Z}_{S} \mid Z_i=b_i, Z_j=b_j \sim \mathcal{N}\left(\boldsymbol{\mu}_{S \mid i j}, \boldsymbol{\Sigma}_{S \mid i j}\right)$, where

$$
\boldsymbol{\mu}_{S \mid i j}=\rsij \rij^{-1} , \quad \boldsymbol{\Sigma}_{S \mid i j}=\rss-\rsij \rij^{-1} \rsij^{\top},
$$

and $\rij^{-1} = \frac{1}{1 - \rho_{i j}^2} \begin{bmatrix}
1 & -\rho_{i j} \\
-\rho_{i j} & 1
\end{bmatrix}$. Note that we condition here on $\left\{ i, j\right\}$ to provide an example, but this result could be generalized by conditioning on any set $\mathcal{S} \subset \left\{1, \ldots, d\right\}$. 

To compute the gradient of the log-likelihood with respect to both dependence and marginal parameters, we use the results by \citet{tongMultivariateNormalDistribution2012,plackettReductionFormulaNormal1954,lucchettiSphericalParametrisationCorrelation2024}. For the correlation parameters, we first compute the score defined as

\begin{align*}
    \boldsymbol{s}_{P} \left( \bg; \bys\right) &:= \frac{\partial \ell \left( \bg ; \bys\right)}{\partial \vechs{\bp}}.
\end{align*}

Then, by the chain rule,

$$
\frac{\partial \ell \left( \bps ; \bys\right)}{\partial \boldsymbol{\zeta}_{ } }= \boldsymbol{s}_{P} \left( \bg; \bys\right) \left[\frac{\partial \vechs{\bp}}{\partial \boldsymbol{\omega}_{ } }\right]^{}  \left[\frac{\partial \boldsymbol{\omega}_{ } }{\partial \boldsymbol{\zeta}_{ } }\right]^{}.
$$

For the Poisson means, let

\begin{align*}
    \boldsymbol{s}_{\lambda_k} \left( \bg; \bys\right) &:= \frac{\partial \ell \left( \bg ; \bys\right)}{\partial \lambda_k}
\end{align*}

be the score function of the original log-likelihood with respect to $\lambda_k$. Then, by the chain rule, 

$$
\frac{\partial \ell \left( \bps ; \bys\right)}{\partial \eta_k} = \frac{\partial \ell \left( \bg ; \bys\right)}{\partial \lambda_k} \frac{\partial \lambda_k}{\partial \eta_k}= \boldsymbol{s}_{\lambda_k} \left( \bg; \bys\right) \lambda_k, 
$$

with $\lambda_k=\exp \left(\eta_k\right), \quad k=1, \ldots, d$. Finally, the gradient of the log-likelihood with respect to the unconstrained parameters $\bps$ is provided in the following proposition.

\begin{proposition}[Gradient of the log-likelihood of a Gaussian copula with Poisson margins.]
    \label{def:grad}
    The gradient of the log-likelihood with respect to unconstrained parameters is as follows:
    \newline
    
    \small
    \begin{align*}
        & \nabla_{\ell} (\eta_1, \ldots, \eta_d, \zeta_{ij}; \bys) = \\
        & = \begin{bmatrix}
            \sum_{r=1}^n \frac{1}{h\left(\by^{(r)} ; \bps\right)} \left[  \boldsymbol{s}_{\lo}^{(r)} \left( \bg; \by^{(r)}\right) \exp \left(\eta_1\right) \right] \\
            \vdots \\
            \sum_{r=1}^n \frac{1}{h\left(\by^{(r)} ; \bps\right)} \left[  \boldsymbol{s}_{\ld}^{(r)} \left( \bg; \by^{(r)}\right) \exp \left(\eta_d\right) \right] \\
            \newline \\
            \sum_{r=1}^n \frac{1}{h\left(\by^{(r)} ; \bps\right)}  \left[\pi \Lambda\left(\zeta_{ij}\right)\left\{1-\Lambda\left(\zeta_{ij}\right)\right\} \boldsymbol{s}_{\bp} \left( \bg; \by^{(r)}\right)^{\top} \left[M_p D_d^{+} \left(\boldsymbol{I}_{d^2} + \boldsymbol{K}_d\right) \left(\boldsymbol{I}_d \otimes \lrep\right) \left[ \mathbf{e}_i \otimes \mathbf{g}_{i j}\right] \right] . \right]  \\
    \end{bmatrix},
    \end{align*}

    \normalsize
    where $\bp = \lrep \lrep^{\top}$ with $\lrep$ lower triangular, $\boldsymbol{I}$ the identity matrix, $\boldsymbol{K}_d$ the $d^2 \times d^2$ commutation matrix, so that $\vect{\lrep^{\top}} =\boldsymbol{K}_d \vect{\lrep}$, $\otimes$ the Kronecker product, $D_d$ the $d^2 \times d(d+1)/2$ duplication matrix such that $D_d \vech{\bp} = \vect{\bp}$ and $M_p$ a $p \times d(d+1)/2$ selection matrix such that $\vechs{\bp} = M_p \vech{\bp}$ \citep{MatrixDifferentialCalculus}, $\mathbf{g}_{i j} \in \mathbb{R}^d$ a column vector, whose first $i$ entries are partial derivatives of $i$-th rows of $\boldsymbol{l}_i$ with respect to reparametrized angles $\omega_{i j}$ and zeros afterward, and $\mathbf{e}_i$ the $i$-th canonical vector in $\mathbb{R}^d$.
\end{proposition}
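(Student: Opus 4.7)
The plan is to compute the gradient by repeated application of the chain rule, differentiating through each layer of the reparametrization introduced in Section~\ref{sec:reparam}. Writing $\ell(\bps;\bys) = \sum_{r=1}^n \log h(\byr;\bps)$, the logarithmic derivative yields $\nabla \ell = \sum_r h(\byr;\bps)^{-1}\,\nabla h$, which produces the $1/h(\byr;\bps)$ prefactor in every row of the stated gradient. Because $\bps = (\be,\bz)$ enters $h$ only through the Poisson cdfs (via $\be$) and through $\bp$ (via $\bz$), the gradient decouples into a $d$-dimensional marginal block and a $p$-dimensional dependence block that can be treated independently.

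For the marginal block, I would apply the univariate chain rule: $\partial h/\partial \eta_k = \boldsymbol{s}_{\lambda_k}(\bg;\byr)\cdot(\partial \lambda_k/\partial \eta_k)$, with the second factor equal to $\exp(\eta_k)=\lambda_k$ by the log link in \autoref{eq:lam-reparam-tau}. The score $\boldsymbol{s}_{\lambda_k}$ is obtained by differentiating the inclusion-exclusion sum in \autoref{eq:H-gaus} term-by-term: only the $k$-th integration limit depends on $\lambda_k$, and by Tong's Theorem~3.3.4 applied to the singleton $\{k\}$ (together with the inverse function theorem for $\Phi^{-1}$ and differentiation of the Poisson cdf with respect to its mean), $\partial \Phi_d / \partial z_k$ factors as a univariate standard normal density at $z_k$ times a $(d-1)$-variate conditional normal cdf, which is computable by the same routines used for the likelihood itself.

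For the dependence block, I would first obtain $\boldsymbol{s}_P(\bg;\byr) = \partial \ell/\partial \vechs{\bp}$ by invoking Plackett's identity $\partial \Phi_d/\partial \rho_{ij} = \partial^2 \Phi_d/\partial z_i \partial z_j$ and, again by Tong's conditional decomposition, express each pairwise partial as a bivariate normal density in $(z_i,z_j)$ times a $(d-2)$-variate conditional normal cdf on $S=\{1,\dots,d\}\setminus\{i,j\}$. I would then propagate through the spherical-Cholesky chain $\bz \mapsto \bo \mapsto \lrep \mapsto \bp$. Writing $d\bp = (d\lrep)\lrep^{\top} + \lrep(d\lrep^{\top})$, applying the vec operator and using the identities $\vect{AXB}=(B^{\top}\otimes A)\vect{X}$ and $\boldsymbol{K}_d \vect{X}=\vect{X^{\top}}$ together with the Magnus--Neudecker commutation rule $(A\otimes B)\boldsymbol{K}_d = \boldsymbol{K}_d(B\otimes A)$, yields $d\vect{\bp} = (\boldsymbol{I}_{d^2}+\boldsymbol{K}_d)(\boldsymbol{I}_d\otimes \lrep)\,\vect{d\lrep^{\top}}$. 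Since only row $i$ of $\lrep$ depends on $\omega_{ij}$ (see \autoref{eq:spherical-coordinates}), the column-major vec of $\partial \lrep^{\top}/\partial \omega_{ij}$ concentrates on the $i$-th block and equals $\mathbf{e}_i\otimes \mathbf{g}_{ij}$ with $\mathbf{g}_{ij}$ as defined in the statement. Projecting onto the strict lower triangle with $M_p D_d^{+}$ gives $\partial \vechs{\bp}/\partial \omega_{ij}$, and composing with the logistic Jacobian $\partial \omega_{ij}/\partial \zeta_{ij} = \pi\,\Lambda(\zeta_{ij})\{1-\Lambda(\zeta_{ij})\}$ from \autoref{eq:reparam-copula} delivers the last block of the stated gradient.

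The main obstacle is bookkeeping the Kronecker-product algebra so that the symmetric structure of $\bp$ is respected without double-counting. In particular, one must verify that $(\boldsymbol{I}_{d^2}+\boldsymbol{K}_d)(\boldsymbol{I}_d\otimes \lrep)$ --- rather than, say, $2(\boldsymbol{I}_d\otimes \lrep)$ --- is the correct vectorized Jacobian of the symmetric product $\lrep\lrep^{\top}$, and that the subsequent application of $D_d^{+}$ followed by $M_p$ selects exactly the strict lower-triangular entries in the column-major order used to define $\vechs{\bp}$. A secondary technical point is confirming that the vec of the row-sparse matrix $\partial \lrep^{\top}/\partial \omega_{ij}$ collapses to $\mathbf{e}_i\otimes \mathbf{g}_{ij}$ under the chosen convention (noting that a single \emph{row} of $\lrep$ corresponds to a single \emph{column} of $\lrep^{\top}$). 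Once these matrix-calculus identities are established, the assembly of the gradient is a direct chain-rule calculation.
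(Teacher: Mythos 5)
Your proposal is correct and follows essentially the same route as the paper's derivation in Appendix~\ref{app:grad}: the $1/h$ prefactor from the log, the log-link Jacobian $\exp(\eta_k)$ combined with Plackett/Tong conditioning to reduce $\partial \Phi_d/\partial b_k$ to $\varphi(b_k)\Phi_{d-1}$ and $\partial\Phi_d/\partial\rho_{ij}$ to $\varphi_2\Phi_{d-2}$, the differential $d\vect{\bp}=(\boldsymbol{I}_{d^2}+\boldsymbol{K}_d)(\boldsymbol{I}_d\otimes\lrep)\vect{d\lrep^{\top}}$ with the row-sparse block $\mathbf{e}_i\otimes\mathbf{g}_{ij}$, the $M_pD_d^{+}$ selection, and the logistic Jacobian. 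The bookkeeping issues you flag (no double-counting from symmetry, the row-of-$\lrep$ versus column-of-$\lrep^{\top}$ convention) are exactly the points the paper resolves via the Magnus--Neudecker identities and the recursive formulas for $\partial\alpha_{ij}/\partial\omega_{im}$.
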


The full computations are available in Appendix \ref{app:grad}. It must be noted that the end results for both marginal and copula parameters yields the same inclusion-exclusion formula for $2^d$ differences of multivariate normal cdf, evaluated at their conditional means, similar to the expression of the log-likelihood. However, the integrals to compute will be of reduced dimension $d-2$ and $d-1$ with respect to correlation and marginal parameters, respectively. Hence, this will be significantly cheaper to evaluate than the likelihood terms. Furthermore, it must be noted that in the case where a mass at multivariate zero $(\zer)$ is observed, both the expression for the log-likelihood and the gradient will be simplified, as only the first term of the $2^d$ differences will survive. Indeed, as we have $F_{Y_j}(-1) = 0, j = 1, \ldots, d$, it follows that $h(\zer; \bps) = H(\zer; \bps)$. This property will be helpful when evaluating those terms numerically. Especially in the cases where the Poisson means are low, this will significantly speed up the computations. 

\section{Simulation}
\label{sec:sim}

\subsection{Setting}

To test whether the proposed improvements yield the expected results, we performed a series of simulations under the following settings. For dimension $d=2$, we set $\rho = \rho_{21}= -0.5$ and consider a setting where Poisson parameters are small $(\bl = (0.5,1))$ and moderate $(\bl = (2,3))$. For dimension $d=4$, we set $\br = (-0.42, -0.23,  0.73,  0.21, -0.64,  0.18)$ and consider a mixed setting $\bl = (0.6, 2, 4, 0.8)$. We choose an unstructured correlation matrix to allow for a large mix of different correlations that would correspond to a large panel of real-life situations. The two different scenarios for the starting values are outlined in the next section. Furthermore, we compare the results with the analytical gradient that we derived and with the gradient computed numerically. We take a sample size of $n=500$ across all scenarios and do $N=100$ repetitions. We recorded RMSEs, biases and computation times for all simulations. As a benchmark, we used \texttt{fitMvdc} function in \texttt{copula} package. By default it maximizes the continuous-margin density likelihood. Hence, in our scenario, the likelihood will be misspecified. Generally, this is not a good option for a copula with discrete margins and should be avoided. For dimension $d=4$, we did not perform a simulation starting from empirical correlation and with numerical gradient, since the simulation starting from true tau and with numerical gradient took a long time to complete and we believe it would take even longer if one started from empirical correlation. 

\subsection{Implementation}

For the computation of the numerical gradient, we use built-in finite-difference approximations routine in $\Rstudio$ \texttt{optim} package. For all scenarios, to maximize the log-likelihood we use \texttt{optim} in $\Rstudio$ with \texttt{BFGS} optimization routine, which is a Newton-based routine. To compute the MVN cdf, by default we use the Miwa algorithm \citep{miwaEvaluationGeneralNonCentred2003} and resort to \citet{GenzBretzComputationMultivariateNormal2009} in case the covariance matrices are singular.

We are going to compare the following scenarios for the starting points. For the Poisson parameters, we are using the method-of-moments estimator for all the cases. Consider

\begin{align*}
  \bar{y}_j & = \frac{1}{n} \sum_{r=1}^{n} y_j^{(r)}, \quad j=1,\ldots,d.
\end{align*}

Then, by using the first moment,

\begin{equation}
  \label{eq:start-lam}
  \hat{\lambda}_{j}^{} = \bar{y}_j \Rightarrow \est{j}{s} = \log \left( \hat{\lambda}_{j}^{}\right)
\end{equation}

by using \autoref{eq:lam-reparam-tau}. For the dependence parameters, we consider two different scenarios.

As a first option we consider $\bp = \text{SCP} \left(\bz\right)$, the spherical reparametrization defined in Subsection~\ref{sec:reparam}. As the map is smooth and bijective, $\bz = \text{SCP}^{-1} \left(\bp\right)$. We then compute the empirical correlation

\begin{equation*}
  \label{eq:start-cor}
  \rst{ij}{c} = \widehat{\cor} \left(Y_i, Y_j\right) \text{ and } \zst{ij}{c} = \text{SCP}^{-1} \left(\rst{ij}{c}\right),
\end{equation*}

where the superscript $^{(c)}$ stands for empirical correlation. 

As a second option we use the method outlined in Option $1$ in Subsection~\ref{sec:option1}. Given $\hat{\lambda}_{j}^{}$ from \autoref{eq:lam-reparam-tau}, we use \autoref{eq:tau-eq-alld} to compute the starting values for spherically reparametrized correlation parameters. The general procedure is outlined in Algorithm~\ref{alg:sim}.

\begin{algorithm}
    \caption{Simulation algorithm}\label{alg:sim}
    \begin{algorithmic}
        \Require $n \geq 0, \bg = (\bl, \br), \mid \bg \mid := q = (d + d(d-1)/2)$
        \Ensure $\be = \log \left(\bl \right), \bz = \text{SCP}^{-1} \left(\br\right)$
        \State  $t_0 \gets \text{\texttt{Sys.time()}}$ \Comment Start timer
        \For{$k=1,\ldots,N$}
          \State  $\byy^k \gets (\byy_1^{k}, \ldots, \byy_d^{k})$  \Comment Simulate from a Gaussian copula with Poisson margins
          \State $\hat{\bps}^{k,s} \gets (\boldsymbol{\hat{\eta}}^{(k,s)}, \hat{\boldsymbol{\zeta}}^{(k,s)})$ \Comment Compute the starting values
          \State $\hat{\bg}^k \gets \left( \exp \left( \hat{\boldsymbol{\eta}}^k \right), \text{SCP} \left( \hat{\boldsymbol{\zeta}}^k \right) \right)$ \Comment Store MLE parameters
        \EndFor
        \For{$l=1,\ldots, q$}
          \State  $\text{RMSE}_l \gets \sqrt{1/N \sum_{k=1}^{N} \left( \hat{\bg}_{k}^{(l)} - \bg^{(l)} \right)^2}$  \Comment Compute RMSE
          \State  $\text{Bias}_l \gets 1/N \sum_{k=1}^{N} \left( \hat{\bg}_{k}^{(l)} - \bg^{(l)} \right)$  \Comment Compute bias
        \EndFor
        \State  $t_1 \gets \text{\texttt{Sys.time()}}$ \Comment End timer
        \State  $\text{\texttt{Time}} \gets t_1 - t_0$ \Comment Record total time
        \State  $\text{\texttt{meanRMSE}} \gets 1/q \sum_{l=1}^{q} \text{RMSE}_l$ \Comment Record mean RMSE
        \State  $\text{\texttt{meanBias}} \gets 1/q \sum_{l=1}^{q} \text{Bias}_l$ \Comment Record mean bias
    \end{algorithmic}
\end{algorithm}

\subsection{Results}

The simulations were run according to the scenarios outlined in \autoref{tab:sim-res}, which also reports average RMSE, average bias multiplied by $10^3$ for all the parameters combined, in addition to total simulation running time. The simulations were performed on a personal Macbook mini with M2 processor and 16GB memory, using parallelized computation with $4$ cores. Parallelization here means shared-memory task parallelism on a single machine: we create a cluster of worker processes and distribute independent iterations to them. Specifically, we prepared a shared-memory worker pool using $\Rstudio$'s \texttt{parallel}/\texttt{doParallel} interface, capped at four logical cores for portability, and then executed the Monte Carlo loop sequentially. Each replicate $k = 1, \ldots, N$ is initialized with a deterministic seed \texttt{set.seed(k)} to ensure per-replicate reproducibility. 

\begin{table}[h]
    \centering
    \begin{tabular}{|c|c|c|c|c|c|c|r|r|}
        \hline 
        Setting & $d$ & Lambdas & Gradient & Start & Method & $\overline{\mathrm{RMSE}}$ & $\overline{\mathrm{Bias}} (\times 10^{3})$ & Time \\
        \hline 
$2.1$ & $2$ & $2, 3$ & Analytic & Corr & Reparam & $0.057$ & $0.002$ & $5.76S$ \\
        \hline
$2.2$ & $2$ & $2, 3$ & Analytic & Option 1 & Reparam & $0.055$ & $-0.163$ & $43.13S$ \\
        \hline
$2.3$ & $2$ & $2, 3$ & Numeric & Corr & Reparam & $0.054$ & $-0.138$ & $4.70S$ \\
        \hline
$2.4$ & $2$ & $2, 3$ & Numeric & Corr & fitMvdc & $0.197$ & $181.688$ & $11.98S$ \\
        \hline
$2.5$ & $2$ & $2, 3$ & Numeric & Option 1 & Reparam & $0.054$ & $-0.138$ & $53.19S$ \\
        \hline
$2.6$ & $2$ & $2, 3$ & Numeric & Option 1 & fitMvdc & $0.197$ & $181.720$ & $13.51S$ \\
        \hline
$2.7$ & $2$ & $0.5, 1$ & Analytic & Corr & Reparam & $0.048$ & $-3.921$ & $6.23S$ \\
        \hline
$2.8$ & $2$ & $0.5, 1$ & Analytic & Option 1 & Reparam & $0.041$ & $-0.146$ & $53.30S$ \\
        \hline
$2.9$ & $2$ & $0.5, 1$ & Numeric & Corr & Reparam & $0.041$ & $-0.033$ & $4.85S$ \\
        \hline
$2.10$ & $2$ & $0.5, 1$ & Numeric & Corr & fitMvdc & $0.425$ & $145.554$ & $16.10S$ \\
        \hline
$2.11$ & $2$ & $0.5, 1$ & Numeric & Option 1 & Reparam & $0.041$ & $-0.035$ & $53.11S$ \\
        \hline
$2.12$ & $2$ & $0.5, 1$ & Numeric & Option 1 & fitMvdc & $0.425$ & $145.555$ & $16.66S$ \\
        \hline
$4.1$ & $4$ & $0.6, 2, 4, 0.8$ & Analytic & Corr & Reparam & $0.054$ & $-2.233$ & $19H 43M 44.03S$ \\
        \hline
$4.2$ & $4$ & $0.6, 2, 4, 0.8$ & Analytic & Option 1 & Reparam & $0.050$ & $-0.821$ & $17H 8M 15.31S$ \\
        \hline
$4.3$ & $4$ & $0.6, 2, 4, 0.8$ & Numeric & Option 1 & Reparam & $0.065$ & $-2.139$ & $3d 20H 9M 21.88S$ \\
        \hline
    \end{tabular}
    \caption{Simulation results. The results show the usage of Analytic vs Numeric gradient, starting from Option $1$ vs empirical correlation ("Corr"), with unconstrained reparameterization method ("Reparam") vs fitMvdc.}
    \label{tab:sim-res}
\end{table}

From \autoref{tab:sim-res}, we see the \texttt{fitMvdc} results are bad across all scenarios since RMSEs are the worst. This is expected because the likelihood is misspecified in that scenario. For our proposed method with the unconstrained parameters, results in dimension $2$ are very close across all scenarios, still starting from Option 1 and using analytical gradient generally yields the smallest RMSE. In dimension $4$, it becomes clear that the analytical gradient drastically improves the computation speed. Furthermore, starting from Option 1 reduces the bias as well as the computation time across all simulations. 

More detailed tables with RMSEs for each parameter individually are in Appendix~\ref{app:sim-tab}. The plots for all scenarios and parameters are available in Appendix~\ref{app:sim-res-plots}. We comment on them in the following paragraph. \autoref{tab:sim-res-d2} shows the more detailed per parameter RMSEs for $d=2$. Except for \texttt{fitMvdc}, the results are very good and generally close. The RMSEs for $\hat{\rho}$ are the best when starting from Option $1$ and with the analytical gradient for both moderate and small lambda settings. For $d=2$, full maximum likelihood with our unconstrained correlation parameterization ('Reparam') attains low RMSE across all parameters, both at moderate counts $(2,3)$ and in the low-count regime $(0.5,1)$. In contrast, the MVDC routine (\texttt{fitMvdc}) exhibits large errors for discrete data, with $\hat{\rho}$ particularly unstable at low counts. RMSEs under analytic and numeric gradients are virtually identical, confirming the correctness of our analytical score. We nonetheless use the analytical gradient for improved numerical stability. Starting from Option $1$ yields slightly better results than from the empirical correlation. The convergence rate, computed as the ratio of all converged optimizations to the total number of Monte Carlo runs $N$, was $1$.

\autoref{tab:sim-res-d4-lam} shows the more detailed per parameter RMSEs for $\lambda$ in $d=4$. We see analytic scores materially improve finite-sample efficiency relative to finite-difference (numeric) scores, with gains most pronounced on the parameters that are hardest to identify in the given regime ($\lambda_3$ here). Starting values matter, but once analytic gradients are in place, the gap between reasonable starts is second-order in this design. However, better starting values (Option $1$) does speed up the computation time significantly. Numerical gradients inject stochastic truncation error plus floating-point amplification. In non-smooth regions induced by piecewise rectangle sums, step-to-step score variability increases, so quasi-Newton updates inherit extra curvature noise, inflating estimator variance (\citet{NumericalOptimization2006}).

For $d=4$, RMSEs for $\rho_{ij}$ as shown in \autoref{tab:sim-res-d4-rho} are uniformly small with analytic gradients. Numeric gradients inflate errors, with the largest degradations on pairs involving index $3$ or $4$ (e.g., $\rho_{31}: 0.046$ for analytic gradient and $\rho_{31}: 0.073$ for numeric gradient approximation; $\rho_{43}: 0.054$ for analytic gradient and $\rho_{43}: 0.075$ for numeric gradient approximation). This is consistent with information loss from finite-difference noise carried through the inclusion-exclusion likelihood. Overall, the choice of gradient dominates the choice of start for dependence recovery. Analytic scores deliver stable, near-homogeneous performance across pairs, whereas numeric scores introduce heterogeneity and higher RMSE. However, the use of Option $1$ speeds up the convergence significantly and is therefore recommended.

Violin plot for $\lambda, d=2$ and small intensities setting (\autoref{fig:violin-lam-d2-small}) shows that estimates are tightly centered on the truth (dashed line). Dispersion is larger for $\lo = 0.5$ than for $\lt = 1$, reflecting the smaller Fisher information at very low counts. Across different scenarios on the $x$-axis, centers remain stable, except for scenarios $2.10$ and $2.12$ (\texttt{fitMvdc}).

Violin plot for $\lambda, d=2$ and moderate intensities setting (\autoref{fig:violin-lam-d2-moderate}) shows that both parameters are essentially unbiased with visibly narrower violins than in the low-count case, consistent with increased information. Similarly to the low-intensity setting, scenarios $2.4$ and $2.6$ (\texttt{fitMvdc}) produce biased estimates.

Violin plot for $\lambda, d=4$ across three settings (\autoref{fig:violin-lam-d4}) shows that with analytic gradients (scenarios $4.1$ and $4.2$), all $\hat{\lambda}_j$ are well centered and similarly dispersed. Under numeric gradients (scenario $4.3$), violins show a very slight downward shifts for $\lt$ and $\lambda_4$, indicating variance inflation plus small bias. Furthermore, scenario $4.3$ (numeric gradient) presents with more outliers.

Violin plot for $\rho, d=2$ small intensities setting (\autoref{fig:violin-rho-d2-small}) shows that all scenarios except $2.10$ and $2.12$ (\texttt{fitMvdc}) centers track the truth closely. Scenarios $2.10$ and $2.12$ produce completely biased estimates.

Violin plot for $\rho, d=2$ moderate intensities setting (\autoref{fig:violin-rho-d2-moderate}) shows distributions contract relative to the low-count case, with means and medians aligned to the dashed line across dependence levels. Scenarios $2.4$ and $2.6$ (\texttt{fitMvdc}) produce biased estimates, but the situation looks better compared to the low-intensity case.

Violin plot for $\rho, d=4$ across three settings (\autoref{fig:violin-rho-d4}) shows analytic-gradient settings produce compact, nearly unbiased distributions across all pairs. The numeric-gradient setting shows clear variance inflation and pair-specific bias (most visible for $\rho_{31}$ and $\rho_{42}$), confirming that finite-difference noise degrades dependence recovery in this likelihood. \autoref{fig:violin-rho-d4} shows some bimodality for the estimates of correlation parameters when starting from empirical correlation. This is probably due to the fact that the algorithm gets sometimes "stuck" around the starting values. This observation is of importance and shows yet again the importance of carefully chosing the starting values. 

To conclude, across dimensions and regimes, analytic gradients deliver near-unbiased, low-variance estimates for both margins and dependence parameters, while numeric gradients systematically inflate dispersion and induce small pair-specific biases, with effects most visible in $d=4$ and at lower count intensities. The above results motivate our recommendation to use analytic gradients and an unconstrained reparameterization on the real line for both margins and dependence as well as Kendall's tau-based start (Option $1$).

\section{Discussion}

We introduced three rank-based strategies to initialize Gaussian-copula dependence parameters from Kendall's tau in the discrete, low-count regime, and embedded them within an exact maximum-likelihood (ML) pipeline. Although our simulations emphasize the exact tau construction (Option~1), the approximation in Option~3a emerges as an appealing alternative when count intensities are moderate to high, where the simulations have shown a smallest bias, hence providing negligible loss in accuracy and starting values as close as possible to the ML estimators. Option~2 is not directly usable for computational gains in our experiments, yet it sheds some light on discrete copula constructions and, in our view, warrants further theoretical study.

A principal empirical finding is that analytic score functions improve both stability and speed of the Newton-type optimizer. This is especially pronounced as the dimension grows, where finite differences incur a prohibitive cost proportional to the number of free correlation parameters $p=d(d-1)/2$ and amplify Monte-Carlo noise in rectangle probabilities. We therefore recommend analytic gradients as the default from $d \geq 4$ onward.

Although likelihood evaluation requires $d$-variate multivariate normal (MVN) rectangle probabilities, the inclusion-exclusion expansion is sparse for low counts. For an observation $\by^r = (\byrs)$, let $s(y)=\sum_{j=1}^d \mathbf{1}\{y_j^{(r)}>0\}$ denote the number of strictly positive entries. Then only $2^{s(y)}$ of the $2^d$ terms are nonzero, because any term involving $F_{Y_j}(y_j-1)=0$ vanishes when $y_j=0$. In low-intensity regimes $s(y)$ is typically much smaller than $d$, so the effective number of MVN calls per observation is drastically reduced. Coupled with an unconstrained angle-based parameterization of the correlation factor and analytic gradients, this yields practical scalability beyond the bivariate case. 

Our implementation supports arbitrary $d$. We documented detailed results for $d=2$ and $d=4$ and provided a run-time study illustrating how zero inflation stabilizes computational cost as $d$ increases.

\textbf{Statistical implications.}
Under correct specification, full ML preserves its classical guarantees (consistency, asymptotic normality, and efficiency), while continuous extension and pseudo/composite surrogates generally do not. The proposed tau-based initializers are not substitutes for ML but rather devices to reduce iterations and improve conditioning. In the regimes we considered, the empirical mapping from the copula parameter to our discrete Kendall's tau remains monotone and well behaved, enabling informative parameter starts. A formal characterization of monotonicity and local curvature for small means is an important direction for theory. Our simulations also highlight that methods developed for copulas with continuous margins can be substantially biased in the low-count setting, and should not be used.

\textbf{Computational complexity and scaling.}
Per iteration, the dominant cost is the evaluation of MVN rectangles and their derivatives. The linear algebra for the spherical-Cholesky map ($\lrep \mapsto \bp= \lrep \lrep^{\top}$) is $\mathcal{O}(d^3)$ but negligible relative to the cdf workload, and the reparameterization enforces positive-definiteness without constraints, improving line-search robustness. In $d=2$, numeric gradients can be competitive. Our simulations show analytical gradients take longer to compute. However, by $d=4$, analytic scores are clearly preferable in both accuracy and computational time.

\textbf{Guidance on initializer choice.}
The empirical correlation-based start provides clearly biased estimates in all dimensions. From $d \geq 4$, the tau-based Option~1 reduces time-to-convergence and slightly lowers RMSE, reflecting a better match to the likelihood's local curvature. Option~3a becomes particularly attractive as means rise into the moderate-count regime, where its approximation error fastly converges towards zero while its cost remains low.

\textbf{Limitations and diagnostics.}
Two limitations are worth stating explicitly. First, at extremely low intensities the information about dependence can be weak (many all-zero vectors), which inflates finite-sample variance. Hence, profiling the log-likelihood along selected dependence directions and reporting curvature diagnostics is advisable. Second, when dependence is very strong, the MVN probabilities become numerically delicate. In our scenarios, we simulated a mixture of moderate and strong correlations which yielded a convergence rate of $1$ across all simulations.

\textbf{Extensions.}
Several extensions follow naturally. First, if dependence is sparse or block-structured, one can impose a sparsity pattern in a modified Cholesky factor or adopt block-diagonal $\lrep$, thereby reducing both parameter count and the effective cdf dimension per block. Next, the same pipeline applies to zero-inflated Poisson (ZIP), hurdle, and negative binomial margins, with tau-based starts defined by the corresponding discrete tau. Analytical gradients can be adapted to accomodate different margins. Other techniques such as Automatic Differentiation (AD) could also be explored \citep{kristensenTMBAutomaticDifferentiation2016}. Observed-information (outer-product and Hessian) estimators are straightforward with analytic scores; sandwich corrections enable robustness checks against mild marginal misspecification. Finally, the initialization logic extends to other elliptical copulas and to vines, where pairwise tau-based starts are standard; our discrete-tau constructions can be applied directly to these settings.

To summarize, the combination of discrete-aware tau-based initialization, exact rectangle probabilities, and analytic scores provides a pragmatic route to retain the statistical benefits of ML while remaining computationally tractable for moderate and, in many applications, higher dimensions. Our results at $d=2$ and $d=4$ demonstrate near-unbiased estimation with low RMSE and materially improved convergence when tau-based starts are used in higher dimension; the same design principles scale as data sparsity increases and as structure in the correlation is exploited.


\begin{appendices}
\section{Proofs}
\label{app:proofs}

\subsection{Skellam distribution}
\label{app:skellam}
\begin{definition}[Skellam distribution, \citep{skellamFrequencyDistributionDifference1946}]
    \label{def:skellam}

Let $X, Y$ follow independent Poisson distributions with parameters $\lx, \ly$. Then, $D:= X-Y$ follows a Skellam distribution with parameters $\lx, \ly$, with the pmf given by

\begin{equation*}
    \label{eq:skellam-pmf}
    p\left(d ; \lx, \ly\right)=\pr{D=d}=e^{-\left(\lx+\ly\right)}\left(\frac{\lx}{\ly}\right)^{d / 2} \mathcal{I}_d\left(2 \sqrt{\lx \ly}\right),
\end{equation*}

where $\mathcal{I}_d$ is the modified Bessel function of the first kind of order $d$. In the special case when $X,Y$ are i.i.d., i.e. Poisson disributed with parameter $\lambda$, the Skellam distribution simplifies to

\begin{equation}
    \label{eq:skellam-pmf-i.i.d.}
    p\left(d ; \lambda\right)=\pr{D=d}=e^{-\left(2 \lambda\right)} \mathcal{I}_d\left(2 \lambda\right).
\end{equation}

For any integer $k$, its cdf is obtained by summing the pmf:

\begin{equation*}
    \label{eq:skellam-cdf}
    F(k;\lx,\ly) = \pr{D\le k} = \sum_{j=-\infty}^{k} e^{-(\lx+\ly)}\left(\frac{\lx}{\ly}\right)^{j/2} \mathcal{I}_{|j|}\Bigl(2\sqrt{\lx\ly}\Bigr).
\end{equation*}
\end{definition}

\subsection{Probability of univariate Poisson ties}
\label{app:pois-ties}
\begin{result}[Corrected Kendall's tau for a copula with Poisson marginals, $d=2$.]
    \label{app:pois-univ-tie}
    Let $X \sim \mathrm{Po} (\lambda)$ and $X_1, X_2$ be two independent copies from $\mathrm{Po}(\lambda)$. Then the probability of ties for $X$ is given by

    \begin{equation}
        \label{eq:pois-univ-tie}
        \begin{aligned}
            \pr{X_1=X_2} &= \sum_{x=0}^{\infty} \pr{X_1 = X_2 | X_2 =x} \pr{X_2=x} = \sum_{x=0}^{\infty} \pr{X_1=x} \pr{X_2=x} = \\
            &= \sum_{x=0}^{\infty} \left( \pr{X=x}\right)^2 = e^{-2 \lambda} \sum_{x=0}^{\infty} \left(\frac{\lambda^{2x}}{(x!)^2}\right) = e^{-2 \lambda} \mathcal{I}_0(2 \lambda),
        \end{aligned}
    \end{equation}

    where $\mathcal{I}_0(z)$ is the modified Bessel function of the first kind of order $0$. The result holds due to Equation $(9.1.10)$ in \citet{AbramowitzHandbookMathematical}. Indeed, by setting $\nu = 0$, we have that $\mathcal{I}_0 (z) = \sum_{k=0}^{\infty } \frac{ \left( - 1/4 z^2\right)^k}{k! \Gamma \left(k+1 \right)} = \sum_{k=0}^{\infty } \frac{ \left( - 1/4 z^2\right)^k}{ \left(k!\right)^2}$.
\end{result}

\subsection{Proof of \autoref{prop:tau-skellam}}
\label{app:tau-skellam}

To prove \autoref{prop:tau-skellam}, the following Lemmas will be useful and must be introduced first. 

\begin{lemma}
    \label{lem:conv-copula-discrete}
    Let $(X, Y)$ have Poisson margins $\mathrm{Po}\left(\lx\right), \mathrm{Po}\left(\ly\right)$ and consider a copula $C_\rho$ (not unique on $\ran\left(F_X\right) \times \ran\left(F_Y\right)$), where $\ran$ is the range. For two i.i.d. copies $\left(X_1, Y_1\right)$, $\left(X_2, Y_2\right)$ of $(X,Y)$, define the differences $D_X=X_2-X_1$ and $D_Y=Y_2-Y_1$. We have that

    $$
    D_X \sim \operatorname{Skellam}\left(\lx, \lx\right), \quad D_Y \sim \operatorname{Skellam}\left(\ly, \ly\right),
    $$

    and there exists a (unique on the grid) subcopula $\cs$ on $\ran\left(F_{D_X}\right) \times \ran\left(F_{D_Y}\right) \subset[0,1]^2$ such that

    $$
    H\left(d_X, d_Y; \rs, \lx, \ly\right)=\pr{D_X \leq d_X, D_Y \leq d_Y} =\cs\left(F_{D_X}\left(d_X\right), F_{D_Y}\left(d_Y\right)\right)
    $$

    for any $d_X$ and $d_Y$ in $\Z$.
\end{lemma}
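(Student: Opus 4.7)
For the first claim, the strategy is to reduce to the classical characterization of the Skellam law as the difference of two independent Poisson random variables (Definition~\ref{def:skellam}). Since $(X_1,Y_1)$ and $(X_2,Y_2)$ are independent copies of $(X,Y)$, marginalizing gives $X_1 \perp X_2$ with common law $\operatorname{Po}(\lambda_X)$, so that $D_X = X_2 - X_1 \sim \operatorname{Skellam}(\lambda_X,\lambda_X)$ directly. The analogous argument applied to the $Y$-coordinate yields $D_Y \sim \operatorname{Skellam}(\lambda_Y,\lambda_Y)$. The only subtle point worth emphasizing is that although $X_1$ and $Y_1$ may be dependent through $C_\rho$ within a single copy, the cross-copy independence $(X_1,Y_1) \perp (X_2,Y_2)$ is exactly what is required for the marginal laws of $D_X$ and $D_Y$, so the dependence structure inside each copy is irrelevant at this stage.

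For the second claim, my plan is to apply Sklar's theorem in its non-continuous form (Theorem~2.3.3 in \citet{nelsenIntroductionCopulas2006}). First, I would note that the pair $(D_X,D_Y)$ is measurable with respect to $((X_1,Y_1),(X_2,Y_2))$ and therefore has a well-defined joint distribution function
\begin{equation*}
H^*(d_x,d_y;\rho,\lambda_X,\lambda_Y) = \pr{D_X \leq d_x,\, D_Y \leq d_y}
\end{equation*}
on $\mathbb{Z}^2$, with Skellam marginals inherited from part (i). Sklar's theorem for discrete margins then guarantees the existence of a subcopula $C^*$ defined on the grid $\ran(F_{D_X}) \times \ran(F_{D_Y}) \subset [0,1]^2$ such that $H^* = C^*(F_{D_X},F_{D_Y})$ on this domain, and the theorem also provides uniqueness of $C^*$ on that grid. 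Since both $F_{D_X}$ and $F_{D_Y}$ have countable range on $\Z$, the subcopula is only determined on a countable subset of $[0,1]^2$ and admits many copula extensions, which is consistent with the downstream notation $\rho^*$ being the parameter of a chosen parametric extension rather than an intrinsic quantity.

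The main obstacle is not existence per se (which follows almost immediately), but rather making explicit the dependence of $H^*$ on the original copula $C_\rho$ in a way that supports the later statements in Proposition~\ref{prop:tau-skellam}. To this end I would derive the representation
\begin{equation*}
H^*(d_x,d_y;\rho,\lambda_X,\lambda_Y) = \exc{H(X_1+d_x,\,Y_1+d_y;\rho,\lambda_X,\lambda_Y)}{X_1,Y_1},
\end{equation*}
using the independence of the two copies together with the tower property, so that $H^*$ is a functional of $C_\rho$ and of the Poisson marginals. This makes transparent why, in general, the subcopula $C^*$ differs from $C_\rho$: the transformation $(X_1,Y_1,X_2,Y_2) \mapsto (D_X,D_Y)$ is neither coordinatewise monotone nor injective on the support, so copula invariance under strictly increasing transformations does not apply and a different grid-unique dependence object emerges.
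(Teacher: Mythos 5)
Your proposal is correct and follows essentially the same route as the paper: Skellam marginals from cross-copy independence plus closure of Poisson differences, and then Sklar's theorem for non-continuous margins to get the grid-unique subcopula (the paper additionally spells out the verification of Nelsen's subcopula axioms --- groundedness, 2-increasingness, boundary conditions --- where you delegate this to the cited theorem, which is legitimate). One small notational slip: your representation of $H^*$ should carry an outer (unconditional) expectation, i.e.\ $H^*(d_x,d_y)=\ex{H(X_1+d_x,\,Y_1+d_y)}$, since as written the conditional expectation given $(X_1,Y_1)$ is a random variable rather than the number on the left-hand side; the intent via the tower property is clear and this identity is indeed what the paper uses later in the proof of the proposition.
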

 
To prove Lemma~\ref{lem:conv-copula-discrete}, we will use the following definition from \citet{nelsenIntroductionCopulas2006}.

\begin{definition}[Definition $2.2.1$ in \citet{nelsenIntroductionCopulas2006}]
    \label{def:nelsen}
    A two-dimensional subcopula (or 2-subcopula, or briefly, a subcopula) is a function $\cs$ with the following properties:
    \begin{enumerate}
        \item Dom $\cs=S_1 \times S_2$, where $S_1$ and $S_2$ are subsets of I containing 0 and 1 ;
        \item $\cs$ is grounded and 2-increasing;
        \item For every $u$ in $S_1$ and every $v$ in $S_2$,
    \end{enumerate}

    $$
    \cs(u, 1)=u \text { and } \cs(1, v)=v .
    $$
\end{definition}

\begin{proof}[Proof of Lemma \ref{lem:conv-copula-discrete}]
    For the marginal laws of the differences $D_X$ and $D_Y$, note that $X_1$ and $X_2$ are i.i.d. $\mathrm{Po} (\lx)$, therefore $D_X = X_2 - X_1$ is $\text{Skellam}(\lx,\lx)$ with an analogous statement for $D_Y$. 

    To prove that there exist a subcopula, we will show that \autoref{def:nelsen} is satisfied. Namely, that the domain of a subcopula is a subset of $[0,1]^2$, that a subcopula is grounded and $2$-increasing, and that for every $u$ in the range of $D_X$ and $v$ in the range of $D_Y$, $C(u,1) = u$, $C(1,v) = v$.
    
    Since $D_X$ and $D_Y$ are obtained from two independent copies of a bivariate random vector $(X, Y)$ with discrete count-marginals and joint dependence given by a copula $C_\rho$, the convolution of the random variables can itself be decomposed via a copula. Let

$$
H\left(d_X, d_Y; \rs, \lx, \ly\right)=\pr{D_X \leq d_X, D_Y \leq d_Y},
$$

and denote its marginal cdf's by

$$
F_{D_X}\left(d_X\right)= \pr{D_X \leq d_X} \quad \text { and } \quad F_{D_Y}\left(d_Y\right)= \pr{D_Y \leq d_Y},
$$

then by Sklar's theorem there exists a function

$$
\cs:[0,1]^2 \rightarrow[0,1],
$$

such that

$$
H_{ }\left(d_X, d_Y; \rs, \lx, \ly\right)=\cs\left(F_{D_X}\left(d_X\right), F_{D_Y}\left(d_Y\right)\right).
$$

Define $\cs$ as

$$
\cs(u, v):=H_{ }\left(F_{D_X}^{-1}(u), F_{D_Y}^{-1}(v); \rs, \lx, \ly\right).
$$

Note that for every $(u, v)$ in $\dom (\cs), 0 \leq \cs(u, v) \leq 1$, so that $\ran (\cs)$ is also a subset of $[0,1]$, with $\dom$ being the domain.

Now we will show \autoref{def:nelsen} holds.

\begin{enumerate}
    \item Let

    $$
    S_1=\left\{F_{D_X}\left(d_X\right): d_X \in \mathbb{Z}\right\} \quad \text { and } \quad S_2=\left\{F_{D_Y}\left(d_Y\right): d_Y \in \mathbb{Z}\right\}
    $$
    
    so that $S_1$ and $S_2$ are the ranges of the marginal cdf's of $D_X$ and $D_Y$, respectively. Because every cdf $F$ satisfies
    
    $$
    \lim _{x \rightarrow-\infty} F(x)=0 \quad \text { and } \quad \lim _{x \rightarrow+\infty} F(x)=1,
    $$
    
    we have that $0,1 \in S_1$ and $0,1 \in S_2$.

    \item \begin{enumerate}[(a)]
        \item Groundedness. Since for any $d_X$ we have

        $$
        F_{D_Y}(-\infty)=0 \quad \Longrightarrow \quad H_{ }\left(d_X,-\infty; \rs, \lx, \ly \right)=0
        $$

        it follows that for any $u \in[0,1]$, with $d_X=F_{D_X}^{-1}(u)$,

        $$
        \cs(u, 0)=H_{ }\left(F_{D_X}^{-1}(u), F_{D_Y}^{-1}(0); \rs, \lx, \ly\right)=H_{ }\left(F_{D_X}^{-1}(u),-\infty; \rs, \lx, \ly\right)=0.
        $$

        A similar argument shows that $\cs(0, v)=0$ for all $v$.

        \item $2$-increasingness. A function $C$ is said to be 2-increasing if, for all $u_1 \leq u_2$ and $v_1 \leq v_2$ in $[0,1]$, the "volume"

        $$
        \Delta=C\left(u_2, v_2\right)-C\left(u_2, v_1\right)-C\left(u_1, v_2\right)+C\left(u_1, v_1\right)
        $$

        satisfies $\Delta \geq 0$ \citep[Definitions $2.1.1$ and $2.1.2$]{nelsenIntroductionCopulas2006}.
        Because $H_{ }$ is a joint cdf of $\left(D_X, D_Y\right)$, it is 2-increasing by definition. More precisely, if we set

        $$
        d_{X_1}=F_{D_X}^{-1}\left(u_1\right), \quad d_{X_2}=F_{D_X}^{-1}\left(u_2\right), \quad d_{Y_1}=F_{D_Y}^{-1}\left(v_1\right), \quad d_{Y_2}=F_{D_Y}^{-1}\left(v_2\right)
        $$

        then

        $$
        \Delta=H_{ }\left(d_{X_2}, d_{Y_2}; \rs, \lx, \ly\right)-H_{ }\left(d_{X_2}, d_{Y_1}; \rs, \lx, \ly\right)-H_{ }\left(d_{X_1}, d_{Y_2}; \rs, \lx, \ly\right)+H_{ }\left(d_{X_1}, d_{Y_1}; \rs, \lx, \ly\right) \geq 0.
        $$

        Thus, the function $\cs$ inherits the 2-increasing property from $H_{ }$.
    \end{enumerate}
    \item Note that
    
$$
F_{D_Y}(+\infty)=1 \quad \text { so that } \quad H_{ }\left(d_X,+\infty; \rs, \lx, \ly\right)=F_{D_X}\left(d_X\right).
$$

Thus, for any $u \in[0,1]$ with $d_X=F_{D_X}^{-1}(u)$ we have

$$
\cs(u, 1)=H_{ }\left(F_{D_X}^{-1}(u), F_{D_Y}^{-1}(1); \rs, \lx, \ly\right)=H_{ }\left(F_{D_X}^{-1}(u),+\infty; \rs, \lx, \ly\right)=F_{D_X}\left(F_{D_X}^{-1}(u)\right)=u.
$$

Similarly, $\cs(1, v)=v$ for all $v$.

\end{enumerate}

Using Definition $2.2.1$ from \citet{nelsenIntroductionCopulas2006}, we have therefore that $\cs$ is a subcopula on $\ran\left(F_{D_X}\right) \times \ran\left(F_{D_Y}\right) \subset[0,1]^2$  as it satisfies all the required properties.

\end{proof}

\begin{lemma}
    \label{lem:tie-terms}
    The tie-corrected Kendall's $\tau$ (as in \autoref{eq:nik-tau}) for a copula with discrete margins can be rewritten using the distributions of the differences of the margins. Specifically, for Poisson margins with $B_j(\lambda_j)= \pr{D_j=0} = e^{-2 \lambda_j} I_0\left(2 \lambda_j\right)$ for $j= \left\{X,Y \right\}$, the tie-corrected Kendall's $\tau_S$ with Skellam margins, built upon the Kendall's tau for a copula with Poisson margins, satisfies

    $$
    \tau_S(X, Y)=4 \pr{D_X \leq-1, D_Y \leq-1} - 1 + B_X(\lx) + B_Y(\ly) - \pr{D_X=0, D_Y=0}.
    $$
\end{lemma}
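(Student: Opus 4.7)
\textbf{Proof plan for Lemma~\ref{lem:tie-terms}.} The strategy is to expand the three pieces of \eqref{eq:nik-tau} and reinterpret each of them as an event-probability involving the difference pair $(D_X,D_Y)=(X_2-X_1,Y_2-Y_1)$. Since $C_\theta(F_X(x-1),F_Y(y-1))=H(x-1,y-1)$, the cross term
$$
4\sum_{x,y} h(x,y)\,C_\theta\bigl(F_X(x-1),F_Y(y-1)\bigr) \;=\; 4\,\bbE\!\left[H(X-1,Y-1)\right]
$$
where the expectation is with respect to $(X,Y)\sim H$. Taking an independent copy $(X_1,Y_1)$ and conditioning on the second copy, independence gives $\bbE[H(X_2-1,Y_2-1)] = \pr{X_1\le X_2-1,\,Y_1\le Y_2-1}=\pr{D_X\ge 1,\,D_Y\ge 1}$.

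The next step handles this asymmetry between $\{D_X \ge 1\}$ and $\{D_X \le -1\}$: because $(X_1,Y_1)$ and $(X_2,Y_2)$ are i.i.d., the joint law is invariant under swapping the indices, which maps $(D_X,D_Y)\mapsto(-D_X,-D_Y)$. Consequently $\pr{D_X\ge 1,\,D_Y\ge 1}=\pr{D_X\le -1,\,D_Y\le -1}$, giving the desired first term $4\pr{D_X\le -1,\,D_Y\le -1}$. For the quadratic piece, $\sum_{x,y} h(x,y)^2 = \pr{X_1=X_2,\,Y_1=Y_2}=\pr{D_X=0,\,D_Y=0}$. For the marginal tie terms, $\sum_x f_X(x)^2 = \pr{X_1=X_2} = \pr{D_X=0}=B_X(\lx)$ by Result~\ref{app:pois-univ-tie}, and analogously for $Y$. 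Collecting the four contributions and the trailing $-1$ yields exactly
$$
\tau_S(X,Y)=4\pr{D_X\le -1,\,D_Y\le -1}-\pr{D_X=0,\,D_Y=0}+B_X(\lx)+B_Y(\ly)-1.
$$

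The proof is largely a sequence of identifications, so there is no single hard analytic step; the only point requiring care is justifying the sign-flip symmetry $\pr{D_X\ge 1,\,D_Y\ge 1}=\pr{D_X\le -1,\,D_Y\le -1}$ from the exchangeability of the two i.i.d.\ copies, and, on a technical level, verifying that the double series $\sum_{x,y} h(x,y)H(x-1,y-1)$ may indeed be identified with $\bbE[H(X_2-1,Y_2-1)]$ (absolute convergence holds since all summands are nonnegative and bounded by $h(x,y)$, which sums to one). Everything else is a direct rewriting, and Poisson-specificity enters only through the closed form $B_j(\lambda_j)=\posq{j}$ already established in \eqref{eq:pois-ties}.
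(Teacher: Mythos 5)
Your proof is correct, but it takes a genuinely different route from the paper's. The paper starts from the population identity of Denuit and Lambert, $\tau_S = 4\,\pr{X_2<X_1,\,Y_2<Y_1} - 1 + \pr{X_2=X_1 \text{ or } Y_2=Y_1}$, and then simply expands the union by inclusion--exclusion and rewrites each event in terms of $(D_X,D_Y)$; it never touches the series in \autoref{eq:nik-tau} directly. You instead work from \autoref{eq:nik-tau} itself, identifying $\sum_{x,y}h(x,y)H(x-1,y-1)$ with $\pr{D_X\ge 1, D_Y\ge 1}$ via conditioning on the second copy, $\sum_{x,y}h(x,y)^2$ with $\pr{D_X=0,D_Y=0}$, and $\sum_x f_X(x)^2$ with $\pr{D_X=0}$, then using exchangeability of the two i.i.d.\ copies to convert $\pr{D_X\ge 1,D_Y\ge 1}$ into $\pr{D_X\le -1,D_Y\le -1}$. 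Your version is arguably more faithful to the lemma's phrasing, which explicitly anchors $\tau_S$ to \autoref{eq:nik-tau}: you prove the equivalence of that series with the difference-based form directly, whereas the paper's argument implicitly presupposes that \autoref{eq:nik-tau} and the Denuit--Lambert identity describe the same quantity. The price you pay is the extra sign-flip step, but you justify it correctly by index-swapping invariance, and your remark on absolute convergence of the double series (summands dominated by $h(x,y)$) is a small rigor bonus the paper omits. Both arguments are valid; yours is more self-contained, the paper's is shorter by outsourcing the key identity to a citation.
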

\begin{proof}[Proof of Lemma \ref{lem:tie-terms}]
    First, by applying the definition of Skellam pmf as in \autoref{eq:skellam-pmf-i.i.d.} and the definition of $B_j$ as in \autoref{eq:pois-ties},

    \begin{align*}
        B_j(\lambda_j) = \pr{D_j = 0} = \posq{j}.
    \end{align*}

    By using the identity as in \citet[Equation (4)]{denuitConstraintsConcordanceMeasures2005}, we have

    \begin{align*}
        &\tau_{S} (X,Y) = 4 \pr{X_2<X_1, Y_2<Y_1} - 1 + \pr{X_2=X_1 \quad \text { or } \quad Y_2=Y_1} = \\
        & = 4 \pr{X_2- X_1<0, Y_2-Y_1<0} - 1 + \pr{X_2-X_1 = 0} + \pr{Y_2-Y_1=0} - \pr{X_2-X_1=0, Y_2-Y_1=0} = \\
        & = 4 \pr{X_2- X_1 \leq -1, Y_2-Y_1 \leq -1} - 1 + \pr{X_2-X_1 = 0} + \pr{Y_2-Y_1=0} - \pr{X_2-X_1=0, Y_2-Y_1=0} = \\
        & = 4 \pr{D_X \leq -1, D_Y \leq -1} - 1 + \pr{D_X = 0} + \pr{D_Y=0} - \pr{D_X=0, D_Y=0} = \\
        & = 4 \pr{D_X \leq -1, D_Y \leq -1} - 1 + \posq{X} + \posq{Y} - \pr{D_X=0, D_Y=0}.
    \end{align*}
\end{proof}

\begin{lemma}
    \label{lem:gaus-cop-monot}
    Fix $a, b \in \mathbb{R}$. For the standard bivariate normal cdf $\Phi_2(a, b ; \rho)$, the map $\rho \mapsto \Phi_2(a, b ; \rho)$ is continuous and strictly increasing on $(-1,1)$.
\end{lemma}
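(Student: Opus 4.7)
The plan is to exploit the classical Plackett identity \citep{plackettReductionFormulaNormal1954}, which is already invoked earlier in the paper for deriving the analytical gradient. Writing $\varphi_2(a,b;\rho)$ for the standard bivariate normal density with correlation $\rho$, Plackett's formula gives
\begin{equation*}
\frac{\partial}{\partial \rho}\, \Phi_2(a,b;\rho) \;=\; \varphi_2(a,b;\rho) \;=\; \frac{1}{2\pi\sqrt{1-\rho^2}}\exp\!\left(-\frac{a^2 - 2\rho a b + b^2}{2(1-\rho^2)}\right),
\end{equation*}
valid for all $\rho \in (-1,1)$. For any fixed $(a,b)\in\mathbb{R}^2$ and any $\rho\in(-1,1)$, this derivative is strictly positive (the prefactor is finite and positive, and the exponential is always positive). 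Since the derivative in $\rho$ is continuous and strictly positive on $(-1,1)$, the fundamental theorem of calculus yields strict monotonicity of $\rho\mapsto \Phi_2(a,b;\rho)$ on $(-1,1)$, and differentiability implies continuity.

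First I would state Plackett's formula as a known fact, with a brief justification: it follows by differentiating under the integral sign in the representation $\Phi_2(a,b;\rho)=\int_{-\infty}^a\!\int_{-\infty}^b \varphi_2(x,y;\rho)\,dy\,dx$ and then using the identity $\partial \varphi_2/\partial \rho = \partial^2 \varphi_2/\partial x \partial y$ (which is verified by direct computation on the Gaussian density), so that two of the four integrations collapse and one is left with $\varphi_2(a,b;\rho)$. Next I would argue positivity by inspecting the closed form displayed above. Together these two facts give strict monotonicity.

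For continuity, I would note that once differentiability on $(-1,1)$ is established, continuity on $(-1,1)$ is automatic. If one wants a self-contained argument without Plackett's formula, an alternative is to apply the dominated convergence theorem to $\Phi_2(a,b;\rho_n) \to \Phi_2(a,b;\rho)$ along any sequence $\rho_n\to \rho \in (-1,1)$, using the fact that $\varphi_2(x,y;\rho_n)$ converges pointwise to $\varphi_2(x,y;\rho)$ and is uniformly bounded on compact $\rho$-neighborhoods away from $\pm 1$ by an integrable Gaussian majorant.

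The main obstacle, such as it is, is merely the bookkeeping needed to justify the exchange of differentiation and integration in the derivation of Plackett's identity; however, this is standard (Gaussian tails dominate any polynomial) and the identity itself is tabulated in the references already cited. Note that the statement is restricted to $\rho\in(-1,1)$: at the boundary $\rho=\pm 1$ the density degenerates, the Plackett identity fails, and monotonicity must be understood only in the weak sense via continuity of $\Phi_2$ extended to $[-1,1]$.
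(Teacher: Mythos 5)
Your proof is correct, and it is in substance the standard argument underlying the result the paper cites. The paper's own proof is a one-line appeal to Slepian's inequality (Theorem 5.1.7 in Tong's book), whereas you derive the bivariate case directly from Plackett's identity $\partial \Phi_2(a,b;\rho)/\partial\rho = \varphi_2(a,b;\rho)$, observe that the density is strictly positive on $(-1,1)$, and conclude strict monotonicity and continuity from differentiability. Since the bivariate Slepian inequality is itself typically proved exactly this way, the two routes coincide mathematically; what your version buys is self-containedness, and it fits naturally with the rest of the paper, which already invokes Plackett's reduction formula for the analytical gradient in Appendix~\ref{app:grad}. Your remarks on justifying differentiation under the integral sign (Gaussian domination) and on the degeneracy at $\rho=\pm1$ are accurate and do not affect the statement, which is confined to the open interval. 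No gaps.
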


\begin{proof}[Proof of Lemma \ref{lem:gaus-cop-monot}]
    This is a classical result that follows from Slepian's inequality, see Theorem $5.1.7$ in \citet{tongMultivariateNormalDistribution2012}.
\end{proof}

\begin{lemma}
    \label{lem:cor-discrete}
    Let $F_X, F_Y$ be marginal cdf's (continuous or discrete) with finite second moments, $\sigma_X = \sqrt{\var (X)}, \, \sigma_Y = \sqrt{\var (Y)}$ and define $g_X(z)=F_X^{-1}\{\Phi(z)\}, g_Y(z)=F_Y^{-1}\{\Phi(z)\}$, with $\Phi$ being the standard normal univariate cdf. For $\left(Z_1, Z_2\right) \sim N\left(0, \Sigma_{\rho}\right)$ with $\Sigma_\rho=\begin{psmallmatrix}1&\rho\\ \rho&1\end{psmallmatrix}$ and correlation $\rho \in (-1,1)$, set $X^{}=g_X\left(Z_1\right), Y^{}=g_Y\left(Z_2\right)$ and define

    $$
    \psi(\rho)=\cor\left(X^{}, Y^{}\right) .
    $$

    Then:

    \begin{enumerate}[(i)]
        \item $\psi$ admits the Hermite-Price power series $\psi(\rho)=\sum_{k \geq 1} b_k \rho^k$ with $b_k= \left\{a_k\left(F_X\right) a_k\left(F_Y\right)\right\} /\left(k ! \sigma_X \sigma_Y\right)$ and $a_k(F)=\int_{-\infty}^{\infty}F^{-1} (\Phi (t)) H_k(t) \varphi(t) d \,t$, with $H_k(t)$ the probabilists' Hermite polynomial of degree $k$ and $\varphi$ the standard normal pdf. 
        \item $\psi$ is real-analytic on $(-1,1)$, continuous on $[-1,1]$, strictly increasing, $\psi(0)=0$.
        \item $\psi(\rho) \equiv \rho$ iff both marginals are Gaussian up to affine transforms. Otherwise $\psi$ is not identity in general.
        \item $\lim _{\rho \rightarrow \pm 1} \psi(\rho)$ are the Fréchet bounds compatible with $F_X, F_Y$ and are attained by the Gaussian copula model.
    \end{enumerate}
\end{lemma}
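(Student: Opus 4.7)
The plan is to anchor the argument on two classical tools: Mehler's expansion of the bivariate standard normal density and the orthonormal Hermite basis of $L^2(\R, \varphi)$. For (i), I would start from Mehler's identity
\[
\varphi_2(z_1, z_2; \rho) = \varphi(z_1)\,\varphi(z_2) \sum_{k=0}^\infty \frac{\rho^k}{k!} H_k(z_1)\, H_k(z_2),
\]
and write $\ex{g_X(Z_1)\,g_Y(Z_2)}$ as the double integral of $g_X(z_1) g_Y(z_2)$ against $\varphi_2$. Since the finite-second-moment assumption places $g_X(Z_1), g_Y(Z_2)$ in $L^2(\R, \varphi)$, Parseval's identity in the Hermite basis gives $\sum_{k \geq 0} a_k(F)^2/k! = \ex{g(Z_1)^2} < \infty$, and Cauchy-Schwarz applied term-by-term justifies swapping sum and integral. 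Orthogonality $\int H_k H_l \varphi\, dt = k!\, \delta_{kl}$ collapses the double Hermite expansion into a single sum, yielding $\ex{X Y} = \sum_{k \geq 0} \rho^k a_k(F_X) a_k(F_Y)/k!$; extracting the $k = 0$ term as $\ex{X}\ex{Y}$ and dividing by $\sigma_X \sigma_Y$ produces the Hermite-Price expansion with $b_k = a_k(F_X)\,a_k(F_Y)/(k!\,\sigma_X \sigma_Y)$.

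For (ii), absolute convergence of the series on the \emph{closed} interval $[-1, 1]$ follows from a second Cauchy-Schwarz on the Hermite coefficients: since $\sum_{k \geq 1} a_k(F)^2/k! = \var(g(Z_1)) = \sigma^2$, one has $\sum_{k \geq 1} |b_k| \leq 1$. The Weierstrass $M$-test then delivers continuity on $[-1, 1]$, while termwise differentiation on compact subsets of $(-1, 1)$ gives real-analyticity there. The identity $\psi(0) = 0$ follows from independence of $Z_1, Z_2$ at $\rho = 0$. For strict monotonicity, I would combine Lemma \ref{lem:gaus-cop-monot} with Hoeffding's identity
\[
\cov(X, Y) = \int\!\!\int \bigl(\pr{X \leq x, Y \leq y} - F_X(x)\,F_Y(y)\bigr)\, dx\, dy,
\]
observing that $\pr{X \leq x, Y \leq y} = \Phi_2(\Phi^{-1}(F_X(x)), \Phi^{-1}(F_Y(y)); \rho)$ is strictly increasing in $\rho$ on a set of positive Lebesgue measure, and that strict monotonicity survives integration.

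For (iii), the forward direction is immediate: when both marginals are Gaussian, $g_X, g_Y$ are affine and $(X, Y)$ is bivariate normal, so $\cor(X, Y) = \rho$. For the converse, $\psi(\rho) \equiv \rho$ on $(-1, 1)$ forces $b_1 = 1$ and $b_k = 0$ for $k \geq 2$ by uniqueness of power-series coefficients. Writing $a_1(F) = \cov(g(Z_1), Z_1)$ and noting $|a_1(F)| \leq \sigma$ by Cauchy-Schwarz, the equality $a_1(F_X)\,a_1(F_Y) = \sigma_X \sigma_Y$ can only hold when each Cauchy-Schwarz is saturated, which forces both $g_X$ and $g_Y$ to be affine a.s.---equivalently, both marginals are Gaussian up to location-scale transforms.

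For (iv), absolute convergence of the series at $\rho = \pm 1$ established in (ii), combined with Abel's continuity theorem, gives $\lim_{\rho \to 1^-}\psi(\rho) = \sum_{k \geq 1} b_k = \cor(g_X(Z), g_Y(Z))$. Since $g_X, g_Y$ are non-decreasing quantile transforms of the same $Z$, $(g_X(Z), g_Y(Z))$ is the comonotone coupling of $F_X$ and $F_Y$, which attains the upper Hoeffding-Fréchet bound on correlation; the symmetric argument with $Z_2 = -Z_1$ delivers the countermonotone coupling and the lower Fréchet bound. The main obstacle I anticipate is the clean justification of termwise integration in step (i) together with the boundary convergence in step (iv): both ultimately reduce to the completeness of the Hermite basis in $L^2(\R, \varphi)$ under the finite-second-moment hypothesis, but the bookkeeping for dominated convergence and absolute summability needs to be done with some care, as does the Cauchy-Schwarz equality-case analysis in (iii).
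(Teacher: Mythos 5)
Your proposal is correct and follows the same overall skeleton as the paper's proof (Hermite--Price expansion, monotonicity of $\Phi_2$ in $\rho$, Cauchy--Schwarz saturation for the Gaussian characterization, comonotone/countermonotone limits), but several steps are carried out by genuinely different means. For (i), you derive the expansion from Mehler's formula plus Parseval's identity in the Hermite basis, whereas the paper simply invokes Lemma~2.1 of Han and Liu; your route is more self-contained and is essentially the proof hiding behind that citation. For the strict monotonicity in (ii), the paper represents $g_X,g_Y$ by Lebesgue--Stieltjes measures, applies Fubini, and differentiates under the integral via Plackett's identity to obtain $\psi'(\rho)=\iint \varphi_2(t,s;\rho)\,d\mu_X\,d\mu_Y\ge 0$; you instead combine Hoeffding's covariance identity with Slepian's inequality (the paper's Lemma on monotonicity of $\Phi_2$), avoiding differentiation altogether. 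Both are valid; the paper's version yields an explicit formula for $\psi'$, while yours is slightly lighter on regularity bookkeeping. In both treatments strictness requires a nondegeneracy caveat, which you correctly flag via ``positive Lebesgue measure.'' Your (iii) is the same Cauchy--Schwarz equality-case argument as the paper's, and is if anything stated more cleanly (the paper's phrase ``all $k\ge 2$ coefficients must vanish, forcing each quantile to be affine'' skips a beat that your $|a_1(F)|\le\sigma$ saturation argument fills in). For (iv), the paper merely cites the literature; your argument via absolute convergence of $\sum_k|b_k|\le 1$ on the closed interval (so Abel's theorem is not even needed --- the Weierstrass $M$-test already gives continuity up to the boundary) together with the identification of the $\rho=\pm1$ couplings as comonotone/countermonotone is a complete and more informative justification. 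One minor point of care you rightly anticipate: the interchange of sum and integral in (i) and the termwise evaluation at $\rho=\pm1$ both rest on $g_X(Z),g_Y(Z)\in L^2(\R,\varphi)$, which is exactly the finite-second-moment hypothesis, so no gap remains.
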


\begin{proof}[Proof of Lemma \ref{lem:cor-discrete}]
    Below we prove the points (i)-(iv).
    \begin{enumerate}[(i)]
        \item By their definition, $X^{} \sim F_X, Y^{} \sim F_Y$, and their joint distribution has Gaussian copula parameter $\rho$. Define for any cdf $F$,

        \begin{equation}
            \label{eq:ak}
            a_k(F) = \ex{g(Z) H_k(Z)} = \int_{-\infty}^{\infty}F^{-1} (\Phi (t)) H_k(t) \varphi(t) d \,t,
        \end{equation}

        with $H_k(t)$ the probabilists' Hermite polynomial of degree $k$. Then by Lemma $2.1$ in \citet{hanCorrelationStructureGaussian2016}, the covariance has the absolutely convergent Hermite series

        $$
        \cov\left(X^{}, Y^{}\right)=\sum_{k=1}^{\infty} \frac{a_k\left(F_X\right) a_k\left(F_Y\right)}{k!} \rho^k, \quad|\rho|<1 .
        $$

        Dividing by $\sigma_X \sigma_Y$, we obtain the map

        $$
        \psi(\rho) =\cor\left(X^{}, Y^{}\right)=\sum_{k=1}^{\infty} b_k \rho^k, \quad b_k=\frac{a_k\left(F_X\right) a_k\left(F_Y\right)}{k!\sigma_X \sigma_Y} .
        $$

        \item This representation immediately yields continuity and real-analyticity on $(-1,1)$. It is also easy to verify that $\psi(0) = 0$. The covariance depends on $\rho$ through $\ex{XY} = \ex{g_X(Z_1) g_Y(Z_2)}$. Since $g_X, g_Y$ are nondecreasing as they are quantiles, they admit a Lebesgue-Stieltjes measure $\mu_X, \mu_Y$ with $g_X(z)=g_X(-\infty)+\int \mathbf{1}\{t \leq z\} d \,\mu_X(t) = \int \mathbf{1}\{t \leq z\} d \,\mu_X(t)$ and likewise for $g_Y$. Then

    \begin{align*}
        \ex{g_X(Z_1) g_Y(Z_2)} &= \iint g_X(z_1) g_Y(z_2) d \,\Phi_2(z_1, z_2; \rho) = \\
        & = \iint \left( \iint \mathbf{1}\{t \leq z_1\} \mathbf{1}\{s \leq z_2\} d \,\mu_X(t)  d \,\mu_Y(s)\right) d \,\Phi_2(t, s; \rho) = \\
        & = \iint \left( \iint \mathbf{1}\{t \leq z_1\} \mathbf{1}\{s \leq z_2\} d \,\Phi_2(t, s; \rho) \right) d \,\mu_X(t)  d \,\mu_Y(s) = \\
        & = \iint \ex{\mathbf{1}\{t \leq z_1\} \mathbf{1}\{s \leq z_2\}} d \,\mu_X(t)  d \,\mu_Y(s) = \iint \pr{Z_1 \geq t, Z_2 \geq s} d \,\mu_X(t)  d \,\mu_Y(s) = \\
        & = \iint \left(1-\Phi(t) - \Phi(s) + \Phi_2(t, s; \rho)\right) d \,\mu_X(t)  d \,\mu_Y(s),
    \end{align*}

    by using the Lebesgue-Stieltjes representation and Fubini's theorem. We also have that 

    \begin{align*}
        \frac{\partial}{ \partial \rho} \ex{g_X(Z_1) g_Y(Z_2)} & = \frac{\partial}{ \partial \rho} \iint \left(1-\Phi(t) - \Phi(s) + \Phi_2(t, s; \rho)\right) d \,\mu_X(t)  d \,\mu_Y(s) = \\
        & = \iint \frac{\partial}{\partial \rho} \Phi_2(t, s; \rho) d \,\mu_X(t)  d \,\mu_Y(s) = \iint \varphi_2(t, s; \rho) d \,\mu_X(t)  d \,\mu_Y(s) \geq 0,
    \end{align*}

    by Plackett's identity \citep{plackettReductionFormulaNormal1954} and positivity of the Stieltjes measures. Thus $\psi^{\prime}(\rho) \geq 0$ on $(-1,1)$, with strict increase unless a margin is degenerate. 

    \item When both margins are Gaussian (up to affine transforms), only the $k=1$ term of $a_k(\Phi)$ survives, so $\psi(\rho)=\rho$. This follows immediately by plugging the Gaussian quantile maps into \autoref{eq:ak}. Indeed, in case of Gaussian margins with means $\mu_X, \mu_Y$ and variances $\sigma_X^2, \sigma_Y^2$, we have for each $\mu \in \left\{ \mu_X, \mu_Y\right\}, \sigma \in \left\{ \sigma_X, \sigma_Y\right\}$
    
    \begin{align*}
        g(z) & = \mu + \sigma z, \\
        a_k (\Phi) & = \ex{g(Z)H_k(Z)} = \ex{(\mu + \sigma Z)H_k(Z)} = \mu \ex{H_k(Z)} + \sigma \ex{Z H_k(Z)} = \sigma \ex{Z H_k(Z)}.
    \end{align*}
    
    First, we used a result from \citet{plucinskaPolynomialNormalDensities1998} which states $\ex{H_k(Z)}=0, k \ge 1$. Next, by using Lemma $3.6.5$ in \citet{CasellaStatisticalInference} (Stein's Lemma), $\ex{Z H_k(Z)} = \ex{H_k' (Z)}$. 
    
    Hence, for $k=1$ and on, we obtain
    
    \begin{align*}
        a_1(\Phi) &= \sigma \ex{H_1' (Z)} = \sigma \ex{1} = \sigma , \\
        a_2(\Phi) &= \sigma \ex{H_2' (Z)} = 2 \sigma \ex{Z}= 0, \\
        a_3(\Phi) &= \sigma \ex{H_3' (Z)} = \sigma \left( 3 \ex{Z^2} - 3 \right)  = 0, \\
        \vdots \\
    \end{align*}

    Hence,

    \begin{align*}
        \psi(\rho) &= b_1 \rho = \frac{a_1 (\Phi_X) a_1 (\Phi_Y)}{\sigma_X \sigma_Y}  \rho = \rho.
    \end{align*}
    
    Conversely, if $\psi(\rho) \equiv \rho$ on $(-1,1)$, all $k \geq 2$ coefficients must vanish, forcing each quantile to be affine (hence the margins should be Gaussian). Indeed, one can decompose

    \begin{align*}
        \psi(\rho) &= b_1 \rho + \sum_{k=2}^{\infty} b_k \rho^k.
    \end{align*}
    
    By Cauchy-Schwarz, $ \left|a_1 (F) \right| = \left|\ex{g(Z) Z} \right| = \left| \cov (g(Z), Z) \right| \le \sqrt{ \var \left( g(Z) \right) \var \left( Z\right)} = \sigma$, with equality if and only if $g$ is affine in $Z$, i.e. iff the margin is Gaussian up to an affine transform. Thus $\psi (\rho) \equiv \rho$ holds if and only if both $g_X$ and $g_Y$ are affine, i.e., both margins are Gaussian up to affine transforms. Furthermore, for arbitrary continuous margins, higher-order coefficients are nonzero and $\psi$ is not identity in general (see Remark on p. 61 of \citet{hanCorrelationStructureGaussian2016}). 

    \item Finally, in the bivariate case the Gaussian copula tends to the comonotone/countermonotone copula as $\rho \rightarrow \pm 1$ respectively, so the limits are the Fréchet-Hoeffding bounds and are achieved by the Gaussian-copula family \citep{hanCorrelationStructureGaussian2016}.
    \end{enumerate}

\end{proof}

Now we are ready to prove \autoref{prop:tau-skellam}.
\begin{proof}[Proof of \autoref{prop:tau-skellam}]
    Let $\hs$ denote the joint cdf of $(D_X, D_Y)$ and, for a chosen parametric copula family (e.g., Gaussian), write

    $$
    \hs\left(d_X, d_Y ; \rs, \lx, \ly\right)=\cs\left(F_{D_X}\left(d_X\right), F_{D_Y}\left(d_Y\right) ; \rs\right).
    $$

    By invoking Lemma \ref{lem:conv-copula-discrete}, $\cs$ is a subcopula. Define

    \begin{align*}
            & \as := 4 \hs\left(-1,-1 ; \rs, \lx, \ly\right) \\
            & \quad - \left[\hs\left(0,0 ; \rs, \lx, \ly\right)-\hs\left(-1,0 ; \rs, \lx, \ly\right)-\hs\left(0,-1 ; \rs, \lx, \ly\right)+\hs\left(-1,-1 ; \rs, \lx, \ly\right)\right].
    \end{align*}

    Under a Gaussian copula for $\left(D_X, D_Y\right)$,

    $$
    \hs\left(d_X, d_Y ; \rs, \lx, \ly\right)=\Phi_2\left(\Phi^{-1}\left(F_{D_X}\left(d_X\right)\right), \Phi^{-1}\left(F_{D_Y}\left(d_Y\right)\right) ; \rs\right),
    $$

    so each term in $\as$ is continuous and strictly increasing in $\rs$ by Lemma \ref{lem:gaus-cop-monot}, hence so is $\as$.

    Let

    $$
    A\left(\rho , \lx, \ly\right):=4 \ex{H(X-1, Y-1; \rho, \lx, \ly)} - \ex{h(X, Y; \rho, \lx, \ly)},
    $$

    with $H, h$ computed under the original Poisson margins and copula parameter $\rho$. Then $\tau(X, Y)=-1+B_X(\lx)+ B_Y(\ly)+A\left(\rho , \lx, \ly\right)$ as in \autoref{eq:nik-tau-ab} and by using Lemma \ref{lem:tie-terms}. Because $\as$ is continuous and strictly increasing and its image spans an interval containing the target value $A\left(\rho , \lx, \ly\right)$, the intermediate value theorem (e.g., \citet[Section $4.5$]{abbottUnderstandingAnalysis2015}) yields existence and uniqueness of $\rs$ such that

    $$
    \as=A\left(\rho , \lx, \ly\right), \quad \text { i.e. } \quad \tau(X, Y)=-1+B_X(\lx)+B_Y(\ly)+\as .
    $$

    Finally, by using Lemma \ref{lem:cor-discrete}, we write the induced Pearson correlations maps as

    $$
    \psi_{\mathrm{Po}}(\rho):=\cor(X, Y), \quad \psi_{\mathrm{Sk}}\left(\rs\right):=\cor\left(D_X, D_Y\right).
    $$

    A direct calculation using independence of copies gives

    $$
    \psi_{\mathrm{Sk}}\left(\rs\right) = \cor\left(D_X, D_Y\right)=\frac{\cov\left(X_2-X_1, Y_2-Y_1\right)}{\sqrt{2 \lx} \sqrt{2 \ly}}=\frac{\cov(X, Y)}{\sqrt{\lx \ly}}=\psi_{\mathrm{Po}}(\rho),
    $$

    so $\rs= \psi_{\mathrm{Sk}}^{-1} \left( \cor\left(D_X, D_Y\right)\right)  = \psi_{\mathrm{Sk}}^{-1}\left(\psi_{\mathrm{Po}}(\rho)\right)$. Because the difference map on the latent Gaussian scale is linear (and would preserve $\rho$), whereas the observed transformation $z \mapsto F_j^{-1}(\Phi(z))$ is non-linear and non-monotone for discrete margins, in general $\rs \neq \rho$.

    By using Lemma \ref{lem:cor-discrete}, we write 

    \begin{align*}
        &a_k(F)=\int_{-\infty}^{\infty} F^{-1}(\Phi(t)) H_k(t) \varphi(t) d \,t. \\     
    \end{align*}

Hence

    \begin{align*}
        \psi_{\mathrm{Po}}(\rho)=\cor(X, Y) &= \frac{1}{\sqrt{\lx \ly}} \sum_{k=1}^{\infty} \frac{a_k\left(F_X\right) a_k\left(F_Y\right)}{k!} \rho^k, \\
        \psi_{\mathrm{Sk}}(\rs)=\cor(D_X, D_Y) &= \frac{1}{2 \sqrt{\lx \ly}} \sum_{k=1}^{\infty} \frac{a_k\left(F_{D_X}\right) a_k\left(F_{D_Y}\right)}{k!} (\rs)^k.
    \end{align*}

    There is no closed form, but the series converge rapidly \citep{hanCorrelationStructureGaussian2016}). 

    Finally, $\rs$ is the solution to 

    \begin{equation}
        \sum_{k=1}^{\infty} \frac{a_k\left(F_X\right) a_k\left(F_Y\right)}{k!} \rho^k = \frac{1}{2 } \sum_{k=1}^{\infty} \frac{a_k\left(F_{D_X}\right) a_k\left(F_{D_Y}\right)}{k!} (\rs)^k,
    \end{equation}

    therefore, $\rs = \psi_{\mathrm{Sk}}^{-1}\left(\psi_{\mathrm{Po}}\left(\rho\right)\right)$. As $\psi_{Sk} \neq \psi_{\mathrm{Po}}$, we will generally have that $\rho \neq \rs $. However, since both maps converge to identity as $\lx, \ly \to \infty$, we have that $\rho - \rs \to 0$ in this limit.
\end{proof}

\subsection{Additional details on Option 2}
\label{app:option2}

To illustrate \autoref{prop:tau-skellam}, consider a grid of lambdas with $\boldsymbol{\lambda} = \{0.05, 0.1, 0.5, 1, 2, 3, 4, 5 \}$ and $\boldsymbol{\rho} = \{-0.8, -0.5, -0.2, 0.2, 0.5, 0.8 \}$ and define the loss function

$$
\mathcal{L} (\rs, \rho, \lx, \ly) := \left(\as- A(\rho, \lx, \ly) \right)^2.
$$

Then, consider

$$
\rs = \underset{\theta}{\argmin} \, \mathcal{L} (\theta, \rho, \lx, \ly).
$$

\begin{figure}[t!]
    \centering
    \includegraphics[width=\textwidth]{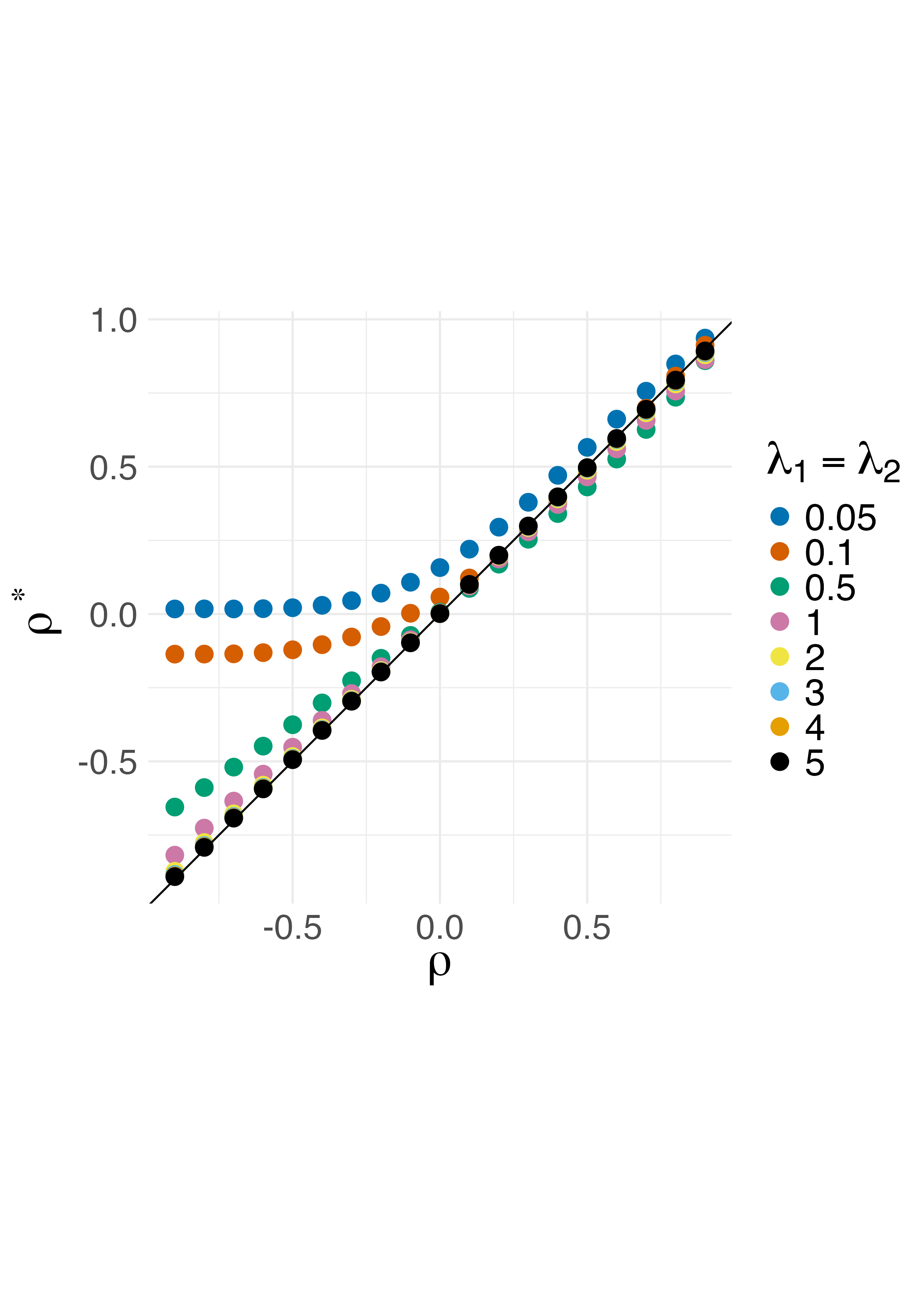}
    \caption{The relationship between $\rs$ and $\rho$ for different values of $\lx, \ly$.}
    \label{fig:rho_star}
\end{figure}
        
\autoref{fig:rho_star} shows the relationship between $\rs$ and $\rho$ for different values of lambda. We see that $\rs$ converges rather quickly to $\rho$, i.e. already when $\lx = \ly = 5$. However, the exact mapping $\psi_{Sk}$ is unknown.

\clearpage
\subsection{Additional details on Option 3}
\label{app:option3}

As a continuous benchmark, for a Gaussian copula with correlation $\rho$ and continous margins, it is know that $\tau_G(\rho) = \frac{2}{\pi} \arcsin (\rho)$. To build intuition, set $\lambda = \lx = \ly$ for visualization, compute $\tau_{Po}(\rho; \lambda, \lambda)$ from \autoref{eq:nik-tau} and plot it against a grid of values of $\rho \in [-1,1]$ in \autoref{fig:rhovstau}. The subscript $\tau_{Po}$ emphasizes the discrete (Poisson) margins and $\tau_G$ the continuous margins.

One can observe that the discrepancy $\Delta_{\lambda} (\rho):= \tau_{Po}(\rho; \lambda, \lambda) - \tau_G(\rho)$ varies monotonically with the magnitude of $\lambda$ on one hand while being more extreme in approaching the boundary cases (strong negative or positive dependence). Secondly, $\Delta_{\lambda} (\rho)$ is already small for $\lambda \approx 10$, making a Gaussian-limit approximation plausible. This is consistent with \citet{nikoloulopoulosCopulaBasedModelsMultivariate2013}. However, as our focus is the low-count regime, these differences $\Delta_{\lambda} (\rho)$ are non-negligeble and will matter in the simulations.

\begin{figure*}[t!]
    \centering
    \includegraphics[width=\textwidth]{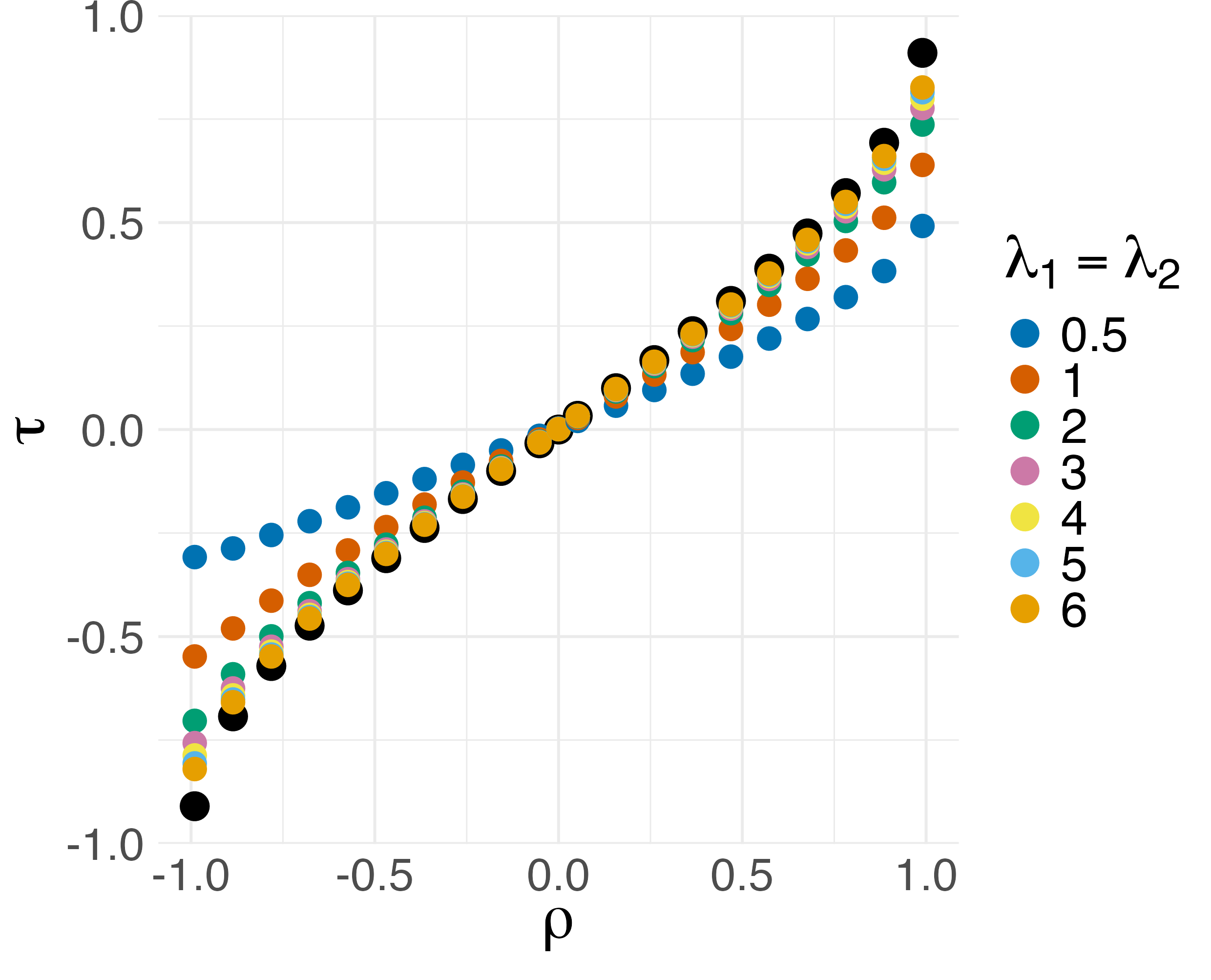}
    \caption{The relationship between $\frac{2}{ \pi} \arcsin (\rho)$ and corrected Kendall's tau.}
    \label{fig:rhovstau}
\end{figure*}

\clearpage
\subsection{Proof of \autoref{prop:tau-asin}}
\label{app:prop-tau-asin}

\begin{proof}[Proof of \autoref{prop:tau-asin}]
    Let 
    $$
    m(\rho, \lx, \ly) = a(\lx, \ly) \left( 2/ \pi \right) \arcsin \left(  b(\lx, \ly) \rho \right) + c(\lx, \ly).
    $$ 

    As $m(\rho, \lx, \ly)$ is an approximation to Kendall's tau, it must satisfy the same properties as the Kendall's tau itself, albeit corrected for discreteness (see Corollary 5.1.2 in \citet{nelsenIntroductionCopulas2006}), namely

    \begin{enumerate}[(i)]
        \item \label{p:1} $\lim_{\lx, \ly \to \infty} m(\rho, \lx, \ly) = \left( 2/ \pi \right) \arcsin \left(   \rho \right) \in [-1,1]$, 
        \item \label{p:2} $m(0, \lx, \ly) = 0$ $\forall \lx, \ly$,
        \item \label{p:3} $\lim_{\lx, \ly \to \infty} m(-\rho, \lx, \ly) = m(\rho, \lx, \ly)$,  
        \item \label{p:4} $m(\rho, \lx, \ly)$ is nondecreasing in rho $\forall \lx, \ly$.
    \end{enumerate}

    To satisfy the properties (\ref{p:1})-(\ref{p:4}) above, one must impose $a(\lx, \ly) \in (0,1]$ to satisfy properties (\ref{p:1}), (\ref{p:3}) and (\ref{p:4}). Furthermore, the domain of the arcsin function is $[-1,1]$, and since $\rho \in [-1,1]$ by construction, we must have $b(\lx, \ly) \in (0,1]$. Next, to fulfill property (\ref{p:2}) we must have that $m(0, \lx, \ly) = 0$, therefore imposing $c(\lx, \ly) = 0$. Hence, $m(\rho, \lx, \ly)$ reduces to 

    $$
    m(\rho, \lx, \ly) = a(\lx, \ly) \left( 2/ \pi \right) \arcsin \left(  b(\lx, \ly) \rho \right).
    $$ 

    Now, 
    
    \begin{align*}
        & \lim_{\lambda_j \to \infty} \posq{j} = \lim_{\lambda_j \to \infty} e^{-2 \lambda_j} \frac{e^{2 \lambda_j}}{ 2\sqrt{ \pi \lambda_j}} \left(1 + \frac{1}{16 \lambda_j} + \frac{9}{258 \lambda_j^2} + \cdots \right) = 0, \quad \text{(1)} \\
        & \lim_{\lx, \ly \to \infty} \pr{X_1=X_2, Y_1=Y_2} = \lim_{\lx, \ly \to \infty} \sum_{x=0}^{\infty} \sum_{y=0}^{\infty} h^2(x,y) = 0,
    \end{align*}

    where $(1)$ holds due to asymptotic expansion for large arguments of modified Bessel functions of first kind (equation $(9.7.1)$ in \citet{AbramowitzHandbookMathematical}).

    Hence, we must have that

    \begin{align*}
        &\lim_{\lx, \ly \to \infty} a(\lx, \ly) = \lim_{\lx, \ly \to \infty} b(\lx, \ly) = 1.
    \end{align*}
\end{proof}

\subsection{Gradient of the log-likelihood}
\label{app:grad}

\subsubsection{Gradient with respect to unconstrained correlation parameters}
\label{app:grad-cor}

In this section, we compute the log-likelihood gradient with respect to the set of unconstrained parameters $\bz$. It is convenient to work with Jacobian matrices. First, let $A \in \R^{d \times d}$ be a symmetric matrix, $\vect{A}$ denote the usual vec operator, $\vech{\cdot}$ the $d(d+1)/2$ vector that is obtained from $\vect{A}$ by eliminating all supradiagonal elemenents and $\vechs{\cdot}$ the strict half-vectorization operator that would only stack lower-triangular entries of $A$ without the main diagonal. Let $\bp = \lrep \lrep^{\top}$ with $\lrep$ lower triangular. Furthermore, let $\boldsymbol{K}_d$ be the $d^2 \times d^2$ commutation matrix, so that $\vect{\lrep^{\top}}=\boldsymbol{K}_d \vect{\lrep}$, $D_d$ the $d^2 \times d(d+1)/2$ duplication matrix such that $D_d \vech{\bp} = \vect{\bp}$ and $M_p$ a $p \times d(d+1)/2$ selection matrix such that $\vechs{\bp} = M_p \vech{\bp}$. Define $D_d^+ := \left(D_d^{\top} D_d\right)^{-1} D_d^{\top}$, the Moore-Penrose inverse of $D_d$ \citep{MatrixDifferentialCalculus}. 

The commutation matrix $K_d$ is an orthogonal permutation matrix that allows reordering of $\vect{\lrep}$ to obtain $\vect{\lrep^{\top}}$. It allows to commute two matrices of a Kronecker product. The duplication matrix $D_d$ allows to eliminate all the upper-diagonal elements of a square symmetric matrix. Finally, selection matrix allows to select specific elements of a matrix. In this case, it eliminates the main diagonal entries of a square symmetric matrix.
    
Let

\begin{align*}
    \boldsymbol{s}_{P} \left( \bg; \bys\right) &:= \frac{\partial \ell \left( \bg ; \bys\right)}{\partial \vechs{\bp}}
\end{align*}

denote the score function of the original log-likelihood (\autoref{eq:ll}) with respect to the vectorized correlation matrix $\br = \vechs{\bp}$. Then, by the chain rule,

$$
\frac{\partial \ell \left( \bps ; \bys\right)}{\partial \bz }= \boldsymbol{s}_{P} \left( \bg; \bys\right)^{\top} \left[\frac{\partial \vechs{\bp}}{\partial \bo }\right]^{}  \left[\frac{\partial \bo }{\partial \bz }\right]^{}.
$$

First, we have

\begin{align*}
    \boldsymbol{s}_{P} \left( \bg; \bys\right) &= \frac{\partial \ell \left( \bg ; \bys\right)}{\partial \vechs{\bp}} = \sum_{r=1}^n \frac{1}{h\left(\by^{(r)} ; \bg\right)} \frac{\partial h\left(\by^{(r)} ; \bg\right)}{\partial \vechs{\bp}} = \\
    & = \sum_{r=1}^n \frac{1}{h\left(\by^{(r)} ; \bg\right)} \frac{\partial }{\partial \vechs{\bp}} \sum_{t_1=0}^{1} \cdots \sum_{t_d=0}^{1} (-1)^{t_1 + \ldots + t_d}  H \left( \left( \by^{(r)} - \btt\right); \bg\right).                  
\end{align*}

Set $\ba:=\left(\Phi^{-1}\left(F_{Y_1}\left(y_1 -t_1\right)\right), \ldots, \Phi^{-1}\left(F_{Y_d}\left(y_d-t_d\right)\right)\right)$, $\boldsymbol{t} \in \{0,1\}^d$,  $b_m := \Phi^{-1}\left(F_{Y_m}\left(y_m -t_m\right)\right)$, $m = 1, \ldots, d$ and $\ba_S$ is vector $\ba$ without entries $b_i,b_j$. By using reduction formula in \citet{plackettReductionFormulaNormal1954}, we obtain

$$
\frac{\partial}{\partial \rho_{i j}} \Phi_d(\ba ; \bp)= \varphi_2\left(b_i, b_j ; \rho_{i j}\right) \bbP\left(\boldsymbol{Z}_{S} \leq \ba_S \mid Z_i=b_i, Z_j=b_j\right) = \varphi_2\left(b_i, b_j ; \rho_{i j}\right) \Phi_{d-2}\big( \ba_S;\, \boldsymbol{\mu}_{S|ij},\,\boldsymbol{\Sigma}_{S \mid i j}\big),
$$

with $\varphi_2$ being the bivariate normal density function and $\Phi_{d-2}$ is MVN cdf with dimension $d-2$. Then, for any $d$, 

$$
\frac{\partial}{\partial \rho_{i j}} H(\mathbf{y}-\mathbf{t}; \bg)= \varphi_2\left(b_i, b_j ; \rho_{i j}\right) \Phi_{d-2}\big( \ba_S;\, \boldsymbol{\mu}_{S|ij},\,\boldsymbol{\Sigma}_{S \mid i j}\big),
$$

and by using \autoref{eq:rect-prob}, we finally have

\begin{equation}
        \label{eq:grad-rho}
        \begin{aligned}
            \frac{\partial}{\partial \rho_{i j}} h(\mathbf{y} ; \bg) & = \sum_{t_1=0}^{1} \cdots \sum_{t_d=0}^{1} (-1)^{t_1 + \ldots + t_d}  \varphi_2\left(b_i, b_j ; \rho_{i j}\right) \Phi_{d-2}\big( \ba_S;\, \boldsymbol{\mu}_{S|ij},\,\boldsymbol{\Sigma}_{S \mid i j}\big).
        \end{aligned}
\end{equation}

The detailed computation of the next two terms is provided in \citet{lucchettiSphericalParametrisationCorrelation2024}. Below is a (brief) summary of the important steps and the computation of the final gradient of the log-likelihood. For $\bp=\lrep \lrep^{\top}$ with $\lrep$ lower-triangular and parameterized by angles $\left\{\omega_{i 1}, \ldots, \omega_{i, i-1}\right\}$ on row $i= 2, \ldots, d$, we have that

\begin{align*}
    \frac{\partial \vect{\bp}}{\partial \bo } & = \frac{\partial \vect{\lrep \lrep^{\top}}}{\partial \bo } = \left(\boldsymbol{I}_{d^2} + \boldsymbol{K}_d\right) \left(\boldsymbol{I}_d \otimes \lrep\right) \frac{\partial \vect{\lrep^{\top}}}{\partial \bo },
\end{align*}

where $\boldsymbol{I}$ is the identity matrix, and $\otimes$ denotes the Kronecker product. From the relationship between $\vechs{\bp}, \vech{\bp}$ and $\vect{\bp}$, we have

\begin{align*}
    \frac{\partial \vechs{\bp}}{\partial \bo } & = M_p \frac{\partial \vech{\bp}}{\partial \bo } = M_p D_d^+ \frac{\partial \vect{\bp}}{\partial \boldsymbol{\omega}} = M_p D_d^+ \left(\boldsymbol{I}_{d^2} + \boldsymbol{K}_d\right) \left(\boldsymbol{I}_d \otimes \lrep\right) \frac{\partial \vect{\lrep^{\top}}}{\partial \bo }.
\end{align*}

For each row $\boldsymbol{l}_i$ of $\lrep$, only the $\boldsymbol{\alpha}_i$ part is nonzero. Index the angles row-wise as $\mathcal{K}= \left\{ (i,m): 1 \le m < i \le d\right\}$. The authors compute $\partial \boldsymbol{\alpha}_i / \partial \omega_{i m}$ row-by-row recursively which gives

\begin{align*}
    \frac{\partial q_{i j}}{\partial \omega_{i m}} &= \1_{m<j} \cos \omega_{i m} \prod_{k=1, k \neq m}^{j-1} \sin \omega_{i k} = \1_{m<j} \frac{\cos \omega_{i m}}{\sin \omega_{i m}}  q_{i j} = \1_{m<j} \cot \omega_{i m}  q_{i j},
\end{align*}

\begin{align*}
    \frac{\partial \alpha_{i j}}{\partial \omega_{i m}} &= \begin{cases}
        \1_{m<j} \left( \frac{\partial q_{i j}}{\partial \omega_{i m}} \right) \cos \omega_{i j}   - \1_{m=j}q_{i (j+1)}, \quad j<i, \\
        \frac{\partial q_{i i}}{\partial \omega_{i m}} = q_{i i} \cot \omega_{i m}, \quad j=i.
    \end{cases} 
\end{align*}

For each $(i,m) \in \mathcal{K}$, collect these into $\mathbf{g}_{i m} \in \mathbb{R}^d$, whose first $i$ entries are $\partial \alpha_{i 1} / \partial \omega_{i m}, \ldots, \partial \alpha_{i i} / \partial \omega_{i m}$ and zeros afterward such that:

If $j = 1,\ldots, i$:

\begin{align*}
    \left(g_{i m}\right)_j &= \frac{\partial \alpha_{i j}}{\partial \omega_{i m}} = \begin{cases}
    -q_{i 2}, \quad & j=1,m=1,\\
    \1_{m<j} q_{i j}  \cot \omega_{i m} \cos \omega_{i j}   - \1_{m=j}q_{i (j+1)}, \quad & 2 \leq j \leq i-1, \\
    q_{i i} \cot \omega_{i m}, \quad & j=i,
\end{cases} 
\end{align*}

and $\left(g_{i m}\right)_j = 0$ if $j>i$. Since $\vect{\lrep^{\top}}$ stacks the rows of $\lrep$, the angle $\omega_{i m}$ only affects the block corresponding to row $i$ :

$$
\frac{\partial \vect{\lrep^{\top}}}{\partial \omega_{i m}}= \mathbf{e}_i \otimes \mathbf{g}_{i m} \in \R^{d^2},
$$

with $\mathbf{e}_i$ the $i$-th canonical vector in $\mathbb{R}^d$ and 

$$
\frac{\partial \vect{\lrep^{\top}}}{\partial \boldsymbol{\omega}} = \begin{bmatrix}
    \mathbf{e}_i \otimes \mathbf{g}_{i m}
\end{bmatrix}_{(i,m)\in \mathcal{K}}.
$$

Finally,

\begin{align*}
    \frac{\partial \omega_{i,j}}{\partial \zeta_{i j}} & = \pi \Lambda (\zeta_{i j}) \left\{1 - \Lambda (\zeta_{i j})\right\}, \quad \Lambda(x) = \frac{\exp(x)}{1 + \exp(x)}.
\end{align*}

Putting the pieces together, the scalar gradient component for a fixed $(i,m) \in \mathcal{K}$ is

\begin{equation}
    \label{eq:grad-zeta-final}
    \frac{\partial \ell \left( \bg ; \bys\right)}{\partial \zeta_{i m}} = \pi \Lambda\left(\zeta_{i m}\right)\left\{1-\Lambda\left(\zeta_{i m}\right)\right\} \boldsymbol{s}_{\bp} \left( \bg; \bys\right)^{\top} \left[M_p D_d^+ \left(\boldsymbol{I}_{d^2} + \boldsymbol{K}_d\right) \left(\boldsymbol{I}_d \otimes \lrep\right) \left[ \mathbf{e}_i \otimes \mathbf{g}_{i m}\right] \right] .
\end{equation}

\subsubsection{Gradient with respect to unconstrained Poisson means}

In this section we compute the gradient of the log-likelihood with respect to the rest of unconstrained parameters $\be$. For the Poisson means $\lambda_k = \exp (\eta_k)$, let

\begin{align*}
    \boldsymbol{s}_{\lambda_k} \left( \bg; \bys\right) &:= \frac{\partial \ell \left( \bg ; \bys\right)}{\partial \lambda_k}
\end{align*}

denote the score function of the original log-likelihood (\autoref{eq:ll}) with respect to $\lambda_k$. Then, by the chain rule,

$$
\frac{\partial \ell \left( \bps ; \bys\right)}{\partial \eta_k} = \frac{\partial \ell \left( \bg ; \bys\right)}{\partial \lambda_k} \frac{\partial \lambda_k}{\partial \eta_k}= \boldsymbol{s}_{\lambda_k} \left( \bg; \bys\right) \lambda_k.
$$

The first term is

\begin{align*}
    \boldsymbol{s}_{\lambda_k} \left( \bg; \bys\right) &= \frac{\partial \ell \left( \bg ; \bys\right)}{\partial \lambda_k}= \sum_{r=1}^n \frac{1}{h\left(\by^{(r)} ; \bg\right)} \frac{\partial h\left(\by^{(r)} ; \bg\right)}{\partial \lambda_k} = \\
    & = \sum_{r=1}^n \frac{1}{h\left(\by^{(r)} ; \bg\right)} \frac{\partial }{\partial \lambda_k} \sum_{t_1=0}^{1} \cdots \sum_{t_d=0}^{1} (-1)^{t_1 + \ldots + t_d}  H \left( \left(\by^{(r)} - \btt \right); \bg\right).
\end{align*}

By using reduction formula in \citet{plackettReductionFormulaNormal1954}, for a fixed corner only $b_k$-th entry of $\ba$ depends on $\lambda_k$, hence

$$
\frac{\partial}{\partial \lambda_{k}} H(\mathbf{y}^{(r)}-\mathbf{t}; \bg)
= \frac{\partial \Phi_d(\ba;\bp)}{\partial b_k}\cdot \frac{\partial b_k}{\partial \lambda_k}.
$$

With $\mathcal{T} =\{1,\dots,d\}\setminus\{k\}$, partition $\bp$ as $\bp_{\mathcal{T},k}$, with rows consisting of $\mathcal{T}$, column consisting of $k$, and $\bp_{\mathcal{T}\mathcal{T}}$, with rows and columns consisting of $\mathcal{T}$. Then the conditional MVN property \citep{tongMultivariateNormalDistribution2012} and \citet{plackettReductionFormulaNormal1954} yields

$$
\frac{\partial  \Phi_d( \ba; \bp)}{\partial b_k}
= \varphi(b_k)\; \Phi_{d-1}\big( \ba_{\mathcal{T}};\,\boldsymbol{\mu}_{\mathcal{T} \mid k},\,\boldsymbol{\Sigma}_{\mathcal{T} \mid k}\big),
$$

where $\boldsymbol{\mu}_{\mathcal{T} \mid k}=\bp_{\mathcal{T},k}b_k$ and $\boldsymbol{\Sigma}_{\mathcal{T} \mid k}=\bp_{\mathcal{T}\mathcal{T}}-\bp_{\mathcal{T},k}\bp_{\mathcal{T},k}^{\top}$. In addition, by the inverse function theorem,

$$
\frac{\partial b_k}{\partial \lambda_k}
= \frac{1}{\varphi(b_k)}\cdot \frac{\partial F_{Y_k}(y_k^{(r)} - t_k)}{\partial \lambda_k} = - \frac{f_{Y_k}(y_k^{(r)} - t_k)}{\varphi(b_k)},
$$

where $f_{Y_k}$ is the Poisson pmf. Since the joint pmf is given by inclusion-exclusion, and with the convention $F_{Y_j}(-1)=0$ (thus $b_j(-1)=-\infty$ for all $j= 1,\ldots,d$), and since $\boldsymbol{t}$ is discrete and the sum is finite, this finally gives

\begin{equation}
    \label{eq:grad-tau-final}
    \boldsymbol{s}_{\lambda_k} \left( \bg; \bys\right) = \sum_{r=1}^n \frac{1}{h\left(\by^{(r)} ; \bg\right)} \sum_{t_1=0}^{1} \cdots \sum_{t_d=0}^{1} (-1)^{t_1 + \ldots + t_d}  \left( - f_{Y_k}(y_k^{(r)} - t_k) \right) \; \PhiDm\!\Big(\ba_{\mathcal{T}}^{(r)};\,\boldsymbol{\mu}_{\mathcal{T} \mid k}^{(r)},\, \boldsymbol{\Sigma}_{\mathcal{T} \mid k}\Big).
\end{equation}

\section{Additional simulation results}
\label{app:sim-res}

\subsection{Simulation tables}
\label{app:sim-tab}

In this section, we add the additional tables for the simulation results for each setting and combination of parameters.

\begin{table}[h]
    \centering
    \begin{tabular}{|c|c|c|c|c|c|c|c|}
        \hline 
             Setting & Lambdas & Gradient &  Start &  Method &      $\hat{\lambda}_1$ &  $\hat{\lambda}_2$ &    $\hat{\rho}$ \\ \hline 
            $2.1$ & $2, 3$ & Analytic & Corr & Reparam & $0.063$ & $0.069$ & $0.041$ \\
            $2.2$ & $2, 3$ & Analytic & Option 1 & Reparam & $0.062$ & $0.066$ & $0.037$ \\
            $2.3$ & $2, 3$ & Numeric & Corr & Reparam & $0.060$ & $0.066$ & $0.037$ \\
            $2.4$ & $2, 3$ & Numeric & Corr & fitMvdc & $0.210$ & $0.324$ & $0.057$ \\
            $2.5$ & $2, 3$ & Numeric & Option 1 & Reparam & $0.060$ & $0.066$ & $0.037$ \\
            $2.6$ & $2, 3$ & Numeric & Option 1 & fitMvdc & $0.210$ & $0.324$ & $0.057$ \\
            $2.7$ & $0.5, 1$ & Analytic & Corr & Reparam & $0.041$ & $0.048$ & $0.054$ \\
            $2.8$ & $0.5, 1$ & Analytic & Option 1 & Reparam & $0.034$ & $0.037$ & $0.050$ \\
            $2.9$ & $0.5, 1$ & Numeric & Corr & Reparam & $0.035$ & $0.037$ & $0.050$ \\
            $2.10$ & $0.5, 1$ & Numeric & Corr & fitMvdc & $0.147$ & $0.278$ & $0.851$ \\
            $2.11$ & $0.5, 1$ & Numeric & Option 1 & Reparam & $0.035$ & $0.037$ & $0.050$ \\
            $2.12$ & $0.5, 1$ & Numeric & Option 1 & fitMvdc & $0.147$ & $0.278$ & $0.851$ \\
        \hline
    \end{tabular}
    \caption{RMSEs for $d=2$.}
    \label{tab:sim-res-d2}
\end{table}

\begin{table}[h]
    \centering
    \begin{tabular}{|c|c|c|c|c|c|c|}
        \hline 
        Setting & Gradient &  Start &  $\hat{\lambda}_1$ & $\hat{\lambda}_2$ & $\hat{\lambda}_3$ & $\hat{\lambda}_4$ \\ \hline 
        $4.1$ & Analytic & Corr & $0.039$ & $0.058$ & $0.090$ & $0.040$ \\
        $4.2$ & Analytic & Option 1 & $0.038$ & $0.055$ & $0.084$ & $0.040$ \\
        $4.3$ & Numeric & Option 1 & $0.039$ & $0.079$ & $0.144$ & $0.045$ \\
        \hline
    \end{tabular}
    \caption{RMSEs for $\hat{\boldsymbol{\lambda}}$, $d=4$.}
    \label{tab:sim-res-d4-lam}
\end{table}

\begin{table}[h]
    \centering
    \begin{tabular}{|c|c|c|c|c|c|c|c|c|}
        \hline 
            Setting & Gradient &  Start & $\hat{\rho}_{21}$ & $\hat{\rho}_{31}$ & $\hat{\rho}_{41}$ & $\hat{\rho}_{32}$ & $\hat{\rho}_{42}$ & $\hat{\rho}_{43}$ \\ \hline 
            $4.1$ & Analytic & Corr & $0.055$ & $0.050$ & $0.052$ & $0.049$ & $0.060$ & $0.050$ \\
            $4.2$ & Analytic & Option 1 & $0.045$ & $0.046$ & $0.042$ & $0.057$ & $0.036$ & $0.054$ \\
            $4.3$ & Numeric & Option 1 & $0.051$ & $0.073$ & $0.036$ & $0.069$ & $0.042$ & $0.075$ \\
        \hline
    \end{tabular}
    \caption{RMSEs for $\hat{\boldsymbol{\rho}}$, $d=4$.}
    \label{tab:sim-res-d4-rho}
\end{table}

\clearpage
\subsection{Simulation results plots}
\label{app:sim-res-plots}

Below are the plots for the simulation results under each setting. 

\subsubsection{Lambdas}

In this subsection, the results for the simulations are shown for the Poisson parameters $\lambda$ under each setting.

\begin{figure}[ht]
    \centering
    \includegraphics[width=0.75\textwidth]{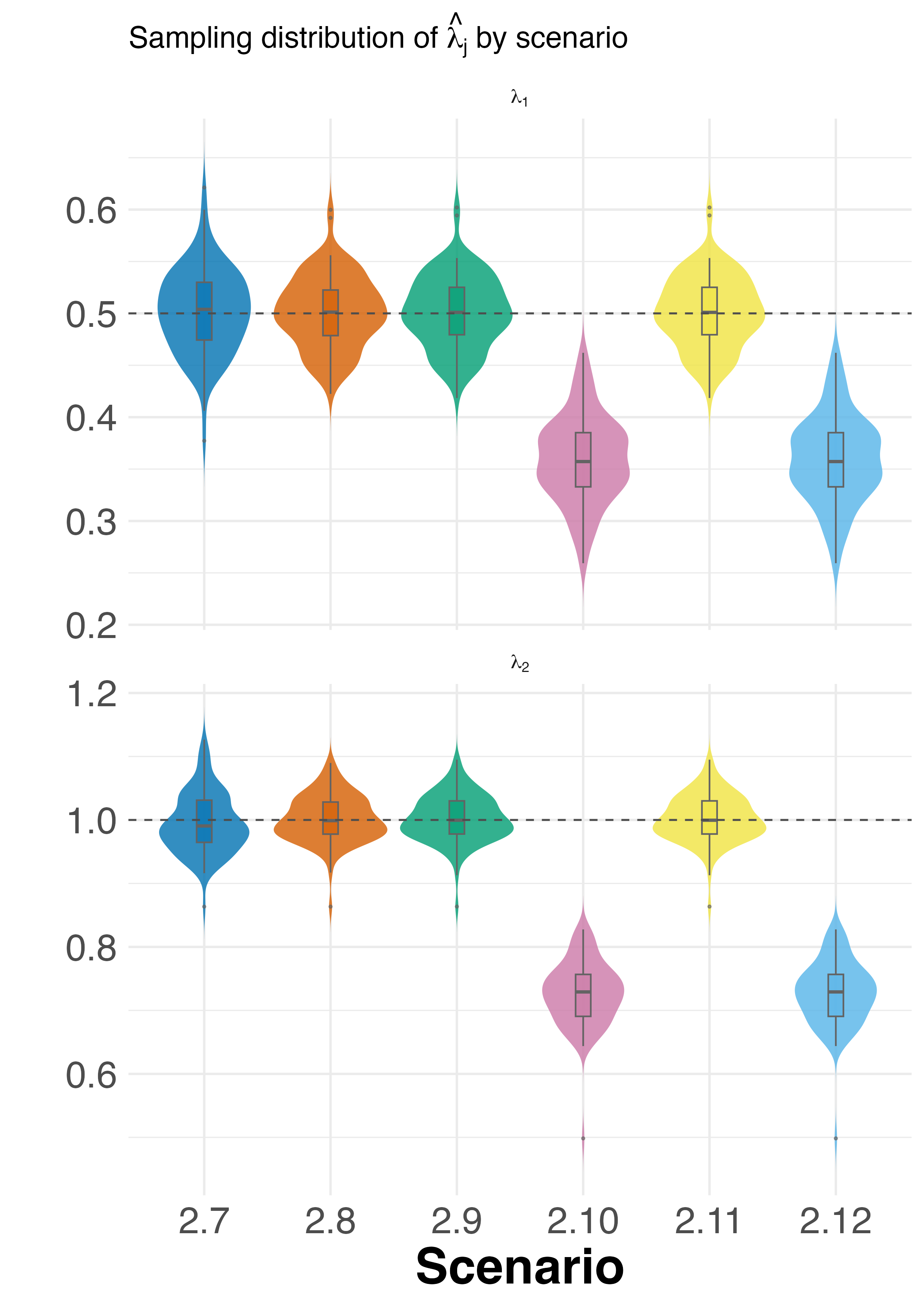}
    \caption{Violin plots for $\boldsymbol{\lambda} = (0.5,1)$ for different scenarios.}
    \label{fig:violin-lam-d2-small}
\end{figure}

\begin{figure}[ht]
    \centering
    \includegraphics[width=0.75\textwidth]{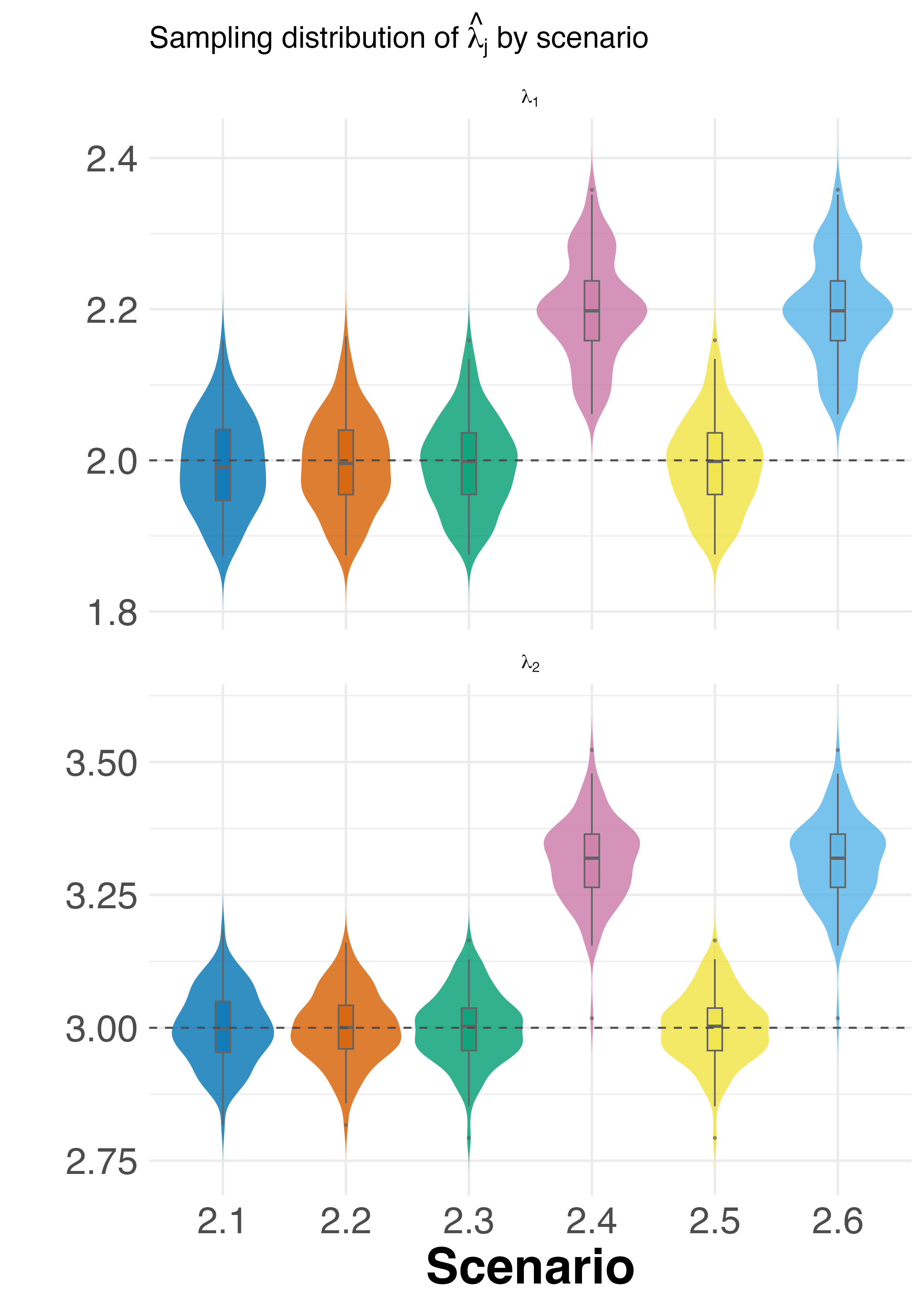}
    \caption{Violin plots for $\boldsymbol{\lambda} = (2,3)$ for different scenarios.}
    \label{fig:violin-lam-d2-moderate}
\end{figure}

\begin{figure}[ht]
    \centering
    \includegraphics[width=0.75\textwidth]{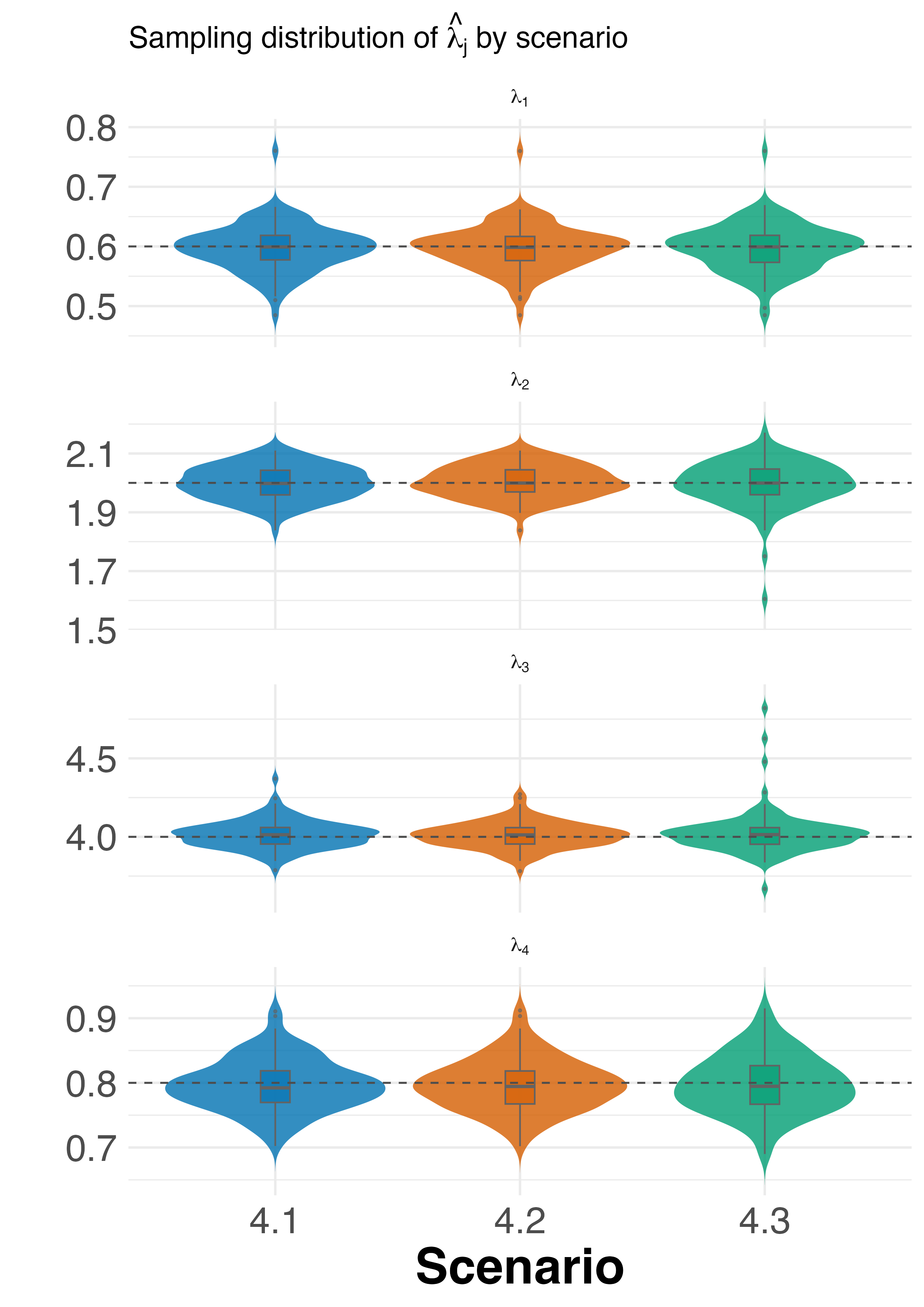}
    \caption{Violin plots for $\boldsymbol{\lambda}, d = 4$ for different scenarios.}
    \label{fig:violin-lam-d4}
\end{figure}

\clearpage
\subsubsection{Rhos}

In this subsection, the results for the simulations are shown for correlation parameters $\rho$ under each setting.

\begin{figure}[ht]
    \centering
    \includegraphics[width=0.75\textwidth]{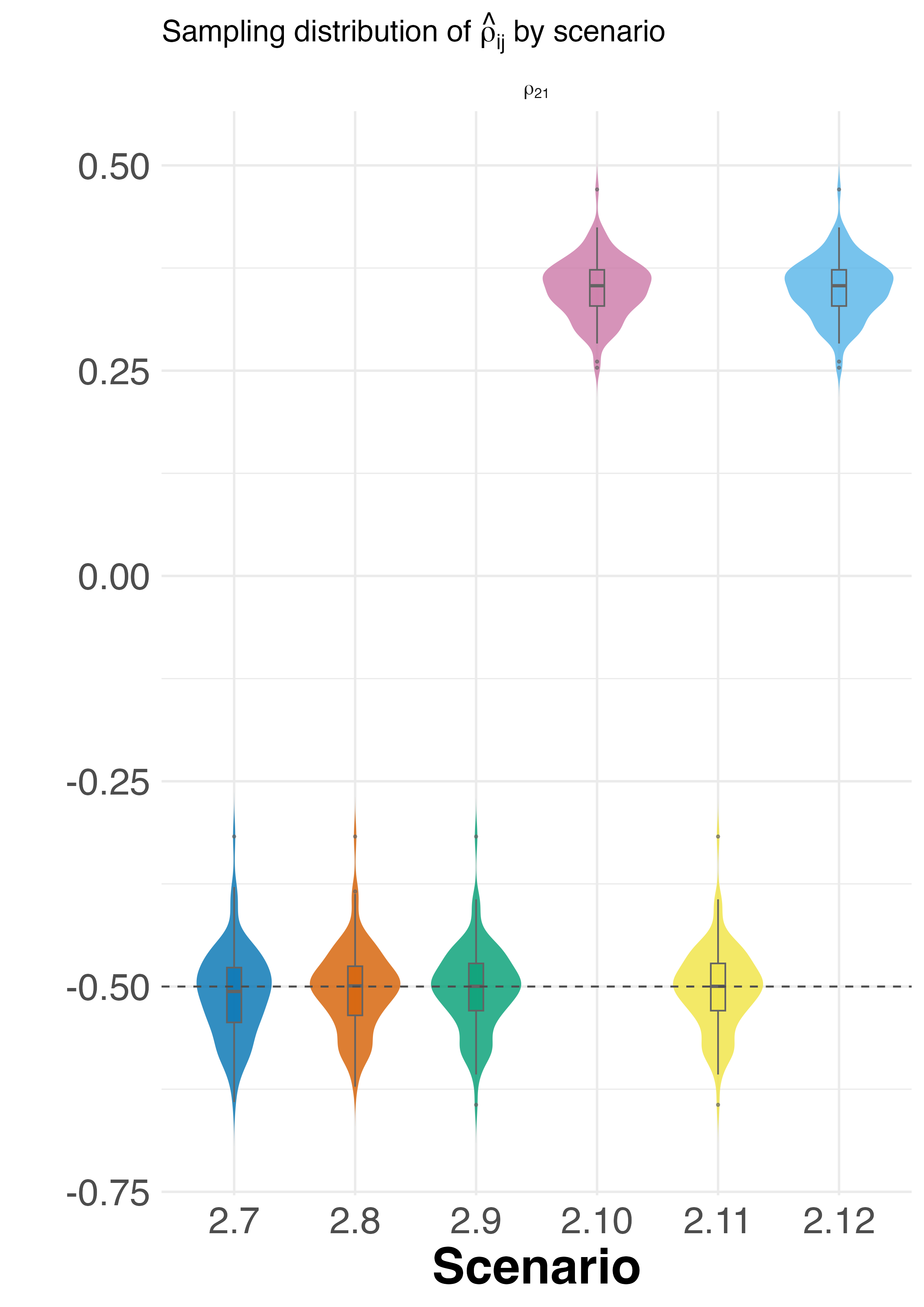}
    \caption{Violin plots for $\rho_{21}$ when $\boldsymbol{\lambda} = (0.5,1)$ for different scenarios.}
    \label{fig:violin-rho-d2-small}
\end{figure}

\begin{figure}[ht]
    \centering
    \includegraphics[width=0.75\textwidth]{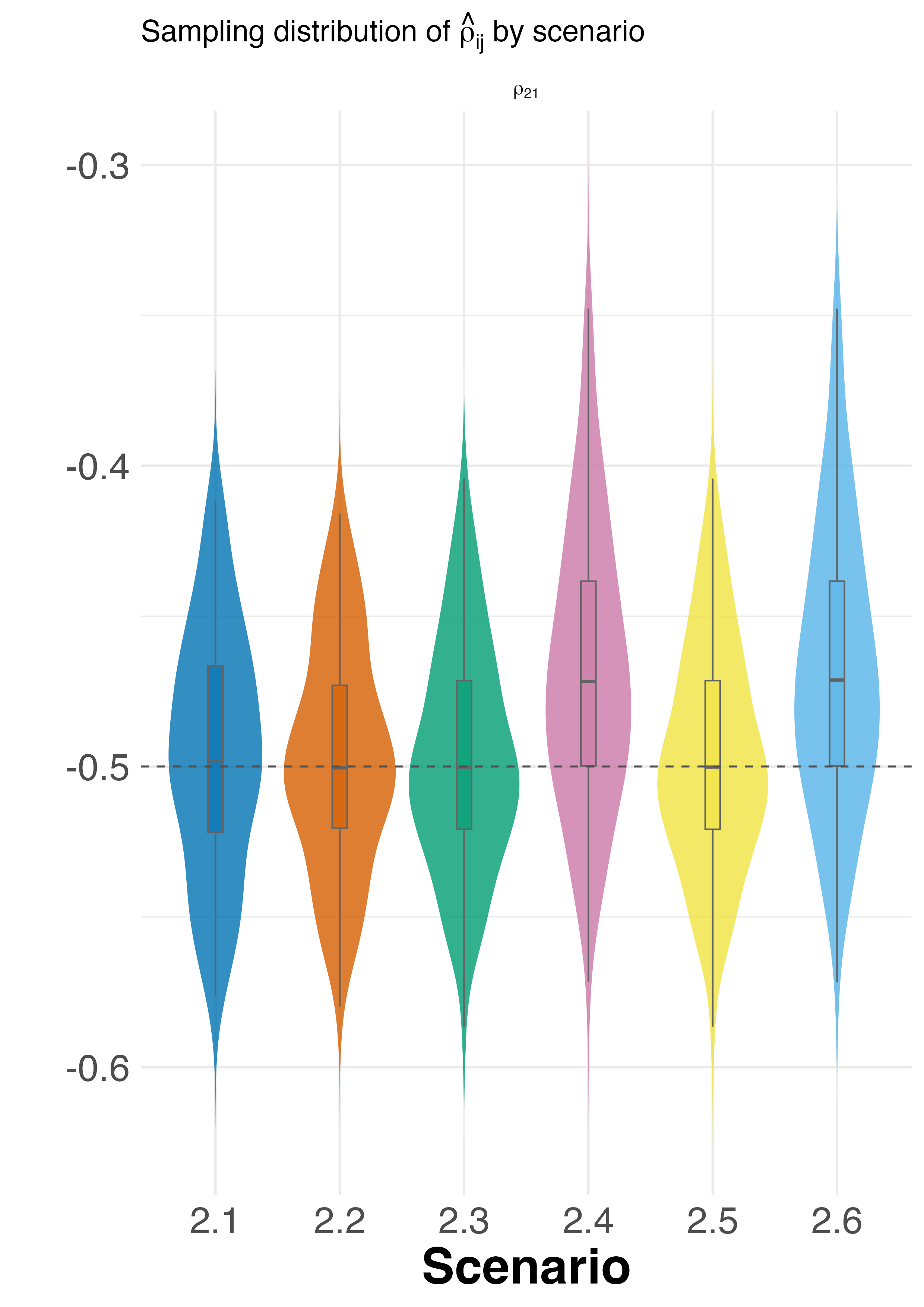}
    \caption{Violin plots for $\rho_{21}$ when $\boldsymbol{\lambda} = (2,3)$ for different scenarios.}
    \label{fig:violin-rho-d2-moderate}
\end{figure}

\begin{figure}[ht]
    \centering
    \includegraphics[width=0.75\textwidth]{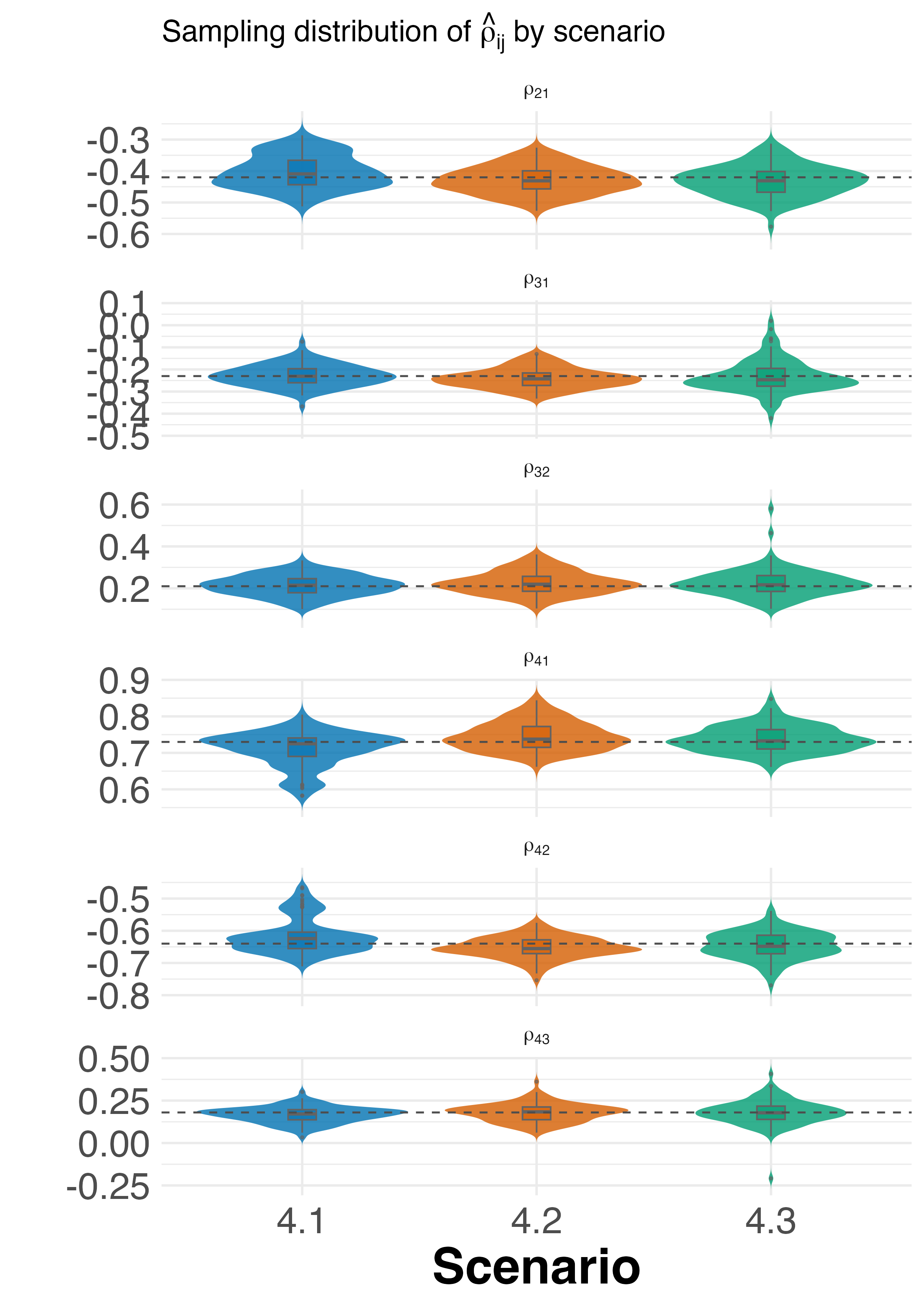}
    \caption{Violin plots for $\br, d = 4$ for different scenarios.}
    \label{fig:violin-rho-d4}
\end{figure}

\end{appendices}

\clearpage 
\newpage
\printbibliography
\end{document}